\newcommand{\fromto}[2]{[#1..#2]}
\renewcommand{\oneto}[1]{\fromto{1}{#1}}
\renewcommand{\zeroto}[1]{\fromto{0}{#1}}
\newcommand{\qinit}{q_{\mathit{init}}}
\newcommand{\langk}[2]{\calL_{#1} \left(#2\right)}
\newcommand{\Prj}[2]{\pi_{#2}(#1)}
\newcommand{\poly}{\text{poly}}
\newcommand{\sreu}{$\mathsf{SREU}$}
\newcommand{\sred}{$\mathsf{SRED}$}
\newcommand{\suppn}{$\mathsf{SUPPN}$}
\newcommand{\buc}{$\mathsf{BUC}$}
\newcommand{\bdc}{$\mathsf{BDC}$}
\newcommand{\Pred} [1]{ {}^{\bullet}#1}
\newcommand{\Succ} [1]{ #1^{\bullet}}
\newcommand{\tokencount}[1]{\mathit{tc}(#1)}
\newcommand{\fire}[1]{\lbrack #1 \rangle}
\newcommand {\TEnable}[3]
{{#1 \fire{#2} #3}}
\newcommand {\TJustEnable}[2]
{{#1 \fire{#2}}}
\newcommand{\move}[1] {\fire{#1}}
\newcommand{\rprun}{\mathit{run}}
\newcommand{\rptemp}{\mathit{temp}}
\newcommand{\rpstop}{\mathit{stop}}
\newcommand{\rta}{t_c}
\newcommand{\rtb}{t_b}
\newcommand{\rthelp}{t_a}
\newcommand{\firei}[2]{\fire{#1}_{#2}}
\newcommand {\Paths} {\mathit{Paths}}
\newcommand {\Words} {\mathit{Words}} 
\newcommand {\Basis} {\mathit{Basis}}
\newcommand {\SPath} {\mathit{SPath}}
\newcommand{\Inc} {\texttt{Inc}}
\newcommand{\Test}{\texttt{Test}}
\newcommand{\action}{\texttt{Action}}
\newcommand{\Acker} {\mathit{Acker}}
\newcommand{\pin} {\text{in}}
\newcommand{\pout} {\text{out}}
\newcommand{\pcopy} {\text{copy}}
\newcommand{\pstart} {\text{start}}
\newcommand{\pstop} {\text{stop}}
\newcommand{\pswap} {\text{swap}}
\newcommand{\ptmp} {\text{temp}}
\newcommand{\tcopy}{t_\mathit{copy}}
\newcommand{\tstart}{t_\mathit{start}}
\newcommand{\tstop}{t_\mathit{stop}}
\newcommand{\tswap}{t_\mathit{swap}}
\newcommand{\trestart}{t_\mathit{restart}}
\newcommand{\tin}{t_\mathit{in}}
\newcommand{\ttmp}{t_\mathit{temp}}
\newcommand{\causal}{\trianglelefteq}
\newcommand{\predec}[1]{\lfloor#1\rfloor}
\newcommand{\sre}{\mathit{sre}}
\newcommand{\product}{\mathit{p}}
\newcommand{\linof}[1]{\mathit{lin}(#1)}
\newcommand{\rightproduct}{\rtimes}
\newcommand{\deq}[1]{\leq^{#1}}
\newcommand{\TC}{\text{TC}}
\newcommand{\TE}{\text{TE}}
\newcommand{\Traces} [1]{\calT \left(#1\right) }
\title
{
    On the Upward/Downward Closures of Petri Nets%
    \footnote%
    {%
        The conference version of this paper has been published in the proceedings of the 42nd International Symposium on Mathematical Foundations of Computer Science, MFCS 2017 \cite{AMMS}.
    }%
}
\author[1]{Mohamed Faouzi Atig}
\author[2]
{%
    Roland Meyer%
    \footnote%
    {%
        A part of this work was carried out when the author was at Aalto University.
    }%
}%
\author[3]{Sebastian Muskalla}
\author[4]{Prakash Saivasan}
\affil[1]
{
    Uppsala University, Sweden\\
    \texttt{mohamed\_faouzi.atig@it.uu.se}
}
\affil[2]
{
    TU Braunschweig, Germany\\
    \texttt{roland.meyer@tu-braunschweig.de}
}
\affil[3]
{
    TU Braunschweig, Germany\\
    \texttt{s.muskalla@tu-braunschweig.de}
}
\affil[4]
{
    TU Braunschweig, Germany\\
    \texttt{p.saivasan@tu-braunschweig.de}
}
\authorrunning{M. F. Atig, R. Meyer, S. Muskalla, and P. Saivasan}
\subjclass{F.1.1 Models of Computation}
\keywords{Petri nets, BPP nets, downward closure, upward closure}
\begin{document}

\maketitle

\begin{abstract}
    We study the size and the complexity of computing finite state automata (FSA) representing and approximating the downward and the upward closure of Petri net languages with coverability as the acceptance condition. 
    We show how to construct an FSA recognizing the upward closure of a Petri net language in doubly-exponential time, and therefore the size is at most doubly exponential.
    For downward closures, we prove that the size of the minimal automata can be non-primitive recursive. 
    In the case of BPP nets, a well-known subclass of Petri nets, we show that an FSA accepting the downward/upward closure can be constructed in exponential time.  
    Furthermore, we consider the problem of checking whether a simple regular language is included in the downward/upward closure of a Petri net/BPP net language. 
    We show that this problem is $\EXPSPACE$-complete (\resp $\NPTIME$-complete) in the case of Petri nets (\resp BPP nets).
    Finally, we show that it is decidable whether a Petri net language is upward/downward closed.
    To this end, we prove that one can decide whether a given regular language is a subset of a Petri net coverability language.
\end{abstract}

\section{Introduction}

Petri nets are a popular model of concurrent systems~\cite{EN94}. 
Petri net languages (with different acceptance conditions) have been extensively studied during the last years, including deciding their emptiness
(which can be reduced to reachability) \cite{Mayr1984,Kosaraju82,Lambert1992,DBLP:conf/popl/Leroux11},
their regularity \cite{DBLP:journals/jcss/ValkV81,Demri2013},
their context-freeness \cite{DBLP:journals/tcs/Schwer92a,DBLP:conf/lics/LerouxPS13},
and many other decision problems (\eg \cite{HMW10,DBLP:conf/csl/AbdullaDB07,DBLP:journals/entcs/FinkelGRB05}). 
In this paper, we consider the class of Petri net languages with coverability as the acceptance condition (\ie the set of sequences of transition labels occurring in a computation reaching a marking greater than or equal to a given final marking).

We address the problem of computing the \emph{downward} and the \emph{upward} closure of Petri net languages. 
The downward closure of a language $\calL$, denoted by $\dc{\calL}$, is the set of all subwords, all words that can be obtained from words in $\calL$ by deleting letters.
The upward closure of $\calL$, denoted by $\uc{\calL}$, is the set of all superwords, all words that can be obtained from words in $\calL$ by inserting letters.   
It is well-known that, for any language, the downward and upward closure are regular and can be described by a \emph{simple regular expression (SRE)}.
However, such an expression is in general not computable, \eg for example, it is not possible to compute the downward closure of languages recognized by lossy channel systems \cite{Mayr03}.

In this paper, we first consider the problem of constructing a finite state automaton~(FSA) accepting the upward/downward closure of a Petri net language.
We give an algorithm that computes an FSA of doubly-exponential size for the upward closure in doubly-exponential time.
This is done by  showing that every minimal word results from a computation of length at most doubly exponential in the size of the input.
Our algorithm is also optimal since we present a  family of Petri net languages  for which the minimal finite state automata representing their upward closure are of doubly-exponential size.

Our second contribution is a family of Petri net languages for which the size of the minimal finite state automata representing the downward closure is non-primitive recursive.
To prove this, we resort to a construction due to Mayr and Meyer \cite{MayrMeyer1981}.
It gives a family of Petri nets whose language, albeit finite, is astronomically large:
It contains Ackermann many words.
The downward closure of Petri net languages has been shown to be effectively computable~\cite{HMW10}.
The algorithm is based on the Karp-Miller tree~\cite{km69}, which has non-primitive recursive~complexity. 

Furthermore, we consider the SRE inclusion problem which asks whether the language of a simple regular expression is included in the downward/upward closure of a Petri net language.
The idea behind SRE inclusion is to stratify the problem of computing the downward/upward closure:
Rather than having an algorithm computing all information about the language, we imagine to have an oracle (\eg an enumeration) making proposals for SREs that could be included in the downward/upward closure.
The task of the algorithm is merely to check whether a proposed inclusion holds.
We show that this problem is $\EXPSPACE$-complete in both cases.
In the case of upward closures, we prove that SRE inclusion boils down to checking whether the set of minimal words of the given SRE is included in the upward closure.
In the case of downward closures, we reduce the problem to the simultaneous unboundedness problem for Petri nets, which is $\EXPSPACE$-complete~\cite{Demri2013}.

We also study the problem of checking whether a Petri net language actually is 
upward or downward closed. 
This is interesting as it means that an automaton for the closure, which we can compute with the aforementioned methods, is a precise representation of the system's behavior.  
We show that the problem of being upward/downward closed is decidable for Petri nets. 
The result is a consequence of a more general decidability that we believe is of independent interest.  
We show that checking whether a regular language is included in a Petri net language (with coverability as the acceptance condition) is decidable.
Here, we rely on a decision procedure for trace inclusion due to Esparza et al.~\cite{Esparza99}.

Finally, we consider \emph{BPP\,\footnote{BPP stands for \emph{basic parallel processes}, a notion from process algebra.} nets} \cite{Esparza19972}, 
a subclass of Petri nets defined by a syntactic restriction: 
Every transition is allowed to consume at most one token in total.
We show that we can compute finite state automata accepting the upward and the downward closure of a BPP net language in exponential time.
The size of the FSA is also exponential. 
Our algorithms are optimal as we present a family of BPP net languages for which the minimal FSA representing their upward/downward closure have exponential size.
Furthermore, we consider the SRE inclusion problem. 
We show that, in the case of BPP nets, it is $\NPTIME$-complete for both, inclusion in the upward and in the downward closure.
To prove the upper bound, we reduce to the satisfiability problem for existential Presburger arithmetic (which is known to be $\NPTIME$-complete \cite{Scarpellini1983}). 
The hardness is by a reduction from $\mathsf{SAT}$ to the emptiness of BPP net languages, which in turn reduces to SRE inclusion.

\vspace*{1em}

\noindent
The following table summarizes our results.

\setlength\extrarowheight{4pt}

\begin{center}
    \begin{tabu}{l||X[0.6,c]|X[0.4,c]}
        & Petri nets & BPP nets
        \\
        \hline
        \hline
        Computing the upward closure & Doubly exponential$^*$ & Exponential$^*$ \\
        \hline
        Computing the downward closure & Non-primitive recursive$^*$ & Exponential$^*$ \\
        \hline
        SRE in downward closure & $\EXPSPACE$-complete & $\NPTIME$-complete  \\
        \hline
        SRE in upward closure & $\EXPSPACE$-complete & $\NPTIME$-complete \\
        \hline
        Being downward/upward closed &\multicolumn{2}{c}{Decidable} \\
        \hline
        Containing regular language&\multicolumn{2}{c}{Decidable} \\
    \end{tabu}
\end{center}

    ${}^* \colon$ Time for the construction and size of the resulting FSA, optimal.

\setlength\extrarowheight{0pt}

\paragraph*{Related Work}

Several constructions have been proposed in the literature to compute finite state automata recognizing the downward/upward closure.
In the case of Petri net languages (with various acceptance conditions including reachability), it has been shown that the downward closure is effectively computable \cite{HMW10}. 
With the results in this paper, the computation and the state complexity have to be non-primitive recursive.
For the languages generated by context-free grammars, effective computability of the downward closure is due to \cite{vanLeeuwen1978,DBLP:conf/mcu/GruberHK07,Courcelle91,BLS15}. 
For the languages recognized by one-counter automata,
a strict subclass of the context-free languages, it has been shown how to compute in polynomial time a finite state automaton accepting the downward/upward closure of the language \cite{DBLP:journals/corr/AtigCHKSZ16}. 
The effective computability of the downward closure has also been shown for stacked counter automata~\cite{DBLP:conf/stacs/Zetzsche15}.  
In \cite{Zetzsche15}, Zetzsche provides a characterization for a class of languages to have an effectively computable downward closure. 
It has been used to prove the effective computability of downward closures of higher-order pushdown automata and higher-order recursion schemes \cite{DBLP:conf/popl/HagueKO16,ClementePSW16}. 
The downward closure of the languages of lossy channel systems is not computable~\cite{Mayr03}. 

The computability results discussed above have been used to prove the decidability of verification problems and to develop approximation-based program analysis methods (see \eg \cite{DBLP:conf/concur/AtigBT08,DBLP:journals/corr/abs-1111-1011,DBLP:conf/fsttcs/AtigBKS14,DBLP:conf/concur/TorreMW15,LanguageRefinement, Zetzsche16}). 
Throughout the paper, we will give hints to applications in verification.


\section{Preliminaries}
\label{Section:Preliminaries}

In this section, we fix some basic definitions and notations that will be used throughout the paper. 
For every $i, j \in \N$, we use $\fromto{i}{j}$ to denote the set $\Set{k \in \N }{ i \leq k \leq j }$.

\paragraph*{Languages and Closures}

Let $\Sigma$ be a finite alphabet.
We use $\Sigma_\varepsilon$ to denote $\Sigma \cup \set{\varepsilon}$.
The length of a word $u$ over $\Sigma$ is denoted by $\card{u}$, where $\card{\varepsilon}=0$.
Let $k \in \N$ be a natural number, we use $\Sigma^k$ (\resp $\Sigma^{\leq k}$) to denote the set of all words of length equal (\resp smaller or equal) to $k$.
A language $\calL$ over $\Sigma$ is a (possibly infinite) set of finite words over $\Sigma$.

Let $\Gamma$ be a subset of $\Sigma$.
Given a word $u \in \Sigma^*$, we denote by $\Prj{u}{\Gamma}$ the projection of $u$ over $\Gamma$, \ie the word obtained from $u$ by erasing all the letters that are not in $\Gamma$.

The \emph{Parikh image} of a word~\cite{Parikh1966} counts the number of occurrences of all letters while forgetting about their positioning.
Formally, the function $\Parikh : \Sigma^* \mapsto \N^{\Sigma}$ takes a word $w\in\Sigma^*$ and gives the function $\Parikh(w):\Sigma\rightarrow\N$ defined by $(\Parikh(w))(a) =\card{\Prj{w}{\{a\}}}$ for all $a\in \Sigma$.

The  \emph{subword relation} $\subword\ \, \subseteq \Sigma^*\!\times\Sigma^*$ \cite{Higman52} between words is defined as follows: 
A word $u = a_1 \ldots a_n$ is a subword of $v$, denoted $u \subword v$, if $u$ can be obtained by deleting letters from $v$ or, equivalently, if $v = v_0 a_1 v_1 \ldots a_n v_{n}$ for some \mbox{$v_0, \ldots, v_{n} \in \Sigma^*$.}

Let $\calL$ be a language over $\Sigma$.
The \emph{upward closure} of $\calL$ consists of all words that have a subword in the language,
$\uc{\calL} = \Set{v\in\Sigma^*}{\exists u \in \calL \colon u \subword v}$.
The \emph{downward closure} of $\calL$ contains all words that are dominated by a word in the language, 
$\dc{\calL} = \Set{ u\in\Sigma^* }{\exists v \in \calL \colon u \subword v}$.
Higman showed that the subword relation is a well-quasi ordering~\cite{Higman52}, which means that every set of words $\calL \subseteq \Sigma^*$ has a finite \emph{basis}, a finite set of \emph{minimal elements} $v\in \calL$ such that $\nexists u\in \calL: u \neq v, u\subword v$.
With finite bases, $\uc{\calL}$ and $\dc{\calL}$ are guaranteed to be regular for every language $\calL\subseteq \Sigma^*$~\cite{Haines1969}.
Indeed, they can be expressed using  the subclass of simple regular languages defined by so-called \emph{simple regular expressions}~\cite{Bouajjani2004}. 

These SREs are choices among \emph{products} $\product$ that interleave single letters $a$ or $(a+\varepsilon)$ with iterations over letters from subsets $\Gamma\subseteq \Sigma$ of the alphabet:
\[
    \sre::=\product\bnf\sre+\sre
    \qquad
    \product::=a \bnf (a+\varepsilon)\bnf\Gamma^*\bnf\product.\product\
    \ .
\]
Note that this is an extension of the classical definition that we introduce so that we can also represent upward closures.
The syntactic size of an SRE $\sre$ is denoted by $\card{\sre}$ and defined as expected, every piece of syntax contributes to it.

\paragraph*{Finite State Automata}

A \emph{finite state automaton (FSA)} $A$ is a tuple $(\Sigma,Q,\rightarrow,\qinit, Q_f)$ where
$Q$ is a finite non-empty set of states,
$\Sigma$ is the finite input alphabet,
$\rightarrow\ \subseteq Q \times \Sigma_\varepsilon \times Q$ is the non-deterministic transition relation,
$\qinit \in Q$ is the initial state, and
$Q_f\subseteq Q$ is the set of final states. 
We represent a transition $(q,a,q') \in \, \rightarrow$ by $q \tow{a} q'$ and generalize the relation to words in the expected way.
The language of finite words accepted by $A$ is denoted by $\lang{A}$.
The size of $A$, denoted $\card{A}$, is defined by $\card{Q}+ \card{\Sigma}$.
An FSA is \emph{minimal} for its language $\lang{A}$ if there is no FSA $B$ with $\lang{A} = \lang{B}$ with a strictly smaller number of states.

\paragraph*{Petri Nets}

A \emph{(labeled) Petri net}  is a tuple \mbox{$N = (\Sigma,P,T,F,\lambda)$} \cite{Reisig1985}.
Here, $\Sigma$ is a finite alphabet, $P$ a finite set of \emph{places}, $T$ a finite set of \emph{transitions} with $P \cap T = \emptyset$, \mbox{$F: (P \times T) \cup (T \times P) \rightarrow \N$} a \emph{flow function}, and $\lambda: T \mapsto\Sigma_\varepsilon$ a labeling function.
When convenient, we will assume that the places are ordered, $P = \oneto{\ell}$ for some $\ell \in \N$.
For a place or transition $x \in P \cup T$, we define the \emph{preset} to consist of the elements that have an arc to $x$, $\Pred{x}~=~\Set{ y \in P \cup T}{ F(y,x) > 0 }$.
The \emph{postset} is defined similarly,
$\Succ{x}~=~\Set{ y \in P \cup T }{ F(x,y) > 0 }$.

To define the semantics of Petri nets, we use \emph{markings} $M: P \rightarrow \N$
that assign to each place a number of \emph{tokens}.
A marking $M$ \emph{enables} a transition $t$, denoted $\TJustEnable {M} {t}$, if \mbox{$M(p) \geq F(p,t)$} for all $p \in P$.
A transition $t$ that is enabled may be \emph{fired}, leading to the new marking $M'$ defined by $M'(p) = M(p) - F(p,t) + F(t,p)$ for all $p \in P$, \ie $t$~consumes $F(p,t)$ tokens and produces $F(t,p)$ tokens in $p$.
We write the firing relation as $\TEnable {M} {t} {M'}$.
A \emph{computation}
\(
    \pi = M_0 \move{t_1} M_1 \cdots \move{t_m} M_m
\)
consists of markings and transitions.
We extend the firing relation to transition sequences $\sigma\in T^*$ in the straightforward manner and also write $\pi = M_0\move{\sigma} M_m$. 
A marking $M$ is \emph{reachable} from an initial marking $M_0$ if $\TEnable {M_0}{\sigma} {M}$ for some $\sigma\in T^*$.
A marking $M$ covers another marking $M_f$, denoted $M \geq M_f$, if \mbox{$M(p) \geq M_f(p)$} for all $p \in P$.
A marking $M_f$ is \emph{coverable} from $M_0$ if there is a marking $M$ reachable from $M_0$ that covers $M_f$, $\TEnable{M_0}{\sigma}{M}\geq M_f$ for some $\sigma\in T^*$.

A \emph{Petri net instance} $(N,M_0,M_f)$ consists of a Petri net $N$ together with an initial marking $M_0$ and final marking $M_f$ for $N$.
Given a Petri net instance $(N,M_0,M_f)$, the associated \emph{covering language} is
\[
    \lang{N,M_0,M_f} = \Set{ \lambda(\sigma)}{\sigma \in T^*,\  \TEnable{M_0}{\sigma}{M}\geq M_f }
    \ ,
\]
where  the labeling function $\lambda$ is extended to sequences of transitions in the straightforward manner. Given a natural number $k \in \N$, we define
\[
    \langk{k}{N,M_0,M_f} = \Set{ \lambda(\sigma)}{\sigma \in T^{\leq k},\ \TEnable{M_0}{\sigma}{M}\geq M_f }
\]
to be the set of words accepted by computations of length at most $k$.

Let ${\it max}(F)$ denote the  maximum of the range of $F$.
The size of the Petri net $N$ is
\[
    |N|= |\Sigma|+ |P|\cdot |T| \cdot (1+ \lceil{\it log}_2(1+{\it max}(F))\rceil )
    \ ,
\]
Similarly, the size of a marking $M$ is
\[
    |M|=  |P| \cdot ( 1+ \lceil{\it log}_2(1+{\it max}(M))\rceil )
    \ ,
\]
where ${\it max}(M)$ denotes the maximum of the range of $M$.
The size of a Petri net instance $(N,M_0,M_f)$ is $\card{(N,M_0,M_f)} = \card{N} + \card{M_0} + \card{M_f}$.
This means we consider the the binary encoding of numbers occurring in markings and the flow function.
In contrast, we define the \emph{token count} $\tokencount{M} = \Sigma_{p\in P}M(p)$ of a marking $M$ to be the sum of all tokens assigned by $M$, \ie the size of the unary encoding of $M$.

A Petri net $N$ is said to be a \emph{BPP net} (or \emph{communication-free}) if every transition consumes at most one token from one place (\ie $\Sigma_{p\in P}F(p, t)\leq 1$ for every $t \in T$).


\section{Upward Closures}
\label{Section:UpwardClosure}

We consider the problem of constructing a finite state automaton accepting  the upward closure of a Petri net and a BPP net language, respectively.
The upward closure offers an over-approximation of the system behavior that is useful for verification purposes~\cite{LanguageRefinement}.

\begin{compproblem}
    \problemtitle{Computing the upward closure}
    \probleminput{A Petri net instance $(N,M_0,M_f$).}
    \problemquestion{An FSA $A$ with $\lang{A} = \uc{\lang{N,M_0,M_f}}$.}	
\end{compproblem}

\subsection{Petri Nets}

We prove a doubly-exponential upper bound on the size of the finite state automaton representing the upward closure of a  Petri net language. 
Then, we present a family of Petri net languages for which the minimal finite state automata representing their upward closure have a size doubly exponential in the size of the input.

\paragraph*{Upper Bound}

Fix the Petri net instance $(N,M_0,M_f)$ of interest and let $n$ be its size.

\begin{theorem}
\label{Theorem:UCCompPN}
    One can construct an FSA of size  $\bigO{2^{2^{\poly(n)}}}$ for $\uc{\lang{N,M_0,M_f}}$.
\end{theorem}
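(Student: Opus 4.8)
The plan is to exploit that the upward closure is fully determined by its finite basis: $\uc{\lang{N,M_0,M_f}}$ is the union of the sets $\uc{\{w\}}$ over all minimal words $w$, and the minimal words of the upward closure coincide with the $\subword$-minimal elements of $\lang{N,M_0,M_f}$ itself. Since each $\uc{\{a_1 \cdots a_k\}}$ is accepted by the $(k+1)$-state automaton for $\Sigma^* a_1 \Sigma^* \cdots a_k \Sigma^*$, one could in principle take a union over the basis. However, the number of minimal words (an antichain of bounded-length words) can itself be triply exponential, so I would not build one gadget per word. Instead, the key quantities to control are the length of the minimal words and the token counts reached while producing them.

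The technical heart is the following lemma: every minimal word $w$ is the label of a covering computation of length at most $K = 2^{2^{\poly(n)}}$, in which every intermediate marking stays below $B = 2^{2^{\poly(n)}}$ in every place. First I would establish an \emph{antichain property} from minimality: fix a covering computation $M_0 \move{\sigma} M \geq M_f$ with $\lambda(\sigma)=w$, and let $M^{(i)}$ be the marking reached right after the $i$-th visible transition. If $M^{(i)} \le M^{(j)}$ for some $i<j$, then by monotonicity of Petri nets the suffix fired from $M^{(j)}$ can also be fired from $M^{(i)}$, still reaching a cover of $M_f$; the resulting computation has label $a_1 \cdots a_i a_{j+1} \cdots a_k$, a proper subword of $w$ that still covers, contradicting minimality of $w$. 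Hence no earlier visible marking is dominated by a later one; in particular the $M^{(i)}$ are pairwise distinct. Once the markings are bounded by $B$, this antichain lives in $\zeroto{B}^{P}$, so the number of visible transitions, and therefore $\card{w}$, is at most $(B+1)^{\card{P}}$, which is doubly exponential.

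It remains to bound the marking values $B$ and the $\varepsilon$-segments between consecutive visible transitions, and this is where I expect the main obstacle. Here I would adapt Rackoff's coverability bound, which shows that a coverable marking is coverable by a computation whose length and intermediate token counts are at most doubly exponential. The subtlety is that Rackoff's shortening cuts out pumpable cycles, which in our setting could carry visible transitions and thereby alter the label $w$. To respect the subword structure I would observe that, by the same minimality argument as above, any cycle along a minimal computation that carries a visible transition would again yield a covering proper subword; hence all pumpable cycles can be taken to be $\varepsilon$-labelled and removed without changing $w$. Combining this with Rackoff's induction on the number of constrained places yields the bounds $K$ and $B$ simultaneously.

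Finally I would assemble the automaton. Take states to be the markings in $\zeroto{B}^{P}$, the initial state $M_0$, an edge $M \tow{\lambda(t)} M'$ whenever $\TEnable{M}{t}{M'}$ with $M,M' \le B$, and as final states all markings covering $M_f$. This FSA accepts exactly the covering labels realizable within $B$-bounded markings, whose set I denote $\calL_{\le B}$; by the lemma every minimal word lies in $\calL_{\le B}$, so $\uc{\calL_{\le B}} = \uc{\lang{N,M_0,M_f}}$. Adding a self-loop $M \tow{a} M$ for every state $M$ and every letter $a \in \Sigma$ turns the accepted language into its upward closure. The number of states is $(B+1)^{\card{P}} + \card{\Sigma} = 2^{2^{\poly(n)}}$, since $\card{P}$ is polynomial and $\log B$ is singly exponential, giving the claimed doubly-exponential bound.
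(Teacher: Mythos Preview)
Your high-level plan matches the paper: bound the length of computations producing minimal words, build a bounded-marking automaton, add self-loops. But two concrete errors block the argument as written. First, the monotonicity direction in your antichain step is reversed: from $M^{(i)}\le M^{(j)}$ you \emph{cannot} conclude that the suffix fired from $M^{(j)}$ is fireable from $M^{(i)}$; monotonicity goes the other way. What minimality actually rules out is $M^{(i)}\ge M^{(j)}$ for $i<j$, and this still forces the visible markings to be pairwise distinct, so your counting $(B+1)^{|P|}$ survives once the direction is fixed---but only after $B$ is already available, which makes the whole step circular.

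Second, and more seriously, your account of the Rackoff adaptation is too coarse. Rackoff's induction is not merely cycle removal: in his second case one splits where some coordinate first becomes large and \emph{replaces the entire suffix} by a short run obtained from the induction hypothesis on fewer coordinates. That replacement is a different transition sequence, not a deleted loop, and your ``cycles carrying visible letters contradict minimality'' argument says nothing about it. The paper's Proposition~\ref{rackoffproof} and Lemma~\ref{rackoffbound} set up the induction precisely so that this inductive replacement has label a subword of the original suffix; this is the real work, and the paper gives an explicit counterexample showing Rackoff's unmodified bound does not suffice. Once the adapted Rackoff bound on the \emph{total} computation length is in hand, the marking bound $B$ follows immediately and your antichain step becomes redundant---the paper proceeds directly from the length bound to the automaton of Lemma~\ref{kboundedlanguages}. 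Your final automaton (markings bounded by $B$ as states, self-loops on every letter) is then essentially correct and matches the paper's construction.
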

The remainder of the section is devoted to proving the theorem. 
We will show that every minimal word results from a computation of length at most $O(2^{2^{\poly(n)}})$. 
Let us call such computations the minimal ones. 
Let $k$ be a bound on the length of the minimal computations. This means the language $\langk{k}{N,M_0,M_f}$ contains all minimal words of $\lang{N,M_0,M_f}$. 
Furthermore, $\langk{k}{N,M_0,M_f} \subseteq \lang{N,M_0,M_f}$ and therefore the equality $\uc{\langk{k}{N,M_0,M_f}}=\uc{\lang{N,M_0,M_f}}$ holds. 
Now we can use the following lemma to construct a finite automaton whose size is $\bigO{2^{2^{\poly(\card{n})}}}$ and that accepts $\langk{k}{N,M_0,M_f}$. 
Without an increase in size, this automaton can be modified to accept $\uc{\langk{k}{N,M_0,M_f}}$:
Add for each state $q$ and for each symbol $a \in \Sigma$ a loop $q \tow{a} q$.

\begin{lemma}
\label{kboundedlanguages}
    For every $k \in \N$, one can construct an FSA of size  $\bigO{(k+2)^{\poly({n})}}$ that  accepts $\langk{k}{N,M_0,M_f}$.
\end{lemma}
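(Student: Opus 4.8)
The plan is to construct an FSA whose states record a reachable marking together with a counter for the number of transitions fired so far, where both the markings and the counter range only over the values that can occur in a computation of length at most $k$.

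First I would bound the token content of any place along such a computation. Starting from $M_0$, firing a transition $t$ increases the content of a place $p$ by at most $F(t,p) \leq \max(F)$, so after $i \leq k$ firings every place holds at most $\max(M_0) + i \cdot \max(F) \leq B$ tokens, where $B := \max(M_0) + k \cdot \max(F)$. Let $\mathcal{M}_B = \Set{M}{M(p) \leq B \text{ for all } p \in P}$ be the finite set of markings bounded by $B$; by the per-step estimate, every marking occurring in a length-$\leq k$ computation lies in $\mathcal{M}_B$.

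I would then define $A = (\Sigma, Q, \to, \qinit, Q_f)$ with state set $Q = \mathcal{M}_B \times \fromto{0}{k}$, initial state $\qinit = (M_0, 0)$, and final states $Q_f = \Set{(M,j) \in Q}{M \geq M_f}$. For every state $(M,j)$ with $j < k$ and every $t \in T$ with $\TJustEnable{M}{t}$ whose firing yields a marking $M'$ that again lies in $\mathcal{M}_B$, I add the edge $(M,j) \tow{\lambda(t)} (M', j+1)$; since $\lambda(t) \in \Sigma_\varepsilon$, this may be an $\varepsilon$-transition. Correctness is then a routine induction: runs of $A$ from $(M_0,0)$ to $(M,j)$ reading a label sequence $w$ correspond precisely to firing sequences $\sigma \in T^j$ with $\TEnable{M_0}{\sigma}{M}$ and $\lambda(\sigma) = w$, the counter enforcing $j \leq k$. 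The restriction $M' \in \mathcal{M}_B$ loses nothing, because any genuine length-$\leq k$ sequence visits only markings in $\mathcal{M}_B$ by the bound above. An accepting run ends in a state with $M \geq M_f$, so $\lang{A} = \langk{k}{N,M_0,M_f}$.

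It remains to check the size, and this arithmetic is the only delicate point. Since the numbers in $M_0$ and $F$ are encoded in binary we have $\max(M_0) \leq 2^n$ and $\max(F) \leq 2^n$, whence $B + 1 \leq (k+2) \cdot 2^{n+1}$. With $|P| \leq n$ this gives $|\mathcal{M}_B| = (B+1)^{|P|} \leq (k+2)^n \cdot 2^{(n+1)n}$. The key observation is that $k + 2 \geq 2$, so the factor $2^{(n+1)n}$ is bounded by $(k+2)^{(n+1)n}$; multiplying by the $(k+1)$ choices for the counter yields $|Q| \leq (k+2)^{(n+1)^2}$, and hence $\card{A} = |Q| + |\Sigma| = \bigO{(k+2)^{\poly(n)}}$. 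The main obstacle is therefore not conceptual but the bookkeeping: one must verify that the exponential factor $2^{\poly(n)}$ forced by the binary encoding of $B$ can be absorbed into the target bound $(k+2)^{\poly(n)}$, which is exactly what the estimate $k+2 \geq 2$ provides.
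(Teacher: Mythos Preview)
Your proof is correct and follows essentially the same approach as the paper: both construct the automaton whose states are pairs of a bounded marking and a step counter in $\fromto{0}{k}$, with transitions mirroring the net's firing relation, and both absorb the binary-encoding factor $2^{\poly(n)}$ into $(k+2)^{\poly(n)}$ via $k+2 \geq 2$. The constructions and size estimates are the same up to cosmetic differences in how the per-place bound $B$ is written.
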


\begin{proof}
    If there is a word $w \in \langk{k}{N,M_0,M_f}$, then there is a run of the form $M_0\move{\sigma}M'$ with $M' \geq M_f$ and $|\sigma| \leq k$.
    Any place $p$  can have at most $M_0(p)+ k \cdot 2^n$ tokens in $M'$.
    Note that $M_0(p) \leq 2^n$.
    With this observation, we construct the required finite state automaton $A = (\Sigma,Q,\to,\qinit,Q_f)$ as follows.
    
    The set of states is $Q = (P \to \zeroto{(k+1) \cdot 2^n}) \times \zeroto{k} $.
    The first component stores the token count for each place $p \in \oneto{\ell}$ (\ie a marking), the second component counts the number of steps that have been executed so far.
    For each transition $t \in T$ of the Petri net and each state $(M,i)$ with $M(p) \geq F(p,t)$ for all $p$ and $i < k$, there is a transition from $(M,i)$ to $(M',i+1)$ in $\to$, where $M \move{t} M'$.
    It is labeled by $\lambda(t)$.
    The initial state is $\qinit = (M_0,0)$, and a state $(M',i)$ is final if $M'$ covers $M_f$.
    By the construction of the automaton, it is clear that $(M_0,0) \tow{w} (M',j)$ with $(M',j)$ final iff there is a $\sigma \in T^{\leq k}$ such that $M_0\move{\sigma} M'$ with $M' \geq M_f$.
    Hence we have $\lang{A} = \langk{k}{N,M_0,M_f}$.
    We estimate the size of $Q$ by
    \begin{align*}
        |Q|&= |(P \to \zeroto{(k+1) \cdot 2^n})| \cdot |\zeroto{k}|\\
        &= ((k+1) \cdot 2^n)^\ell \cdot (k+1)
        \leq  ((k+1) \cdot 2^n)^n \cdot (k+1)
        = (k+1)^{n+1} \cdot 2^{n^2}\\
        &\leq (k+2)^{n+1} \cdot (k+2)^{n^2}
        \leq (k+2)^{n+1 + n^2} \in \bigO{(k+2)^{\poly(n)}}
        \ .
    \end{align*}
    Since we can assume $|\Sigma| \leq n$, we have that $|A|=|Q| + |\Sigma| $ is in $\bigO{(k+2)^{\poly(n)}}$.
\end{proof}
It remains to show that every minimal word results from a computation of length at most doubly exponential in the size of the input. 
This is the following proposition.

\begin{proposition}
    \label{rackoffproof}
    For every computation $M_0\move{\sigma}M\geq M_f$, there is $M_0\move{\sigma'}M'\geq M_f$ with $\lambda(\sigma') \subword \lambda(\sigma)$ and $\card{\sigma'} \leq 2^{2^{cn\log n}}$, where $c$ is a constant.
\end{proposition}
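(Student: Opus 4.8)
The plan is to adapt Rackoff's doubly-exponential bound for the coverability problem to the labeled setting, the only genuine twist being that we must preserve the subword relation $\lambda(\sigma') \subword \lambda(\sigma)$ on the labels, not merely bound the length. I would treat markings as vectors over the ordered place set $P = \oneto{\ell}$, and, following Rackoff, relax non-negativity: for $0 \le i \le \ell$, call a computation an \emph{$i$-covering} from a marking $M$ if it keeps the first $i$ coordinates non-negative throughout and reaches a marking covering $M_f$ on those first $i$ coordinates, while the remaining coordinates may take arbitrary integer values (possibly negative) with no constraint. The computation $M_0 \move{\sigma} M \ge M_f$ is then exactly an $\ell$-covering, and I would prove the proposition via the following strengthened claim, by induction on $i$: for every $i$-covering $\pi$ from $M$ there is an $i$-covering $\pi'$ from $M$ with $\lambda(\pi') \subword \lambda(\pi)$ and $\card{\pi'} \le f(i)$, for a suitable bound $f(i)$. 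Folding the subword requirement into the hypothesis is the crucial design choice.

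Since $1 + {\it max}(F) \le 2^n$, every firing changes each coordinate by at most $2^n$, and likewise $M_0(p), M_f(p) < 2^n$. For the base case $i = 0$ the empty computation already $0$-covers (vacuously) and has empty labeling, so $f(0) = 0$. For the step, fix an $(i+1)$-covering $\pi$ and a threshold $B$ of order $2^n(f(i)+1)$, chosen so that a coordinate of value at least $B$ cannot drop below $M_f$ within $f(i)$ steps. If all first $i+1$ coordinates stay below $B$ along $\pi$, then the configuration restricted to those coordinates takes fewer than $B^{i+1}$ values; whenever such a configuration repeats I \emph{delete the infix} between the two occurrences. The first $i+1$ coordinates then follow the same trajectory (the unconstrained coordinates only shift by a constant, which is harmless), so the result is still an $(i+1)$-covering, and since we only removed transitions its labeling is a subword of $\lambda(\pi)$. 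This yields $\pi'$ of length below $B^{i+1}$.

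If instead some first-$(i+1)$ coordinate reaches $B$, let $t$ be the first such time and $j$ the coordinate. The prefix up to $t$ has all first $i+1$ coordinates below $B$, so the loop removal above shortens it to length below $B^{i+1}$ while keeping its labeling a subword. The suffix after $t$ is itself an $(i+1)$-covering, hence, after reindexing so that $j$ is moved out of the constrained block, an $i$-covering; applying the induction hypothesis to \emph{this} suffix gives a replacement $\rho$ with $\card{\rho} \le f(i)$ and $\lambda(\rho) \subword \lambda(\pi_{>t})$. Since coordinate $j$ starts at value at least $B$ and loses at most $f(i)\cdot 2^n < B$ tokens along $\rho$, it stays above $M_f(j)$, so the concatenation of the shortened prefix with $\rho$ is an $(i+1)$-covering whose labeling is a subword of $\lambda(\pi)$. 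This gives the recurrence $f(i+1) \le B^{i+1} + f(i)$ with $B$ of order $2^n(f(i)+1)$, whose solution at $i = \ell \le n$ is $f(\ell) \le 2^{2^{cn\log n}}$. Applying the strengthened claim to the $\ell$-covering $\sigma$ produces the desired $\sigma'$.

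The main obstacle, and the only real departure from Rackoff, is precisely this subword bookkeeping. The classical argument completes an overshooting computation by splicing in a \emph{fresh} shortest covering sequence for the remaining coordinates, which in general is not a sub-labeling of the original suffix. The fix is to recurse on the original suffix instead, so that every shortening is realized either by deleting transitions (loop removal) or by an inductively guaranteed sub-labeling replacement; one then only has to verify that these two kinds of edits compose cleanly and that the already-large coordinate $j$ survives the replacement suffix, both of which are the routine checks carried out above.
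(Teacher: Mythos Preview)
Your proposal is correct and follows essentially the same route as the paper. Both arguments adapt Rackoff by folding the subword requirement $\lambda(\sigma')\subword\lambda(\sigma)$ into the inductive hypothesis, then use the standard two-case split: loop removal on repeated $(i{+}1)$-restricted markings in the bounded case, and in the overshooting case shorten the prefix by loop removal and replace the suffix via the induction hypothesis applied to the \emph{original} suffix (so the replacement is a sub-labeling), with the overshooting coordinate absorbed by the ``wlog $j=i{+}1$''/reindexing step; your threshold $B=2^n(f(i)+1)$ and base $f(0)=0$ differ from the paper's $2^n f(i)$ and $f(0)=1$ only inessentially.
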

Our proof is an adaptation of Rackoff's technique to show that coverability can be solved in $\EXPSPACE$~\cite{Rackoff78}. 
Rackoff derives a bound (similar to ours) on the length of the shortest computations that cover a given marking.
Rackoff's proof has been generalized to different settings, \eg to BVAS in~\cite{DEMRI201323}. 
Lemma 5.3 in~\cite{Leroux2013} claims that Rackoff's original proof already implies Proposition~\ref{rackoffproof}.
This is not true as shown by the following example.

\begin{example}
    Consider the Petri net $N_{\mathit{ce}} = (\set{a,b,c},\set{ \rprun, \rptemp, \rpstop  }, \set{ \rthelp, \rtb, \rta  }, F, \lambda)$, where the flow relation is given by Figure~\ref{Figure:Rackoff} and we have $\lambda(\rthelp) = a, \lambda(\rtb) = b$, and $\lambda (\rta) = c$.
    \begin{figure}[!ht]
        \centering
        \begin{tikzpicture}[>=stealth',shorten >=1pt,auto,node distance=2cm]


\node[place] (run)  at (0,1) {$\rprun$};

\node[place] (temp) at (2,3) {$\rptemp$};

\node[place] (stop) at (6,1) {$\rpstop$};

\node[transition] (thelp) at (0,3) {$\rthelp$}
edge [pre, bend left] (run)
edge [post, bend right] (run)
edge [post] (temp)
;

\node[transition] (ta) at (4,0) {$\rta$}
    edge [pre] (run)
    edge [post, bend right] (stop)
    ;

\node[transition] (tb) at (4,2) {$\rtb$}
    edge [pre] (temp)
    edge [pre] (run)
    edge [post, bend left] (stop)
    ;

\end{tikzpicture}
        \caption{\!\!\textbf{\sffamily .} The flow relation of the Petri net $N_{\mathit{ce}}$.}
        \label{Figure:Rackoff}
    \end{figure}
    Consider the initial marking $M_0 = (1,0,0)$ with one token on $\rprun$ and no token elsewhere, and the final marking $M_f = (0,0,1)$ that requires one token on $\rpstop$.
    We have $\lang{N,M_0,M_f} = a^+b \cup a^*c$ and thus $\uc{\lang{N,M_0,M_f}} \, =  \Sigma^*a\Sigma^*b\Sigma^* \cup \Sigma^*c\Sigma^*$.
    
    We may compute Rackoff's bound~\cite{Rackoff78}, and obtain that if there is a computation covering $M_f$ from $M_0$, then there is one consisting of at most one transition.
    Indeed, the computation $M_0 \move{\rta} M_f$ is covering.
    
    However, computations whose length is within Rackoff's upper bound do not necessarily generate all minimal words:
    We have that $ab$ is a minimal word in $\uc{\lang{N,M_0,M_f}}$, but the shortest covering computation that generates $ab$ is $M_0 \move{\rthelp} (1,1,0) \move{\rtb} M_f$, consisting of $3$ markings and $2$ transitions.
\end{example}
To handle labeled Petri nets, Rackoff's proof needs two amendments. 
First, it is not sufficient to consider the shortest covering computations. 
Instead, we have to consider computations long enough to generate all minimal words. 
Second, Rackoff's proof splits a firing sequence into two parts and replaces the second part by a shorter one. In our case, 
we need that the shorter word is a subword of the original one.

We now elaborate on Rackoff's proof strategy and give the required definitions, then we explain in more detail our adaptation, and finally give the technical details.

We assume that the places are ordered, \ie $P = \oneto{\ell}$. 
Rackoff's idea is to relax the definition of the firing relation and allow for negative token counts on the last $i+1$ to $\ell$ places.
With a recurrence over the number of places, he then obtains a bound on the length of the computations that keep the first $i$ places positive. 

Formally, an \emph{pseudo-marking} of $N$ is a function $M: P \rightarrow \Z$.
For $i \in \oneto{\ell}$, a pseudo-marking marking $M$ \emph{$i$-enables} a transition $t \in T$ if $M(j) \geq F(j,t)$ for all $j \in \oneto{i}$. 
Firing $t$ yields a new pseudo-marking $M'$, denoted $M \firei{t}{i} M'$, with $M'(p) = M(p) - F(p,t) + F(t,p)$ for all $p \in P$.  
A computation
\(
    \pi=M_0 \firei{t_1}{i} M_1\ldots \firei{t_m}{i} M_m
\)
is \emph{$i$-non-negative} if for each marking $M_k$ with $k \in \zeroto{m}$ and each place $j \in \oneto{i}$, we have $M_k(j) \geq 0$.
We assume a fixed marking $M_f$ to be covered.
The computation $\pi$ is \emph{$i$-covering} (\wrt $M_f$) if $M_m(j) \geq M_f(j)$ for all $j \in \oneto{i}$.
Given two computations $\pi_1 = M_0 \firei{t_1}{i} \cdots \firei{t_k}{i} M_k$ and $\pi_2 = M'_0 \firei{t'_1}{i} \cdots \firei{t'_s}{i} M'_{s}$ such that $M_k(j) = M'_0(j)$ for all $j \in \oneto{i}$, we define their \emph{$i$-concatenation} $\pi_1\cdot_{i}\pi_2$ to be the computation
$M_0 \firei{t_1}{i} \cdots \firei{t_k}{i} M_k \firei{t'_1}{i} M''_{k+1} \cdots \firei{t'_s}{i} M''_{k+s}$. 
   
Rackoff's result provides a bound on the length of the shortest $i$-covering computations. 
Since we have to generate all minimal words, we will specify precisely which computations to consider (not only the shortest ones). 
Moreover, Rackoff's bound holds independent of the initial marking. 
This is needed, because the proof of the main lemma splits a firing sequence into two parts and then considers the starting marking of the second part as the new initial marking.
The sets we define in the following will depend on some unrestricted initial marking $M$, but we then quantify over all possible markings to get rid of the dependency.

Let $\Paths(M,i)$ be the set of all $i$-non-negative and $i$-bounded computations from $M$,
\[
    \Paths(M,i) = \Set{ \sigma\in T^* }{ \pi = M \move{\sigma}_i M',\  \pi \text{ is $i$-non-negative and $i$-covering} }
    \ .
\]
Let
\(
    \Words(M,i) = \Set{ \lambda(\sigma) }{ \sigma \in \Paths(M,i) }
\)
be the corresponding set of words, and let
\(
    \Basis(M,i) = \Set{ w \in \Words(M,i) }{ w\text{ is $\subword$-minimal}}
\)
be its minimal elements.
The central definitions is $\SPath(M,i)$, the set of shortest paths yielding the minimal words in $\Basis(M,i)$,
\[
    \SPath(M,i)
    =
    \Set
    {
        \sigma \in \Paths(M,i)
    }
    {
        \begin{array}{ll}
            \lambda(\sigma) \in \Basis(M,i),&
            \\
            \nexists\ \sigma' \in \Paths(M,i) \colon
            &\card{\sigma'} < \card{\sigma}, \ \lambda(\sigma') = \lambda(\sigma)
        \end{array}
    }
    \ . 
\]
Define $m(M,i) = \max \Set{ \card{\sigma} + 1}{ \sigma \in \SPath(M,i) }$ to be the length ($+1$) of the longest path in $\SPath(M,i)$, or \mbox{$m(M,i) = 0$} if $\SPath(M,i)$ is empty.
Note that $\Basis(M,i)$ is finite and therefore only finitely many different lengths occur for sequences in $\SPath$, \ie $m(M,i)$ is well-defined.
To remove the dependency on $M$, define
\[
    f(i) = \max \Set{ m(M,i)} {M \colon P \to \Z}
\]
to be the maximal length of an $i$-covering computation, where the maximum is taken over all unrestricted initial markings. 
The well-definedness of $f(i)$ is not clear yet and will be a consequence of the next lemma.
A bound on $f(\ell)$ will give us a bound on the maximum length of a computation accepting a minimal word from $\lang{N,M_0,M_f}$. 
To derive the bound, we prove that $f(i+1) \leq (2^n f(i))^{i+1} + f(i)$ using Rackoff's famous case distinction~\cite{Rackoff78}.

\begin{lemma}
\label{rackoffbound}
    $f(0) = 1$ and $f(i+1) \leq (2^n f(i))^{i+1} + f(i)$ for all $i\in \oneto{\ell-1}$.
\end{lemma}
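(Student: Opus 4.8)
The plan is to follow Rackoff's recurrence over the number of relevant places, but to replace his "shorten the shortest covering computation" step by one that shortens while keeping the generated word a \emph{subword} of the original, so that minimality in $\Basis$ and shortest-ness in $\SPath$ are both respected. For the base case, observe that when $i=0$ the conditions "$0$-non-negative" and "$0$-covering" quantify over the empty index set $\oneto{0}$ and are therefore vacuous: the empty computation lies in $\Paths(M,0)$ for every $M$ and yields the word $\varepsilon$, which is the unique $\subword$-minimal element of $\Words(M,0)$. Hence $\Basis(M,0)=\set{\varepsilon}$, the shortest path producing it is empty, and $m(M,0)=0+1=1$ for all $M$, so $f(0)=1$.

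For the inductive step, I fix an arbitrary pseudo-marking $M$ and a witness $\sigma\in\SPath(M,i+1)$ and bound $\card{\sigma}$. Write $\pi=M\move{\sigma}_{i+1}M'$ and set the threshold $r=2^n f(i)$, where I use that $\mathit{max}(F)\le 2^n$, $\ell\le n$, and $M_f(j)\le 2^n$, all consequences of $n$ being the input size. Rackoff's case distinction is on whether the first $i+1$ coordinates of the markings along $\pi$ all stay below $r$. In the bounded case, the restrictions of the markings to the first $i+1$ coordinates range over a set of size at most $r^{i+1}$; if $\card{\sigma}\ge r^{i+1}$, two such restrictions coincide and I excise the infix between them. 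Since the loop leaves the first $i+1$ coordinates unchanged, the excised computation is again $(i+1)$-non-negative and $(i+1)$-covering, so it lies in $\Paths(M,i+1)$, and its word is a subword of $\lambda(\sigma)$. Minimality of $\lambda(\sigma)$ in $\Basis(M,i+1)$ forbids a strictly smaller word, so the word is unchanged; but then a strictly shorter computation yields $\lambda(\sigma)$, contradicting $\sigma\in\SPath(M,i+1)$. Hence $\card{\sigma}<r^{i+1}$.

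In the unbounded case, let $M_k$ be the first marking with $M_k(j^\ast)\ge r$ for some $j^\ast\in\oneto{i+1}$. The prefix up to $M_k$ stays below $r$ on the first $i+1$ coordinates, so the bounded-case argument gives it length $<r^{i+1}$. After permuting the places so that $j^\ast$ becomes place $i+1$, that coordinate is large enough to be ignored: I apply the induction hypothesis to the suffix, viewed as a computation that only needs to $i$-cover from $M_k$, obtaining a replacement suffix of length at most $f(i)$ that is $i$-non-negative, $i$-covering, and whose word is a subword of the original suffix word. Because $r$ exceeds $M_f(i+1)+f(i)\cdot 2^n$ (adjusting the constant if needed), coordinate $i+1$ stays above $M_f(i+1)\ge 0$ throughout these at most $f(i)$ steps, so the $i$-concatenation $\pi_1\cdot_{i+1}\pi_2$ of the prefix with the new suffix is in fact $(i+1)$-non-negative and $(i+1)$-covering, hence lies in $\Paths(M,i+1)$, with word a subword of $\lambda(\sigma)$. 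The same dichotomy as before forces $\card{\sigma}\le (r^{i+1}-1)+f(i)$. Combining the two cases, $m(M,i+1)=\card{\sigma}+1\le r^{i+1}+f(i)=(2^n f(i))^{i+1}+f(i)$ for every $M$; taking the maximum over $M$ shows $f(i+1)$ is finite, settling its well-definedness, and establishes the recurrence.

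The step I expect to be the main obstacle is precisely the subword bookkeeping that distinguishes this argument from Rackoff's original: in both cases I must verify that the shortened computation produces a word that is a subword of $\lambda(\sigma)$, and then argue correctly along the dichotomy "a strictly smaller word contradicts $\subword$-minimality in $\Basis(M,i+1)$" versus "the same word by a strictly shorter computation contradicts shortest-ness in $\SPath(M,i+1)$." The accompanying delicate point is choosing the threshold $r$ large enough that the ignored $(i+1)$-th coordinate remains above the final marking for the full $f(i)$ steps of the replacement suffix, while still matching the stated constant $(2^n f(i))^{i+1}$.
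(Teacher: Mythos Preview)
Your proposal is correct and follows essentially the same approach as the paper: Rackoff's case distinction on whether the first $i+1$ coordinates stay below the threshold $r=2^n f(i)$, with the crucial modification that all shortenings preserve the subword relation on labels. The paper frames the inductive step slightly differently---rather than fixing $\sigma\in\SPath(M,i+1)$ and invoking your minimality/shortest-ness dichotomy, it shows directly that every $w\in\Words(M,i+1)$ admits some $\sigma\in\Paths(M,i+1)$ with $\lambda(\sigma)\subword w$ and $\card{\sigma}<(2^nf(i))^{i+1}+f(i)$---but the two formulations are equivalent and the underlying construction (loop excision in the bounded case, prefix/suffix splitting with inductive replacement of the suffix in the unbounded case) is identical.
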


\begin{proof}
    To see that $f(0) = 1$, note that $\varepsilon~\in~\Basis(M,0)$ for any $M \in \Z^{\ell}$, and the empty firing sequence is a $0$-covering sequence producing $\varepsilon$.
    
    For the second claim, we show that
    for any $M \in \Z^{\ell}$ and any $w \in \Words(M,i+1)$, we can find \mbox{$ \sigma \in \Paths(M,i+1)$} with $\card{\sigma} < (2^n f(i))^{i+1} + f(i)$ and $\lambda(\sigma) \subword w$. 
    Let $\sigma' \in~\Paths(M,i+1)$ be a shortest firing sequence of transitions such that $\lambda(\sigma') = w$.
    If \mbox{$\card{\sigma'} < (2^n f(i))^{i+1} + f(i)$}, we have nothing to do.
    Assume now $\card{\sigma'}~\geq~(2^n f(i))^{i+1} + f(i)$. 
    We distinguish two cases.
    
    \noindent
    \textbf{\sffamily\nth{1} Case:}
    Suppose $\sigma'$ induces the $(i+1)$-non-negative, \mbox{$(i+1)$-covering} computation $\pi'$, in which for each occurring marking $M$ and for each place \mbox{$p \in \oneto{i+1}$,} $M(p) < 2^n \cdot f(i)$ holds.
    We extract from $\pi'$ an $(i+1)$-non-negative, $(i+1)$-covering computation $\pi$ where no two markings agree on the first $(i+1)$ places:
    Whenever such a repetition occurs in $\pi'$, we delete the transitions between the repeating markings to obtain a shorter computation that is still $(i+1)$-covering.
    Iterating the deletion yields the sequence of transition $\sigma$.
    The computation $\sigma$ satisfies
    \[
    \card{\sigma} <  (2^n f(i))^{i+1}\leq (2^n f(i))^{i+1} + f(i)
    \ .
    \]
    The strict inequality holds as a computation of $h$ markings has $(h-1)$ transitions.
    Moreover, $\sigma$ is a subword of the original $\sigma'$, and hence $\lambda(\sigma) \subword \lambda(\sigma') = w$.
    
    \noindent
    \textbf{\sffamily\nth{2} Case:}
    Otherwise, $\sigma'$ is the path of an $(i+1)$-non-negative, $(i+1)$-covering computation $\pi'$, in which a marking occurs that assigns more than $2^n \cdot f(i)$ tokens to some place $p \in \oneto{i+1}$.
    Then, we can decompose $\pi'$ as follows:
    \begin{align*}
    \pi' = M\firei{\sigma_1'}{i+1} M_1 \firei{t}{i+1} M_2 \move{\sigma_2'}_{i+1}M'
    \end{align*} 
    so that $M_2$ is the first marking that assigns $2^n \cdot f(i)$ or more tokens to some place, say \wolog place $i+1$. 
    We may assume that \mbox{$\card{\sigma_1'}  < (2^nf(i))^{i+1}$}.
    Otherwise, we can replace $\sigma_1'$ by a repetition-free sequence $\sigma_1$ as in the first case, where $M_0\firei{\sigma_1}{i+1}M'_1$
    such that $M_1'$ and $M_1$ agree on the first $i+1$ places.
    
    Note that $\pi_2'=M_2\firei{\sigma_2'}{i+1}M'$ is also an $i$-non-negative, $i$-covering computation. 
    By the definition of $f(i)$, there is an $i$-non-negative, $i$-covering computation $\pi_2$ starting from $M_2$ such that the corresponding path $\sigma_2$ satisfies $\card{\sigma_2} < f(i)$ and $\lambda(\sigma_2) \subword \lambda(\sigma_2')$.
    Since the value of place $i+1$ is greater or equal $2^nf(i)$, it is easy to see that $\pi_2$ is also an $(i+1)$-non-negative, $(i+1)$-covering computation starting in $M_2$: Even if all the at most $f(i)-1$ transitions subtract $2^n$ tokens from place $i+1$, we still end up with $2^n$ tokens.            
    The concatenation $\sigma'_1 \cdot_i t \cdot_i \sigma'_2$ is then an $(i+1)$-non-negative, $(i+1)$-covering run starting in $M$ of length at most $((2^nf(i))^{i+1}-1) + 1 + (f(i)-1) < (2^nf(i))^{i+1}+ f(i)$.
\end{proof}

\begin{proof}[Proof of Proposition~\ref{rackoffproof}]
    As in~\cite{Rackoff78}, we define the function $g$ inductively by $g(0) = 2^{3n}$ and $g(i + 1) = (g(i))^{3n}$.
    It is easy to see that $g(i) = 2^{((3n)^{(i+1)})}$.
    Using Lemma~\ref{rackoffbound}, we can conclude $f(i) \leq g(i)$ for all $i \in \zeroto{\ell}$.
    Furthermore,
    \[
    f(\ell)
    \leq g(\ell)
    \leq 2^{((3n)^{(\ell+1)})}
    \leq 2^{((3n)^{n+1})}
    \leq 2^{2^{c n \log n}}
    \]
    for some suitable constant $c$.
    
    Let $M_0 \move{\sigma} M \geq M_f$ be a covering computation of the Petri net.
    By the definitions, $\sigma \in \Paths(M_0, \ell)$ and $\lambda(\sigma) \in \Words(M_0,\ell)$.
    There is a word $w \in \Basis(M_0, \ell)$ with $w \subword \lambda(\sigma)$, and $w$ has a corresponding computation $\sigma' \in \SPath(M_0,\ell)$ (\ie $\lambda(\sigma') = w$).
    By the definition of $f(\ell)$, we have $\card{\sigma'} < m(M_0, \ell) \leq f(\ell) \leq 2^{2^{c n \log n}}$.
\end{proof}

\paragraph*{Lower Bound}
We present a family of Petri net languages for which the minimal finite state automata representing the upward closure are of size doubly exponential in the size of the input. 
We rely on a construction due to Lipton \cite{Lipton} that shows how to calculate in a precise way (including zero tests) with values up to $2^{2^n}$ in Petri nets.

\begin{lemma}
\label{lipton}
    For every number $n \in N$, we can construct a Petri net
    \mbox{$N(n) = (\{a\},P,T,F,\lambda)$}
    and markings $M_0, M_f$ of size polynomial in $n$ such that
    $\lang{N(n),M_0,M_f} = \big\{ a^{2^{2^n}} \big\}\ .$
\end{lemma}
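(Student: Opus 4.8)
The plan is to adapt Lipton's construction \cite{Lipton}, whose essence is that a Petri net of polynomial size can faithfully simulate a counter program as long as every counter stays bounded by $2^{2^n}$. First I would exhibit such a program. Starting from the value $2$ and squaring $n$ times yields $2^{2^0}=2$, then $2^{2^1}=4$, and so on up to $2^{2^n}$, since $2^{2^{i+1}} = (2^{2^i})^2$. Squaring is implemented by repeated addition inside two nested loops, so the whole computation uses only a constant number of counters, plus a small index for the $n$ squaring steps, and---crucially---no counter ever exceeds $2^{2^n}$. The net $N(n)$ will run this program and, at the very end, drain the counter holding the final value $2^{2^n}$ one token at a time, emitting one letter $a$ per token; the final marking $M_f$ will demand that the computation has terminated correctly.

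The only operation a Petri net cannot perform directly is the zero-test, and providing a faithful zero-test for doubly-exponentially bounded counters is the heart of the construction. Following Lipton, I would realise each bounded counter $c$ together with a complementary place $\bar c$ maintaining the invariant $c + \bar c = B$, where $B$ is the capacity of the level to which $c$ belongs; incrementing $c$ moves a token from $\bar c$ to $c$ and decrementing does the reverse, so the invariant is preserved by design. Testing $c = 0$ then amounts to verifying $\bar c = B$, which is checked by draining $\bar c$ in a loop of exactly $B$ steps and refilling it, the loop being controlled by counters one level down. This is organised as a hierarchy: level-$0$ counters have a small explicit capacity $B_0 = 2$ and can be zero-tested by inspecting finitely many cases, and a level-$(i{+}1)$ counter of capacity $B_{i+1} = B_i^2$ performs its zero-test by running a doubly-nested loop that counts to $B_i$ twice using level-$i$ machinery, so that there is no circularity. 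This gives $B_i = 2^{2^i}$, matching the target bound at level $n$.

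The size bound is then a matter of bookkeeping: each level contributes only a constant number of places and transitions and a constant arc multiplicity, and there are $n+1$ levels, so $\card{P}$, $\card{T}$, and the arc weights are all polynomial in $n$; the initial and final markings place at most the capacities on the relevant complementary places, which are encoded in binary and hence of polynomial size. It remains to prove correctness, namely $\lang{N(n),M_0,M_f} = \{a^{2^{2^n}}\}$. Completeness is clear, since the faithful run of the program followed by the draining phase reaches $M_f$ and emits exactly $a^{2^{2^n}}$. The main obstacle is soundness: I have to rule out \emph{cheating} computations in which a transition meant to fire only when some place is empty fires while that place still carries tokens. This is exactly where the complementary-counter invariant and the hierarchical loop controllers are used---a zero-test can succeed only if $\bar c$ genuinely equals $B$, which by the invariant forces $c = 0$---and an inductive argument over the levels shows that any computation covering $M_f$ must have respected every zero-test and therefore must have simulated the program exactly, emitting precisely $2^{2^n}$ copies of $a$.
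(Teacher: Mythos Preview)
Your approach is essentially the paper's: both invoke Lipton's construction.  The paper treats the subroutines $\texttt{Inc}_n$ and $\texttt{Test}$ from \cite{Lipton,Esparza1998} as black boxes and writes a five-line net program around them, whereas you unfold what is inside those subroutines---the complementary-counter invariant $c+\bar c=B$ and the hierarchical zero-test.  Conceptually there is no difference; your repeated-squaring computation of $2^{2^n}$ is precisely what $\texttt{Inc}_n$ does internally.

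One step in your size argument does not go through as written.  You say ``the initial and final markings place at most the capacities on the relevant complementary places, which are encoded in binary and hence of polynomial size.''  But the top-level capacity is $B_n=2^{2^n}$, whose binary encoding has $2^n+1$ bits, which is exponential in $n$, not polynomial.  The fix---and this is exactly what Lipton does---is that the initial marking populates \emph{only} the level-$0$ complementary places with their constant capacity $B_0$; every higher-level $\bar c_i$ is filled to $B_i=B_{i-1}^2$ by the net itself during an initialization phase, using the level-$(i{-}1)$ loop machinery you already describe.  You gesture at this when sketching the hierarchy but then contradict it in the size paragraph.  Once the initialization is made explicit, $M_0$ and $M_f$ carry only constant token counts on $O(n)$ places and the polynomial bound follows.
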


\begin{proof}
    We rely on Lipton's proof~\cite{Lipton} of $\EXPSPACE$-hardness of Petri net reachability.
    Lipton shows how a counter machines in which the counters are bounded by $2^{2^n}$ can be simulated using a Petri net of polynomial size.
    We will use the notations as in \cite{Esparza1998}.
    
    Lipton defines \emph{net programs} (called \emph{parallel programs} in~\cite{Lipton}) to encode Petri nets.
    For the purpose of proving this lemma, we will recall the syntax of net programs and also some of the subroutines as defined in~\cite{Lipton,Esparza1998}.
    
    We will use the following commands \resp the following established subroutines from~\cite{Esparza1998} in the program.
    
    \vspace*{0.2cm}
    
    \begin{tabu}{ll}
        $l: x := x - 1$
        &decrement a variable $x$
        \\
        $l: \texttt{gosub} \; s$
        & call the subroutine $s$
        \\
        $l: \texttt{Inc}_n(x)$
        &sets variable $x$ to exactly $2^{2^n}$
        \\
        $l: \texttt{Test} (x,l_{=0},l_{\neq 0})$
        & {jumps to $l_{=0}$ if $x = 0$ and to $l_{\neq0}$ if $x \neq 0$}
        \\
    \end{tabu}
    
    \vspace*{0.2cm}
        
    \noindent
    Note that all commands can be encoded using a Petri net of size polynomial in $n$.
    The fact that a test for zero can be implemented (by the subroutine $\texttt{Test} (x,l_{=0},l_{\neq 0})$) relies on the counters being bounded by $2^{2^n}$.
    It is not possible to encode zero tests for counter machines with unbounded counters using Petri nets.
    
    We assume that in the Petri net encoded by these commands, all transitions are labeled by $\varepsilon$.
    We consider an additional command $\action(a)$ to accept the input $a$, which can be encoded using a set of transitions such that exactly one of them is labeled by $a$.
    
    Consider the following net program.
    
    \begin{tabu}{l@{\ }l}
        $ l_1:$ &$\texttt{gosub} \; \Inc_n(x) $\\
        $ l_2:$ &$x := x-1 $\\
        $ l_3:$ &$\action(a)$\\
        $ l_4:$ &$\texttt{gosub} \; \Test(x,l_5,l_2) $\\
        $l_5:$ &$\texttt{Halt}$
    \end{tabu}
    
    \noindent
    In any halting computation, the program performs $\action(a)$ exactly $2^{2^{n}}$ times.
    The required Petri net $N(n)$ is the one equivalent to this net program.
\end{proof}
The upward closure $\uc{\lang{N(n),M_0,M_f}}$ is $\Set{ a^k }{k \geq 2^{2^n}}$ and needs at least $2^{2^n}$ states.


\subsection{BPP Nets}
\label{Subsec:UpBPP}

We establish an exponential upper bound on the size of the finite automata representing the upward closure of BPP net languages. 
Then, we present a family of BPP net languages  for which the minimal finite automata representing their upward closure are of size at least exponential in the size of the input.

\paragraph*{Upper Bound}

Assume that the net $N$ in the Petri net instance $(N,M_0,M_f)$ of size $n$ is a BPP net.

\begin{theorem}
\label{Theorem:UCCompBPP}
    One can construct an FSA of size $O(2^{\poly(n)})$ for $\uc{\lang{N,M_0,M_f}}$.
\end{theorem}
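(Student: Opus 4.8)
The plan is to follow the blueprint of the Petri net case: establish a bound $k$ on the length of the computations needed to generate every minimal word, invoke Lemma~\ref{kboundedlanguages} to build an FSA for $\langk{k}{N,M_0,M_f}$, and then add a self-loop $q \tow{a} q$ at every state for every $a\in\Sigma$ to obtain an automaton for the upward closure. The only difference to the general case is quantitative: whereas Proposition~\ref{rackoffproof} yields a doubly-exponential bound through Rackoff's recurrence, here I would prove the exponential bound $k = 2^{\poly(n)}$. Plugging this into Lemma~\ref{kboundedlanguages} gives an FSA of size $(k+2)^{\poly(n)} = 2^{\poly(n)}$, as required. Thus the whole theorem reduces to the following BPP analogue of Proposition~\ref{rackoffproof}: for every covering computation $M_0 \move{\sigma} M \geq M_f$ there is a covering computation $M_0 \move{\sigma'} M' \geq M_f$ with $\lambda(\sigma') \subword \lambda(\sigma)$ and $\card{\sigma'} \leq 2^{\poly(n)}$.

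To prove this length bound I would exploit the defining property of BPP nets, that every transition consumes at most one token. This makes the causal structure of a computation a forest on the fired tokens, the \emph{token genealogy}: firing a transition $t$ that consumes a token $x$ and produces its postset $\Succ{t}$ turns $x$ into an internal node whose children are the produced tokens, while transitions with empty preset create new roots. Every linearization of this forest respecting the causal order is a valid firing sequence, and retaining only an ancestor-closed set of tokens yields a fireable sub-computation whose label is a subword of $\lambda(\sigma)$, since we merely delete transitions. The bound will then follow by keeping only what is needed to cover $M_f$ and by shortening long causal chains.

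Concretely, I would first fix, for each place $p$, a set of $M_f(p)$ tokens of the final marking to serve as covering tokens, and restrict the forest to the ancestor-closure $T'$ of these tokens; firing only the transitions that consume a node of $T'$ still covers $M_f$ (the pruned-away tokens only ever added to markings) and produces a subword. The number of covering tokens is $\tokencount{M_f} \leq \card{P}\cdot 2^n = 2^{\poly(n)}$, because the binary-encoded $M_f$ satisfies $\max(M_f) \leq 2^{n}$; these are exactly the leaves of $T'$, so $T'$ has at most $2^{\poly(n)}$ branch points as well. Decomposing $T'$ into maximal degree-one chains between its branch points and leaves leaves $2^{\poly(n)}$ chains. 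Along a single chain, if two tokens sit on the same place then the transition consuming the lower token can instead consume the upper one, deleting the intervening transitions; iterating this pumping shortens every chain to length at most $\card{P} \leq n$ without touching the branch points or leaves below it. Hence $T'$ contains $2^{\poly(n)}\cdot n = 2^{\poly(n)}$ nodes, \ie $\sigma'$ fires at most $2^{\poly(n)}$ transitions.

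The delicate point, and the step I expect to be the main obstacle, is carrying out the pumping correctly in the presence of sharing in the forest: cutting a chain segment must preserve causal consistency (the re-attached continuation still finds its consumed token available, as both endpoints of the cut lie on the same place), it must not destroy any covering token (which is exactly why the pumping is confined to degree-one chains, whose only relevant descendants hang below the chain and are thereby preserved), and it must keep the resulting label a subword of $\lambda(\sigma)$ (automatic, since we only delete transitions). Once this bookkeeping is set up rigorously, the exponential length bound, and with it the theorem, follow.
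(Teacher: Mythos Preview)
Your proposal is correct and follows essentially the same approach as the paper: both reduce to a length bound on covering computations (the paper's Proposition~\ref{Proposition:BPPShort}), both exploit the forest structure of BPP computations (the paper formalizes this via unfoldings, you via the token genealogy), both select $\tokencount{M_f}$ covering tokens, restrict to their causal ancestors, identify branch points (the paper calls them \emph{join transitions}), and shorten the degree-one chains between them.

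The only technical difference is the cutting criterion on a chain. The paper removes a segment when the same \emph{transition} repeats, bounding each chain by $\card{T}$; you remove a segment when the same \emph{place} repeats, bounding each chain by $\card{P}$. Your criterion is slightly more permissive (repeated transitions force repeated input and output places, but not conversely) and yields a marginally tighter bound, roughly $O(\tokencount{M_f}\cdot\card{P})$ versus the paper's $\tokencount{M_f}^2\cdot\card{T}$, though both are $2^{\poly(n)}$. The correctness argument is the same in either case: the re-attached transition finds its input token because the endpoints of the cut lie on the same place, and only non-branching segments are removed so no covering token is lost.
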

We will show that every minimal word results from a computation whose length is polynomially dependent on the number of transitions and on the number of tokens in the final marking (which may be exponential in the size of the input).  
Let $k$ be a bound on the length of the minimal computations. 
With the same argument as before and using Lemma~\ref{kboundedlanguages},  
we can construct a finite state automaton of size $O({2^{\poly(n)}})$ that accepts $\uc{\langk{k}{N,M_0,M_f}}$.

\begin{proposition}
\label{Proposition:BPPShort}
    Consider a BPP net $N$.
    For every computation $M_0\move{\sigma}M\geq M_f$ there is $M_0\move{\sigma'}M'\geq M_f$ with $\lambda(\sigma') \subword \lambda(\sigma)$ and $\card{\sigma'} \leq \tokencount{M_f}^2 \cdot \card{T}$.
\end{proposition}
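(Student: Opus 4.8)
The plan is to exploit the defining feature of BPP nets---every transition consumes at most one token---to turn a covering computation into a causal \emph{forest} and then prune it. First I would fix a covering computation $M_0 \move{\sigma} M \geq M_f$ and single out, in the final marking, a set of $\tokencount{M_f}$ \emph{target tokens} witnessing the covering ($M_f(p)$ of them on each place $p$). Since each firing in $\sigma$ consumes at most one token, I can assign to every firing a unique parent---the firing that produced the token it consumes (or none, if it consumes an initial token or nothing). This parent relation arranges the firings of $\sigma$ into a forest. Restricting to the ancestor-closure $\Phi$ of those firings that produce the target tokens yields a sub-forest whose firings, taken in their original $\sigma$-order, already form a covering subsequence: every kept firing receives its input token from its kept parent, the roots consume only initial tokens that are available in $M_0$, and the $\tokencount{M_f}$ target tokens are produced but never re-consumed inside $\Phi$, so the resulting marking still covers $M_f$.

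It then remains to bound $\card{\Phi}$. The leaves of $\Phi$ are exactly the firings producing target tokens, so there are at most $\tokencount{M_f}$ of them, and hence at most $\tokencount{M_f}$ branch points (firings with two or more children in $\Phi$). I would then shorten each maximal \emph{non-branching} segment: along such a segment a single token is handed from firing to firing, so if the same transition occurs twice I can splice out the loop in between---the token enters and leaves the spliced part on the same place, so the reconnection is sound, and, because the segment contains no branch point, the deleted firings feed nothing else in $\Phi$ and no other target token is lost. After pruning, every non-branching segment uses each transition at most once and thus has length at most $\card{T}$. A root-to-leaf path crosses at most $\tokencount{M_f}$ branch points, hence decomposes into at most $\tokencount{M_f}$ segments, so its length is at most $\tokencount{M_f} \cdot \card{T}$. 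As $\Phi$ is the union of its at most $\tokencount{M_f}$ root-to-leaf paths, I obtain $\card{\sigma'} = \card{\Phi} \leq \tokencount{M_f}\cdot(\tokencount{M_f}\cdot\card{T}) = \tokencount{M_f}^2 \cdot \card{T}$, and since $\sigma'$ consists of kept firings taken in $\sigma$-order it is a subsequence, so $\lambda(\sigma') \subword \lambda(\sigma)$.

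The step I expect to be the main obstacle is making the loop-splicing compatible with covering: cutting a repeated transition at an arbitrary place in $\Phi$ could delete firings whose produced tokens are consumed on other branches and are needed to generate other target tokens. Confining the cuts to non-branching segments is precisely what removes this interference, and this is where the BPP restriction is essential---at most one consumed token per transition is exactly what makes the causal structure a forest of single-parent firings; for a general Petri net it is a DAG and the argument breaks down. A secondary point to make precise is that firing the pruned forest in the original order is legal; I would justify this by viewing $\Phi$ together with its (re)matched token instances as a Petri net process, every linearization of which respects the parent-before-child order and is therefore a valid firing sequence, the original order being one such linearization.
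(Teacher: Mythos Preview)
Your proposal is correct and follows essentially the same approach as the paper: both build a causal forest from the covering computation (the paper via the unfolding formalism, you directly via the parent relation on firings), restrict to the ancestor-closure of $\tokencount{M_f}$ target tokens, identify the branching nodes (the paper calls them \emph{join transitions}), and shorten the repetition-free segments between adjacent branching nodes to length at most $\card{T}$, yielding the same $\tokencount{M_f}^2\cdot\card{T}$ bound via the same path-count argument. The only difference is presentational---you work with a forest of firings, the paper with an occurrence net of places and transitions---and your anticipated ``main obstacle'' (confining cuts to non-branching segments so as not to sever other target-token dependencies) is exactly the point the paper addresses when it restricts deletions to stretches between \emph{adjacent} join transitions.
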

The key to proving the proposition is to consider a structure that makes the concurrency among transitions in the BPP computation of interest explicit. 
Phrased differently, we give a true concurrency semantics (also called partial order semantics and similar to Mazurkiewicz traces) to BPP computations. 
Since BPPs do not synchronize, the computation yields a forest where different branches represent causally independent transitions. 
To obtain a subcomputation that covers the final marking, we select from the forest a set of leaves that corresponds exactly to the final marking.
We then show that the number of transitions in the minimal forest that generates the selected set of leaves is polynomial in the number of tokens in the final marking and in the number of transitions.

To make the proof sketch for Proposition~\ref{Proposition:BPPShort} precise, we use (and adapt to our purposes) unfoldings, a true concurrency semantics for Petri nets~\cite{EH08}.
The unfolding of a Petri net is the true concurrency analogue of the computation tree -- a structure that represents all computations.
Rather than having a node for each marking, there is a node for each token in the marking.
To make the idea of unfoldings formal, we need the notion of an \emph{occurrence net}, an unlabeled BPP net $O = (P',T',F')$ that is acyclic and where each place has at most one incoming transition and each transition creates at most one token per place: $\sum_{t'\in T'}F(t', p')\leq 1$ for every $p' \in P'$.
Two elements $x,y \in P'\cup T'$ are \emph{causally related}, $x \causal y$, if there is a path from $x$ to $y$.
We use $\predec{x}=\Set{y\in P'\cup T'}{y \causal x}$ to denote the predecessors of $x\in P'\cup T'$.  
The $\causal$-minimal places are denoted by $\mathit{Min}(O)$.
The initial marking of $O$ is fixed to having one token in each place of $\mathit{Min}(O)$ and no tokens elsewhere. 
So occurrence nets are $1$-safe and we can identify markings with sets of places  $P_1', P_2'\subseteq P'$ and write $P_1'\move{t'}P_2'$. 
To formalize that $O$ captures the behavior of a BPP net $N = (\Sigma,P,T,F,\lambda)$ from marking $M_0$, we define a \emph{folding homomorphism} $h:P' \cup T' \rightarrow P \cup T$ satisfying

\begin{enumerate}[(1)]
    \item
        Initiation: 
        $h(\mathit{Min}(O))=M_0$.
    \item
        Consecution:
        For all $t'\in T'$,   
        $h(\Pred{t'}) = \Pred{h(t')}$,
        and all $p \in P$,
        $(h(\Succ{t'}))(p) = F(h(t'), p)$.
        \\
        Here, $h(P_1') \colon P \to \N $ with $P_1'\subseteq P'$ is a function with $(h(P_1'))(p)=\card{\Set{p'\in P_1'}{h(p')=p}}$.
    \item
        No redundancy: For all $t_1',t_2' \in T'$, with $\Pred{t_1'} = \Pred{t_2'}$ and $h(t_1') = h(t_2')$, we have $t_1' = t_2'$.
\end{enumerate}

\noindent
The pair $(O,h)$ is called a \emph{branching process} of $(N, M_0)$. 
Branching processes are partially ordered by the prefix relation which, intuitively, states how far they unwind the BPP. 
The limit of the unwinding process is the \emph{unfolding} $\text{Unf}(N, M_0)$, the unique (up to isomorphism) maximal branching process. 
It is not difficult to see that there is a one to one correspondence between the firing sequences in the BPP net and the firing sequences in the unfolding.
Note that $\text{Unf}(N, M_0)$ will usually have infinitely many places and transitions, but every computation will only use places up to a bounded distance from $\mathit{Min}(O)$.
With this, we are prepared to prove the proposition.

\begin{proof}[Proof of Proposition~\ref{Proposition:BPPShort}]
    Consider a computation \mbox{$M_0\move{\sigma}M$} with $M \geq M_f$ in the given BPP net $N=(\Sigma, P, T, F, \lambda)$.
    Let $(O, h)$ with $O=(P', T', F')$ be the unfolding $\text{Unf}(N, M_0)$.
    Due to the correspondence in the firing behavior, there is a sequence of transitions $\tau$ in $O$ with $h(\tau) = \sigma$ and $\mathit{Min}(O)\move{\tau}P_1'$ with $h(P_1') = M$.
    Since $M\geq M_f$, we know that for each place $p\in P$, the set $P_1'$ contains at least $M_f(p)$ many places $p'$ with $h(p')=p$.
    We arbitrarily select a set $X_{p}$ of size $M_f(p)$ of such places $p'$ from $P_1'$. 
    Let $X = \bigcup_{p \in P} X_p$ be the union for all $p \in P$.
    
    The computation $\tau$ induces a forest in $O$ that consists of all places that contain a token after firing $\tau$ and their predecessors.
    We now construct a subcomputation by restricting $\tau$ to the transitions leading to the places in $X$.
    Note that the transitions leading to $X$ are contained in $\predec{X}$, which means we can define the subcomputation as $\tau_1=\Prj{\tau}{\predec{X}}$, \ie the projection of $\tau$ onto $\predec{X}$.
    In $\tau_1$, we mark all $\causal$-maximum transitions $t'$ that lead to two different places in $X$.
    Formally, if there are $x, y\in X$ with $t' \in \predec{x} \cap \predec{y}$ and there is no $t'' \in \predec{x} \cap \predec{y}$ with $t' \causal t''$, then we mark $t'$.
    We call the marked $t'$ the \emph{join transitions}.
    
    Assume that $t' \neq t''$ are two  join transitions that occur on the same branch of the forest.
    Note that for two places in $X$, there is either no join transition or a unique one leading to these two places.
    Consequently, $t'$ and $t''$ have to lead to different places of $X$.
    Let $t' t^1 \ldots t^m t''$ be the transitions on the branch in between $t'$ and $t''$.
    We assume that $t'$ and $t''$ are adjacent join transitions, \ie none of the $t^i$ is a join transition.
    
    Since $t', t''$ occur in $\tau_1$, all $t^i$ also have to occur in $\tau_1$. 
    If there are indices $j < k$ such that $t^j = t^k$, we may delete $t^{j+1} \ldots t^{k}$ from $\tau_1$ while keeping a transition sequence that covers $X$.
    It will cover $X$ as none of the deleted transitions was a join transition, \ie we will only lose leaves of the forest that are not in $X$.
    We repeat this deletion process until there are no more repeating transitions between adjacent join transitions.
    Let the resulting transition sequence be $\tau_2$.
    First, note that for any $x \in X$, there are at most $\tokencount{M_f}$ many join transitions on the branch from the corresponding minimal element to $x$:
    In the worst case, for each place in $X \setminus \set{x}$, there is a join transition on the branch, and $\card{X} = \tokencount{M_f}$.
    Between any two adjacent join transitions along such a path, there are at most $\card{T}$ transitions (after deletion).
    Hence, the number of transitions in such a path is bounded by $\tokencount{M_f} \cdot \card{T}$.
    Since we have $\tokencount{M_f}$ many places in $X$, the total number of transitions in $\tau_2$ is bounded by $\tokencount{M_f}^2 \cdot \card{T}$.
\end{proof}

\paragraph*{Lower Bound}

We present a family of BPP net languages for which the minimal FSA representing the upward closure are exponential in the size of the input.
The idea is to rely on the binary encoding of numbers, which allows us to handle $2^n$ using a polynomially sized net.

\begin{lemma}
\label{BPPHard}
    For all numbers $n \in \N$, we can construct a BPP net
    \mbox{$N(n) = (\{a\},P,T,F,\lambda)$}
    and markings $M_0, M_f$ of size polynomial in $n$ such that
    $
    \lang{(N(n),M_0,M_f} = \{ a^{2^n} \}\ .
    $
\end{lemma}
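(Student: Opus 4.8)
The plan is to generate $2^n$ tokens by repeated doubling, exploiting the fact that a BPP transition may \emph{produce} several tokens while \emph{consuming} only one. Concretely, I would take places $p_0, \dots, p_n$ together with an output place $p_{\mathrm{out}}$, and for each $i \in \zeroto{n-1}$ a \emph{doubling transition} $t_i$ with $F(p_i, t_i) = 1$ and $F(t_i, p_{i+1}) = 2$, labeled by $\varepsilon$, together with a single \emph{output transition} $t_a$ with $F(p_n, t_a) = 1$ and $F(t_a, p_{\mathrm{out}}) = 1$, labeled by $a$. Setting $M_0$ to place one token on $p_0$ and $M_f$ to require $2^n$ tokens on $p_{\mathrm{out}}$ completes the construction. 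Every transition consumes a single token, so $N(n)$ is a BPP net; it has $n+2$ places, $n+1$ transitions, and all flow values are at most $2$, so $\card{N(n)}$ is polynomial in $n$, and since $M_f$ stores the number $2^n$ in binary, its size is polynomial in $n$ as well.

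Next I would establish the two inclusions. For $\supseteq$, I exhibit the canonical computation: fire $t_0$ once (yielding two tokens on $p_1$), then $t_1$ twice (four tokens on $p_2$), and in general fire $t_i$ exactly $2^i$ times, so that after processing level $n-1$ there are $2^n$ tokens on $p_n$; firing $t_a$ then $2^n$ times moves all of them to $p_{\mathrm{out}}$ and emits the word $a^{2^n}$, reaching a marking that covers $M_f$. Hence $a^{2^n} \in \lang{N(n),M_0,M_f}$.

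For $\subseteq$, since the alphabet is $\set{a}$, every accepted word is some $a^k$, where $k$ equals the number of firings of $t_a$, each of which deposits exactly one token on $p_{\mathrm{out}}$ that is never consumed; covering $M_f$ therefore forces $k \geq 2^n$. The heart of the argument is the matching upper bound $k \leq 2^n$, which I would prove by a conserved-weight invariant: assign to each token on $p_i$ the weight $2^{n-i}$ and to each token on $p_{\mathrm{out}}$ the weight $1$. A doubling transition $t_i$ replaces one token of weight $2^{n-i}$ by two tokens of weight $2^{n-i-1}$, and $t_a$ replaces one weight-$1$ token on $p_n$ by one weight-$1$ token on $p_{\mathrm{out}}$; thus the total weight is invariant and equals its initial value $2^n$. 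As every token on $p_{\mathrm{out}}$ carries weight $1$, no reachable marking can hold more than $2^n$ tokens there, so $k \leq 2^n$. Combining the two bounds yields $k = 2^n$, whence $\lang{N(n),M_0,M_f} = \set{a^{2^n}}$.

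The only delicate point is the interplay between coverability and the structural bound: coverability by itself supplies only the lower bound $k \geq 2^n$, so it is the weight invariant that provides the complementary upper bound and pins the language down to the single word $a^{2^n}$. Everything else is a routine verification of the BPP condition and of polynomial size, and the subsequent remark that $\uc{\lang{N(n),M_0,M_f}} = \Set{a^j}{j \geq 2^n}$ forces at least $2^n$ states, giving the claimed exponential lower bound.
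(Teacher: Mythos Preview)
Your proof is correct, but it takes a different route from the paper's. The paper exploits the binary encoding of arc weights directly: it uses just three places $p_0,p_1,p_f$ and two transitions, where a single $\varepsilon$-labeled transition consumes one token from $p_0$ and produces $2^n$ tokens on $p_1$ in one step (the arc weight $2^n$ is polynomial in binary), after which $t_a$ moves tokens one by one to $p_f$. This yields a unique covering computation and makes the language claim immediate without any invariant argument. Your construction instead keeps all arc weights bounded by $2$ and achieves the blow-up by $n$ rounds of doubling, at the cost of $O(n)$ places and transitions and a separate conserved-weight argument for the upper bound. What your approach buys is robustness: it works even under a model where arc multiplicities are unary-encoded or bounded, whereas the paper's three-place net leans entirely on the binary encoding of $F$. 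Both rely on the binary encoding of $M_f$ to keep the final marking polynomial.
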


\begin{proof}
    The BPP net $N(n)$ consists of three places $p_0, p_1, p_f$ and two transitions $t,t_a$.
    Transition $t$ is $\varepsilon$-labeled, consumes one token from $p_0$, and creates $2^n$ tokens on $p_1$.
    Transition $t_a$ is labeled by $a$ and moves one token from $p_1$ to $p_f$.
    Formally, $\lambda(t) = \varepsilon, F(p_0,t) = 1, F(t,p_1) = 2^n$, $\lambda(t_a) = a, F(p_1,t_a) = 1, F(t_a,p_f) = 1$.
    All other values for $F$ are $0$.
    The initial marking $M_0$ places one token on $p_1$ and no tokens elsewhere, the final marking $M_f$ requires $2^n$ tokens on $p_f$ and no tokens elsewhere.
    Note that $F$ as well as $M_f$ have polynomially-sized encodings.
    There is a unique covering computation in $N(n)$, namely the computation $M_0 \fire{\sigma} M_f$, where $\sigma = t.\underbrace{ t_a \ldots t_a }_{ 2^n \text{ times}}$.
    Thus, the language of $(N(n),M_0,M_f)$ is as required.
\end{proof}


\section{Downward Closures}
\label{Section:DownwardClosure}

We consider the problem of constructing a finite state automaton accepting the downward closure of a Petri net and a BPP net language, respectively. 
The downward closure often has the property of being a precise description of the system behavior, namely as soon as asynchronous communication comes into play:
If the components are not tightly coupled, they may overlook commands of the partner and see precisely the downward closure of the other's computation. 
As a result, having a representation of the downward closure gives the possibility to design exact or under-approximate verification algorithms.

\begin{compproblem}
    \problemtitle{Computing the downward closure}
    \probleminput{A Petri net instance $(N,M_0,M_f$).}
    \problemquestion{An FSA $A$ with $\lang{A} = \dc{\lang{N,M_0,M_f}}$.}	
\end{compproblem}

\subsection{Petri Nets}

The downward closure of Petri net languages has been shown to be effectively computable in~\cite{HMW10}. 
The algorithm is based on the Karp-Miller tree~\cite{km69}, which can be of non-primitive recursive size. 
We now present a family of Petri net languages that are already downward closed and for which the minimal finite automata have to be of non-primitive recursive size in the size of the input. 
Our result relies on a construction due to Mayr and Meyer~\cite{MayrMeyer1981}. 
It gives a family of Petri nets whose computations all terminate but, upon halting, may have produced Ackermann many tokens on a distinguished place.

We first recall the definition of the Ackermann function.

\begin{definition}
    The Ackermann function is defined inductively as follows:

    $\begin{tabu}{l@{}cll}
    \Acker_0&(x) &=& x+1
    \\
    \Acker_{n+1}&(0) &=& \Acker_n(1)
    \\
    \Acker_{n+1}&(x+1) &=& \Acker_n(\Acker_{n+1}(x))
    \ .\\
    \end{tabu}$
\end{definition}
\begin{lemma}
\label{ackermann}
    For all $n, x \in \N$, there is a Petri net $N(n) = (\{a\},P,T,F,\lambda)$
    and markings $M_0^{(x)}, M_f$ of size polynomial in $n + x$ such that
    $
        \lang{N(n),M_0^{(x)},M_f} = \set{ a^k \mid k \leq \Acker_n(x) }.
    $
\end{lemma}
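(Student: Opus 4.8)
The plan is to reuse the Petri net family of Mayr and Meyer~\cite{MayrMeyer1981}, which weakly computes the Ackermann function, and to equip it with a labeled draining transition. Concretely, I would build a net whose $\varepsilon$-labeled part has a distinguished output place $p_\pout$ such that, started from an initial marking encoding the argument $x$, every halting computation leaves \emph{at most} $\Acker_n(x)$ tokens on $p_\pout$, and some halting computation leaves \emph{exactly} $\Acker_n(x)$. I then add a single transition $t_a$ with $\lambda(t_a)=a$ that consumes one token from $p_\pout$, and I let the final marking $M_f$ demand a token on a dedicated halt place that becomes markable precisely once the core has finished. A word $a^k$ of the covering language then corresponds to firing $t_a$ exactly $k$ times before covering $M_f$.

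The Ackermann core is assembled along the recursion. I would use $n+1$ modules $\mathcal{M}_0,\ldots,\mathcal{M}_n$, where $\mathcal{M}_i$ realises the map $v\mapsto\Acker_i(v)$: the level-$0$ module just implements $v\mapsto v+1$, and reading $\Acker_{i+1}(x+1)=\Acker_i(\Acker_{i+1}(x))$, module $\mathcal{M}_{i+1}$ keeps a running value, initialises it through $\mathcal{M}_i$ (using $\Acker_{i+1}(0)=\Acker_i(1)$), and then applies $\mathcal{M}_i$ another $x$ times, the iteration count being stored as tokens on a counter place that is decremented once per pass. Since the call relation strictly decreases the level, no module is ever re-entered while active, so a single copy of each of the $n+1$ modules suffices and the net has $O(n)$ places and transitions with constant-size flow. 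The only place holding an unbounded initial value is the top-level counter, which carries $x$ tokens; under the binary marking encoding of Section~\ref{Section:Preliminaries} this adds $O(\log x)$ to $\card{M_0^{(x)}}$. Hence the instance has size polynomial in $n+x$.

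For the language, the downward-closed shape $\set{a^k \mid k \leq \Acker_n(x)}$ is immediate from the draining transition. In the faithful run the core halts with $\Acker_n(x)$ tokens on $p_\pout$, and since covering $M_f$ disregards the tokens remaining on $p_\pout$, we may fire $t_a$ any number $k$ of times with $0 \leq k \leq \Acker_n(x)$ before covering $M_f$; this already yields $a^0, a^1, \ldots, a^{\Acker_n(x)}$. The remaining (lossy) runs of the core deposit strictly fewer tokens on $p_\pout$ and therefore contribute no new words, so the inclusion in both directions follows once the token bound on $p_\pout$ is established.

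The main obstacle is exactly that upper bound: certifying that no computation --- not even one that exploits the impossibility of genuine zero-testing in Petri nets --- can ever deposit more than $\Acker_n(x)$ tokens on $p_\pout$. This is the delicate heart of the Mayr--Meyer argument. Each module must come with a handshaking protocol (input, compute, restore) whose token accounting is a potential that cannot be inflated by firing transitions out of turn, the critical point being the \emph{restore} phase that resets a callee between successive calls without a real zero test. I would prove the bound by induction on the module level, establishing for each $\mathcal{M}_i$ the invariant that its output never exceeds $\Acker_i$ of the value it was handed, with the inductive step tracking the loop counter and the running value through the embedded calls to $\mathcal{M}_{i-1}$; the faithful run witnesses tightness and supplies the matching lower bound $a^{\Acker_n(x)}$.
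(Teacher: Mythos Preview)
Your proposal is correct and follows essentially the same route as the paper: an inductive Mayr--Meyer style family of nets $\mathit{AN}_n$ weakly computing $\Acker_n$, the key lemma that no computation can place more than $\Acker_n(x)$ tokens on the output place (proved by induction on the level), and a single $a$-labeled drain transition to turn the token count into the language $\{a^k \mid k\le \Acker_n(x)\}$. The one noteworthy difference is your choice of final marking: you require a token on a dedicated halt place, whereas the paper takes $M_f$ to be the zero marking. With $M_f=0$ every firing sequence is accepting, so the language is simply $\{\lambda(\sigma)\mid \sigma\text{ fireable}\}$ and one never has to argue that lossy runs can still reach the halt place; your variant also works (the faithful run both halts and realises the maximum), but the paper's choice shaves off that extra verification step.
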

Our lower bound is an immediate consequence of this lemma.

\begin{theorem}
\label{Theorem:DCComp}
    There is a family of Petri net languages for which the minimal finite automata representing the downward closure are of non-primitive recursive size.
\end{theorem}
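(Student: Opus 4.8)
The plan is to derive the theorem directly from Lemma~\ref{ackermann} via two observations: that the languages it produces are already downward closed, and that finite unary languages force large automata. First I would take the family $\lang{N(n),M_0^{(x)},M_f} = \set{a^k \mid k \leq \Acker_n(x)}$ supplied by the lemma. Over the unary alphabet $\set{a}$, every subword of $a^k$ is some $a^j$ with $j \leq k$, so this language is downward closed, and hence $\dc{\lang{N(n),M_0^{(x)},M_f}}$ equals the language itself. Thus an automaton for the downward closure is just an automaton for $\set{a^k \mid k \leq \Acker_n(x)}$.

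The heart of the argument is a lower bound on the number of states of any FSA accepting a finite unary language $L = \set{a^k \mid k \leq m}$: such an automaton needs at least $m+1$ states. Suppose an FSA $A$ with at most $m$ states accepts $a^m$. Fix an accepting run and, for each $i \in \zeroto{m}$, let $s_i$ be the state reached immediately after the transition that reads the $i$-th occurrence of $a$ (with $s_0$ the initial state), disregarding the $\varepsilon$-transitions in between. These are $m+1$ states drawn from a set of size at most $m$, so by the pigeonhole principle $s_i = s_j$ for some $i < j$. The portion of the run between these two points is a cycle reading $a^{j-i}$ with $j - i \geq 1$, which can be traversed one extra time to produce an accepting run on $a^{m+(j-i)}$. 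Since $a^{m+(j-i)} \notin L$, this contradicts the finiteness of $L$. Hence every FSA for $L$ has at least $m+1$ states, and in particular the minimal FSA for $\set{a^k \mid k \leq \Acker_n(x)}$ has at least $\Acker_n(x)+1$ states.

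It remains to diagonalize. I would instantiate the family along the diagonal $x = n$, taking the instances $(N(n),M_0^{(n)},M_f)$; by Lemma~\ref{ackermann} their size is polynomial in $n$, while the minimal FSA for the downward closure has at least $\Acker_n(n)+1$ states. If this minimal size were bounded by a primitive recursive function $f$ of the input size, then, the input size being polynomial in $n$, the composition of $f$ with that polynomial would be a primitive recursive function dominating $n \mapsto \Acker_n(n)$ — which is impossible, as the diagonal Ackermann function eventually dominates every primitive recursive function. The main obstacle is getting the state lower bound for the unary languages right, i.e.\ making the pigeonhole argument robust to $\varepsilon$-transitions and nondeterminism; once that is in place, the passage to non-primitive recursiveness is a routine diagonalization.
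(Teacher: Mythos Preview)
Your proposal is correct and follows the same route as the paper, which simply declares the theorem an immediate consequence of Lemma~\ref{ackermann}; you have merely spelled out the two implicit steps (the language is already downward closed over a unary alphabet, and a pigeonhole/pumping argument forces $\Acker_n(n)+1$ states). The diagonal instantiation $x=n$ is a clean choice, and your treatment of $\varepsilon$-transitions in the state lower bound is sound.
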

This hardness result relies on a weak computation mechanism of very large numbers that is unlikely to show up in practical examples. 
The SRE inclusion problem studied in the following section can be understood as a refined analysis of the computation problem for downward closures.

It remains to prove Lemma~\ref{ackermann}.
We start by defining a preliminary version of the required nets.
The construction is inductive and imitates the definition of the Ackermann function.

\begin{definition}
    We define the Petri net $\mathit{AN}_0$ to be
    \begin{align*}
        \mathit{AN}_0 &= (\set{a},P^0,T^0,F^0,\lambda^0) \quad \text{ with }\\
    P^0 &= \set{ \pin^0, \pout^0, \pstart^0, \pstop^0, \pcopy^0 },\\
    T^0 &= \set{ \tstart^0, \tstop^0, \tcopy^0 }, \\
    \lambda^0 (t) &= \varepsilon \text{ for all } t \in T^0 \ .
    \end{align*}
    The flow relation is given by Figure~\ref{pic1}, where each edge carries a weight of $1$.
    
    \begin{figure}[b]
        \centering
        \scalebox{0.9}{\begin{tikzpicture}[>=stealth',shorten >=1pt,auto,node distance=1.5cm]

\node[place] (A1) [minimum size=1.1cm] {$\pin^0$};

\node[transition] (B1) [right of=A1, node distance=3.5cm, inner sep = 2pt] {$\tcopy^0$}
    edge [pre] (A1);

\node[place] (A4) [ right of=B1, minimum size=1.1cm, node distance=3.5cm] {$\pout^0$}
    edge [pre] (B1);

\node[place] (A2) [ below of=A1, minimum size=1.1cm,  node distance = 1.5cm] {$\pstart^0$};

\node[transition] (B2) [right of=A2, xshift=0.25cm, inner sep = 2pt] {$\tstart^0$}
    edge [pre] (A2);

\node[place] (A3) [below of=B1, minimum size=1.1cm, node distance = 1.5cm] {$\pcopy^0$}
    edge [pre] (B2)
    edge [pre, bend left] (B1)
    edge [post,bend right] (B1);

\node[transition] (B3) [right of=A3, xshift=0.25cm, inner sep = 2pt] {$\tstop^0$}
    edge [pre] (A3)
    edge [post] (A4) ;

\node[place] (A5) [ below of=A4, minimum size=1.1cm, node distance = 1.5cm] {$\pstop^0$}
    edge [pre] (B3);

\end{tikzpicture}}
        \caption{\!\!\textbf{\sffamily .} The flow relation of the Petri net $\mathit{AN}_0$.}
        \label{pic1}
    \end{figure}
    \noindent
    For $n \in \N$, we define $\mathit{AN}_{n+1}$ inductively by
    \begin{align*}
    \mathit{AN}_{n+1}
    &= (\set{a},P^{n+1},T^{n+1},F^{n+1},\lambda^{n+1})
    \quad \text{ with }
    \\
    P^{n+1}
    &= P^n \cup \{ \pin^{n+1}, \pstart^{n+1}, \pcopy^{n+1}, \pout^{n+1}, \pstop^{n+1},\pswap^{n+1}, \ptmp^{n+1}, \} 
    \\
    T^{n+1}
    &= T^n \cup\{ \tstart^{n+1}, \tcopy^{n+1}, \tstop^{n+1},\trestart^{n+1},\tin^{n+1}, \tswap^{n+1}, \ttmp^{n+1} \},
    \\
    \lambda^{n+1}(t) &= \varepsilon \text{ for all }t \in T^{n+1} .
    \end{align*}
    The flow relation is given by Figure~\ref{pic2}, where again each edge carries a weight of $1$.
    
    \begin{figure*}[ht!]
        \centering
        \scalebox{0.9}{
        \begin{tikzpicture}[>=stealth',shorten >=1pt,auto,node distance=1.3cm]

\node[place] (A1) [minimum size=1.5cm] {$\pin^{n+1}$};

\node[place] (A2)  [below of=A1, node distance = 2cm, minimum size=1.5cm] {$\pstart^{n+1}$};

\node[draw, thick, minimum width=3cm, minimum height=3cm, inner sep=0pt, outer sep=0pt, right of=A1, node distance=4.7cm, yshift=-1cm, ]{$\mathit{AN}_n$};
    
\node [place, fill=white, right of=A1, node distance=3.5cm, minimum size=1.2cm] (AA1){$\pin^{n}$};

\node [place, fill=white, node distance=3.5cm, minimum size=1.2cm] (AA2) [right of=A2] {$\pstart^{n}$};

\node [place, fill=white] (AB1) [right of=AA1, node distance = 2.5cm, minimum size=1.2cm] {$\pout^{n}$};

\node [place, fill=white] (AB2) [right of=AA2, node distance = 2.5cm, minimum size=1.2cm] {$\pstop^{n}$};

\node[transition] (B1) [right of=A2, xshift=0.5cm, inner sep = 2pt] {$\tstart^{n+1}$}
    edge [pre] (A2)
    edge [post] (AA1)
    edge [post] (AA2);

\node[transition] (B2) [above of=AA1, xshift=1cm, inner sep = 2pt] {$\tswap^{n+1}$}
    edge [pre, bend left=40] (AB1)
    edge [post, bend right=40] (AA1);

\node[place] (A3) [above of=B2, node distance = 1.5cm, minimum size=1.5cm] {$\pswap^{n+1}$}
    edge [pre, bend left] (B2)
    edge [post, bend right] (B2);

\node[transition] (B3) [left of=B2, node distance = 2.5cm, inner sep = 2pt] {$\trestart^{n+1}$}
    edge [pre,bend left] (A3)
    edge [post, bend right] (AA2);

\node[transition] (B4) [above of=A3, node distance = 1.5cm, inner sep = 2pt] {$\tin^{n+1}$}
    edge [pre, bend left=60] (AB2)
    edge [pre, out=180,in=90] (A1)
    edge [post] (A3);

\node[transition] (B4) [right of=AB1, node distance=3.5cm, inner sep = 2pt] {$\tcopy^{n+1}$}
    edge [pre] (AB1);

\node[transition] (B5) [right of=AB2, xshift=0.4cm, inner sep = 2pt] {$\ttmp^{n+1}$}
    edge [pre] (AB2);

\node[place] (A4) [right of=AB2, node distance=3.5cm, minimum size=1.5cm] {$\ptmp^{n+1}$ }
    edge [pre, bend left] (B4)
    edge [pre] (B5)
    edge [post, bend right] (B4);

\node[transition] (B6) [right of=A4, xshift=0.4cm, inner sep = 2pt] {$\tstop^{n+1}$}
    edge [pre] (A4);

\node[place] (A5) [right of=A4, node distance=3.5cm, minimum size=1.5cm] {$\pstop^{n+1}$}
    edge[pre] (B6);

\node[place] (A6) [right of=B4, node distance=3.5cm, minimum size=1.5cm] {$\pout^{n+1}$}
    edge [pre] (B4);

\end{tikzpicture}
        }
        \caption{\!\!\textbf{\sffamily .} The flow relation of the Petri net $\mathit{AN}_{n+1}$.}
        \label{pic2}
    \end{figure*}

    Let us furthermore define for each $x \in \N$ the marking $M_0^{(x)}$ of $\mathit{AN}_n$ that places one token on $\pstart^{n}$, $x$ tokens on $\pin^{n}$ and no token elsewhere.
\end{definition}
We prove that $\mathit{AN}_{n}$ indeed can weakly compute $\Acker_n(m)$.

\begin{lemma}
    \label{ackerman-computation}
    For all $n,x \in \N$:\\
    (1) There is $M_0^{(x)} \move{\sigma} M$ of $\mathit{AN}_n$ such that $M(\pout^n) = \Acker_n(x)$, $M(\pstop^n) = 1$.\\
    (2) There is no computation starting in $M_0$ that creates more than $\Acker_n(x)$ tokens on $\pout^n$.
\end{lemma}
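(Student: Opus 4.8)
The plan is to prove statements (1) and (2) simultaneously by induction on $n$, since the existence direction and the bounding direction reinforce each other across the recursion. Along the way I would carry a strengthened invariant: a single \emph{control token} circulates through $\mathit{AN}_n$, so that at every reachable marking the control places hold exactly one token in total, and whenever $\pstop^n$ is marked the internal control places of the subnet are empty, leaving tokens only on the data place $\pout^n$. This cleanliness property is what will let me re-apply the induction hypothesis each time the subnet is restarted.

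For the base case $n=0$ I would read the behaviour of $\mathit{AN}_0$ directly off Figure~\ref{pic1}. Firing $\tstart^0$ once moves the single token from $\pstart^0$ to $\pcopy^0$; while $\pcopy^0$ is marked, each firing of $\tcopy^0$ consumes one token from $\pin^0$, produces one on $\pout^0$, and returns the guard to $\pcopy^0$; finally $\tstop^0$ consumes $\pcopy^0$ and adds one last token to $\pout^0$ together with one on $\pstop^0$. Firing $\tcopy^0$ exactly $x$ times and then $\tstop^0$ gives $M(\pout^0)=x+1=\Acker_0(x)$ and $M(\pstop^0)=1$, proving~(1). For~(2), the place $\pin^0$ is never refilled, so $\tcopy^0$ contributes at most $x$ tokens, and $\tstop^0$ fires at most once since it permanently consumes the only guard token on $\pcopy^0$; hence no reachable marking exceeds $x+1$ on $\pout^0$.

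For the inductive step I would exploit the control flow of $\mathit{AN}_{n+1}$ in Figure~\ref{pic2}: $\tstart^{n+1}$ seeds the subnet with value $1$ (one token on each of $\pin^n$ and $\pstart^n$); after the subnet halts with $\pstop^n$ marked, either $\tin^{n+1}$ consumes one token of $\pin^{n+1}$ and, via $\pswap^{n+1}$, lets $\tswap^{n+1}$ feed $\pout^n$ back into $\pin^n$ and $\trestart^{n+1}$ re-mark $\pstart^n$, or $\ttmp^{n+1}$ enters the final phase where $\tcopy^{n+1}$ moves $\pout^n$ onto $\pout^{n+1}$ and $\tstop^{n+1}$ produces $\pstop^{n+1}$. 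For~(1) I run the subnet $x{+}1$ times, invoking induction hypothesis~(1) at each invocation, so the value passes through $1, \Acker_n(1), \Acker_n(\Acker_n(1)),\dots$; after $x$ firings of $\tin^{n+1}$ the value is the $(x{+}1)$-fold iterate $\Acker_n^{(x+1)}(1)$, which equals $\Acker_{n+1}(x)$ by the defining recursion, and copying it out proves the claim.

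The hard part is the upper bound~(2). Using the single-control-token invariant I would argue that the subnet is invoked from a clean configuration at most $x{+}1$ times: $\tstart^{n+1}$ fires once, and every further marking of $\pstart^n$ stems from a $\trestart^{n+1}$ firing, which consumes a $\pswap^{n+1}$ token produced by $\tin^{n+1}$, which in turn consumes one of the $x$ tokens on the never-refilled place $\pin^{n+1}$. Writing $v_k$ for the value on $\pin^n$ when the $k$-th invocation starts, induction hypothesis~(2) bounds the tokens the subnet can ever place on $\pout^n$ by $\Acker_n(v_k)$ --- the crucial point being that the hypothesis bounds \emph{every} reachable marking, not merely the terminal one, so incomplete and interleaved runs are already covered. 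Since $\tswap^{n+1}$ only moves $\pout^n$-tokens into $\pin^n$, this yields $v_{k+1}\le \Acker_n(v_k)$ with $v_1=1$, and by monotonicity of $\Acker_n$ we get $v_k\le \Acker_n^{(k-1)}(1)$; as tokens reach $\pout^{n+1}$ only through $\tcopy^{n+1}$ from the last invocation, their number is at most $\Acker_n^{(x+1)}(1)=\Acker_{n+1}(x)$. I expect the genuine obstacle to be making the cleanliness invariant precise enough to guarantee that each invocation really starts from a marking to which the induction hypothesis applies and that leftover tokens from partial runs cannot accumulate across restarts; the remainder is bookkeeping about the guard places $\pcopy$, $\pswap$, $\ptmp$ acting as a finite-state controller.
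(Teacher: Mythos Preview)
Your proposal follows the same route as the paper: simultaneous induction on $n$, with the base case and the existence direction of the inductive step matching almost verbatim. For the obstacle you correctly isolate---leftover tokens on $\pout^n$ after an incomplete swap---the paper does not build a cleanliness invariant but instead uses the arithmetic inequality $\Acker_n(k)+k' \le \Acker_n(k+k')$: if $y'$ of the $y$ tokens on $\pout^n$ are left behind and only $y-y'$ are swapped into $\pin^n$, then after the next invocation the total on $\pout^n$ is at most $\Acker_n(y-y')+y'\le\Acker_n(y)$, so the bound $\Acker_n^{(x+1)}(1)=\Acker_{n+1}(x)$ propagates without tracking which tokens came from which round. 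Your recursion $v_{k+1}\le\Acker_n(v_k)$ on the $\pin^n$-values alone does not directly bound $\pout^n$ once leftovers accumulate; closing it requires exactly this inequality (applied to $k=v_{k+1}$ and $k'=$ leftover), which is the missing ingredient that replaces the heavier bookkeeping you anticipate.
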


\begin{proof}
    We proof both statements simultaneously by induction on $n$.
    
    \noindent
    \textbf{\sffamily Base case}, $n = 0$:
    We have $\Acker_0(x) = x+1$.
    The only transition that is enabled in $\mathit{AN}_0$ is the starting transition $\tstart^0$.
    Firing it leads to one token on the copy place $\pcopy^0$.
    Now we can fire the copy transition $\tcopy^0$ $x$ times, leading to $x$ tokens on $\pout^0$.
    Finally, we fire the stopping transition $\tstop^0$, which leads to one token on $\pstop^0$ and in total $x+1$ tokens on $\pout^0$.
    This is the computation maximizing the number of tokens on $\pout$.
    Firing $\tcopy^0$ less than $x$ times or not firing $\tstop^0$ leads to less tokens on $\pout^0$.
    
    \noindent
    \textbf{\sffamily Inductive step, $n \mapsto n+1$}:
    Initially, we can only fire the starting transition $\tstart^{n+1}$, creating one token on $\pin^n$ and one token on $\pstart^n$.
    We can now execute the computation of $\mathit{AN}_n$ that creates $\Acker_n(1)$ tokens on $\pout^n$ and one token on $\pstop^n$, which exists by induction.
    We consume one token from $\pin^{n+1}$ and the token on $\pstop^n$ to create a token on $\pswap^{n+1}$ using $\tin^{n+1}$.
    This token allows us to swap all $\Acker_n(1)$ tokens from $\pout^{n}$ to $\pin^{n}$ using $\tswap^{n+1}$.
    After doing this, we move the token from $\pswap^{n+1}$ to $\tstart^n$ using the restart transition $\trestart^{n+1}$.
    We iterate the process to create
    \[
    \Acker_n (\Acker_n (1)) = \Acker_n (\Acker_{n+1} (0)) = \Acker_{n+1} (1)
    \]
    tokens on $\pout^{n+1}$, which we can then swap again to $\pin^{n}$.
    
    We iterate this process $x$-times, which creates $\Acker_{n+1} (x)$ tokens on $\pout^n$, since
    \[
    \Acker_{n+1} (x)
    =
    \underbrace{ \Acker_n ( \ldots \Acker_n }_{x+1 \text{ times}} (1) )
    \ .
    \]
    Note that it is not possible to create more than $\Acker_{n+1} (x)$ on $\pout ^n$.
    Having $y$ tokens on $\pin^n$, we cannot create more than $\Acker_{n} (y)$ tokens on $\pout^n$ by induction.
    Furthermore, if we do not execute $\tswap$ as often as possible, say we leave $y'$ out of $y$ tokens on $\pout^n$, we end up with
    \mbox{$\Acker_n (y - y') + y' \leq \Acker_n (y)$}
    tokens on $\pout^n$, since
    \[
    \Acker_n (k) + k' \leq \Acker_n(k + k')
    \]
    for all $k, k' \in \N$.
    Finally, we move the token on $\tstop^{n+1}$ to $\ptmp^{n+1}$ using $\ttmp^{n+1}$, and then move all $\Acker_n (x)$ tokens to $\pout^{n+1}$ using $\tcopy^{n+1}$.
    We end the computation by firing $\tstop^{n+1}$ to move the token on $\ptmp^{n+1}$ to $\pstop^{n+1}$.
    We now have one token on $\pstop^{n+1}$ and $\Acker_n (x)$ tokens on $\pout^{n+1}$ as required.
    
    This maximizes the number of tokens on $\pout^{n+1}$:
    As already argued, we cannot create more than $\Acker_n (x)$ tokens on $\pout ^n$, and firing $\tcopy^{n+1}$ less than $\Acker_n (x)$ times will lead to less tokens on $\pout^{n+1}$.
\end{proof}
We are now prepared to tackle the proof of Lemma~\ref{ackermann}.

\begin{proof}[Proof of Lemma~\ref{ackermann}]
    Let $n \in \N$.
    We define $N(n)$ to be the Petri net that is obtained from $\mathit{AN}_n$ by adding a place $\text{final}$ and a transition $t_\mathit{final}$ that is labeled by $a$ and moves one token from $\pout^n$ to $\text{final}$.is defined as before.
    The final marking $M_f$ is zero on all places.
    
    By Lemma~\ref{ackerman-computation}, we can create at most $\Acker_n(x)$ tokens on $\pout^n$.
    We can then move a part of these tokens to $\text{final}$, producing up to $\Acker_n(x)$ many $a$s in the process.
    This proves $\lang{N(n),M_0^{(x)},M_f} = \set{ a^k \mid k \leq \Acker_n(x) }$.
    
    Note that the size of $N(n)$ is a constant plus the size of $\mathit{AN}_n$, which is linear in $n$.
    The final and initial marking are linear in $n + \log x$.
\end{proof}


\subsection{BPP Nets}
\label{sec:BPPDC}

We prove an exponential upper bound on the size of the finite automata representing the downward closure of BPP languages. 
Then, we present a family of BPP languages for which the minimal finite automata representing their downward closure are exponential in the size of the input BPP nets.

\paragraph*{Upper Bound}

Assume that the net $N$ in the Petri net instance $(N,M_0,M_f)$ of size $n$ is a BPP net.

\begin{theorem}
\label{DCCompBPP}
    We can construct a finite automaton of size $O(2^{\poly(n)})$ for $\dc{\lang{N,M_0,M_f}}$.
\end{theorem}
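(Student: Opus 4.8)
The plan is to build, for the given BPP net, a finite-state \emph{coverability automaton} with at most single-exponentially many states that recognizes a language whose downward closure is $\dc{\lang{N,M_0,M_f}}$, and then to take the downward closure of this automaton (add an $\varepsilon$-labeled copy of every transition), which preserves the state count. The states are \emph{abstract markings}: for every place we either track an exact token count up to an exponential bound $B = 2^{\poly(n)}$ or record the symbol $\omega$, meaning the place can be pumped to an unbounded value. The guiding intuition, reusing the unfolding $\text{Unf}(N,M_0)$ and the forest from the proof of Proposition~\ref{Proposition:BPPShort}, is that the downward closure arises from a \emph{pump-free core} of exponentially bounded length together with a bounded number of regenerating cycles, each of which contributes an iterated alphabet $\Gamma^*$.

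First I would make the pumpable structure explicit. For each place $p$ let $\Gamma_p = \Set{\lambda(t)}{t \text{ occurs in some computation from one token on } p}$ be the set of labels reachable from a token on $p$; both $\Gamma_p$ and the pumpability of $p$ are decidable by graph reachability in $N$. Along a root-to-leaf branch of the forest, if the same net place recurs, say $p' \causal p''$ with $h(p') = h(p'')$, the intervening transitions form a cycle that can be iterated arbitrarily often; since BPP transitions consume a single token, each iteration respawns all side branches and the emitted letters interleave freely, so in the downward closure the contribution collapses to $\Gamma_p^*$. If, on the other hand, no branch repeats an $h$-label, then by pigeonhole each branch carries at most $\card{P} \le n$ places, and as every transition produces at most $\mathit{max}(F) \le 2^n$ tokens, the entire pump-free forest—hence the word it emits—has size at most $2^{\poly(n)}$. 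This is exactly the bound $B$; general Petri nets lack it, which is why the algorithm of \cite{HMW10} based on Karp--Miller is non-primitive recursive, whereas for BPP the counts occurring before acceleration stay exponential.

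With $B$ in hand I would bound the automaton. The states are the functions $P \to \{0,\dots,B\} \cup \{\omega\}$, of which there are at most $(B+2)^{\card{P}} = 2^{\poly(n)}$. Transitions simulate BPP firing (with $\omega$ absorbing); whenever the coverability-graph acceleration detects a strictly increasing cycle on a place, that place is raised to $\omega$ and self-loops labeled by each $a \in \Gamma_p$ are added, realizing the block $\Gamma_p^*$. A state is accepting if it covers $M_f$, where $\omega$ dominates every value. Emitting the transition labels along accepting runs yields an FSA of size $\bigO{2^{\poly(n)}}$, and its downward closure is $\dc{\lang{N,M_0,M_f}}$. Crucially, sharing abstract markings as states avoids enumerating the (double-exponentially many) exponentially long cores as separate products of an SRE.

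The main obstacle is correctness of the acceleration for the downward closure. Soundness is the easier half: every accepted word lies in $\dc{\lang{N,M_0,M_f}}$, since the $\omega$-loops emit only letters in $\Gamma_p$, which are genuinely producible by iterating the corresponding cycle, and pumping never removes tokens needed to cover $M_f$ because it regenerates $p$ and only adds tokens. Completeness is harder: given a covering computation $M_0 \move{\sigma} M \ge M_f$, a Higman-style argument on the branch labels of its induced forest splits $\sigma$ into a pump-free core of length at most $B$ together with iterated cycles, and the accelerated run over the core, enriched by the $\Gamma_p^*$ self-loops, must be shown to produce a superword of $\lambda(\sigma)$. The delicate point is the concurrent interleaving of several simultaneously active cycles: one has to prove that their combined emissions are captured by the union of the $\Gamma_p^*$ self-loops available at the corresponding $\omega$-state, i.e.\ that in the downward closure the concurrency collapses to a single iterated alphabet.
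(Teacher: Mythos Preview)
Your high-level plan---states $P\to\{0,\ldots,B\}\cup\{\omega\}$ with an exponential $B$, then $\varepsilon$-copies of all transitions---matches the paper's, but you add two devices the paper does without: Karp--Miller-style cycle detection and the $\Gamma_p^*$ self-loops. These additions are exactly what generate the ``delicate'' interleaving-of-cycles issue you flag at the end. The paper's automaton is purely the thresholded state space: a place becomes $\omega$ the moment its count would exceed the explicit bound $c=\tokencount{M_0}\cdot(|P|\cdot m)^{|T|+1}$, with no cycle detection and no extra loops; transitions are just the BPP transitions (with $\omega$ absorbing via the operations $\oplus,\ominus$) in a labeled and an $\varepsilon$ variant. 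With that construction your difficulty assessment is inverted: what you call completeness is a one-line simulation (run the covering computation step by step; thresholding can only help), and the real work is what you call soundness---showing that whenever the automaton has $\omega$ on a place, the net can actually pump that place to supply however many tokens the rest of the run needs. This is the paper's Lemma~\ref{bpppump}: if a reachable $M$ has $M(p)>c$, then for every $j$ there is a computation $\sigma_j$ with $\sigma\subword\sigma_j$ reaching some $M_j\geq M$ with $M_j(p)>j$. Its proof is the forest pigeonhole you sketch (too many $p$-leaves force two equally-labeled join transitions on a single branch, yielding an iterable segment), but packaged as a per-place pumping statement rather than a global core-plus-cycles decomposition of the word; that packaging lets the soundness proof handle several $\omega$-places by simply iterating the lemma, so the concurrent-interleaving difficulty never arises.
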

The key insight
for simulating $N$ by a finite automaton
is the following:
If
during a firing sequence
a marking occurs that has more than $c$ tokens (where $c$ is specified below) in some place $p$, then there has to be a \emph{pump}, a subsequence of the firing sequence that can be repeated to produce arbitrarily many tokens in $p$. 
The precise statement is this, where we use $m = \mathit{max}(F)$ to refer to the maximal multiplicity of an edge.
\begin{lemma}
\label{bpppump}
    Let $M_0\move{\sigma}M$ such that for some place $p\in P$, we have $M(p)>c$ with 
    \[
        c = \tokencount{M_0} \cdot (\card{P}\cdot m)^{(|T| + 1)}
        \ .
    \]
    Then for each $j \in \N$, there is $M_0\move{\sigma_j}M_j$ such that\\
    (1)~$\sigma\subword\sigma_j$,\
    (2)~$M\leq M_j$, and \ 
    (3)~$M_j(p)>j$.
\end{lemma}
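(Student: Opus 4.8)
The plan is to reuse the unfolding/branching-process machinery introduced for Proposition~\ref{Proposition:BPPShort} and to reason about the forest of token causality induced by $\sigma$. First I would fix the unfolding $\text{Unf}(N,M_0)$ together with a run $\tau$ in it satisfying $h(\tau)=\sigma$ and $\mathit{Min}(O)\move{\tau}P_1'$ with $h(P_1')=M$. Since $N$ is a BPP net, every transition consumes at most one token, so the places touched by $\tau$ together with their producing transitions form a forest: the roots are the at most $\tokencount{M_0}$ initial tokens, a firing of a transition $t$ turns one token into at most $\sum_{p}F(t,p)\le \card{P}\cdot m$ children, and the leaves are exactly the tokens of $M$. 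Hence the forest has branching degree at most $b:=\card{P}\cdot m$. Writing $\ell(x)$ for the number of leaves $p'$ with $h(p')=p$ in the subtree rooted at a node $x$, we have $\sum_{\text{roots }x}\ell(x)=M(p)>c=\tokencount{M_0}\cdot b^{\card{T}+1}$, so by pigeonhole some root $x_0$ satisfies $\ell(x_0)>b^{\card{T}+1}$. (A transition on a branch reaching a $p$-leaf whose preset is empty can be fired arbitrarily often and pumps $p$ directly; I treat this degenerate case separately and assume below that every relevant transition has exactly one input place, which also rules out $b=1$ since then $M(p)\le\tokencount{M_0}=c$.)

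The crux is to locate inside this subtree a cycle whose repetition \emph{strictly} increases the number of $p$-leaves: a plain repeated transition is not enough, because a balanced cycle that turns one token on some place $r$ back into one token on $r$ may carry no $p$-leaf in its side and hence pump nothing. To force a \emph{productive} cycle I would descend from $x_0$ always into the heavy child, the child maximizing $\ell$. Along this path $\ell$ can only decrease, it decreases exactly at nodes where a $p$-leaf branches off the heavy path, and at each such drop it decreases by a factor of at most $b$. Since $\ell$ falls from more than $b^{\card{T}+1}$ to $1$, at least $\card{T}+1$ branch-off nodes occur on the heavy path. Each branch-off node $v$ is produced by a transition whose unique input place is the place of $v$; among the first $\card{T}+1$ of them two must use the same transition, yielding branch-off nodes $v_a\causal v_b$ carrying tokens on the same place $r$. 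The segment of $\tau$ from $v_a$ to $v_b$ is then a cycle on $r$ that, at $v_a$, additionally spawns a branch-off whose subtree contains a $p$-leaf, so $\ell(v_a)>\ell(v_b)$.

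Finally I would build $\sigma_j$ by the standard unfolding/subtree-replacement pump. As $v_a$ and $v_b$ carry tokens on the same place $r$, the subtree rooted at $v_a$ is a legal continuation from $v_b$; replacing the subtree at $v_b$ by a fresh copy of the subtree at $v_a$ therefore yields a larger branching process, and iterating this $k$ times nests $k$ copies of the cycle. Reading off the induced run and applying $h$ gives $\sigma_k$. Each copy reinserts the cycle's transitions while leaving all original firings present and in their original order, so $\sigma\subword\sigma_k$; no token is ever removed, so $M\le M_k$; and each copy contributes at least one extra $p$-leaf (the branch-off guaranteed in the previous paragraph), whence $M_k(p)\ge M(p)+k$. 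Taking $k=j$ then gives $M_k(p)>j$ and establishes conditions~(1)--(3).

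The main obstacle is exactly the step isolated in the second paragraph: guaranteeing that the extracted cycle is productive for $p$. A naive Dickson- or Rackoff-style pigeonhole only produces a balanced cycle, which need not increase $M(p)$; the acyclic multiplier nets, in which a single token fans out to $m^{\card{T}}$ leaves with no repeated transition, show both that the threshold $c$ is tight and that the argument must exploit the branching-\emph{weight} count $\ell$ rather than mere path length, so as to pin down a branch-off that is straddled by a repeated transition.
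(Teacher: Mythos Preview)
Your proof is correct and follows essentially the same route as the paper's. Both arguments pass to the unfolding, use pigeonhole on the $\tokencount{M_0}$ roots to isolate a tree with more than $(\card{P}\cdot m)^{\card{T}+1}$ many $p$-leaves, and then locate along one branch at least $\card{T}+1$ transitions at which the set of $p$-leaves genuinely splits, so that two of them carry the same label in $N$ and a productive pump can be extracted.

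The only differences are presentational. Where the paper states an abstract tree lemma (a tree with $x$ leaves and out-degree $\le k$ has a branch with $\ge\log_k x$ nodes of out-degree $\ge 2$) and calls the splitting transitions \emph{join-transitions}, you unfold that lemma into the heavy-path descent, which is its standard proof. Where the paper builds $\sigma_j$ by explicit sequence splicing --- inserting $j$ copies of a path $\sigma^{\mathit{pump}}$ before the matched transition and $j$ copies of a generating path $\sigma^{\mathit{gen}}$ afterwards --- you phrase the same operation as iterated subtree replacement in the occurrence net. The paper's splicing makes the verification of $\sigma\subword\sigma_j$ and $M\le M_j$ a little more direct; your subtree view makes the causal correctness of the pumped run more evident. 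One small wording slip: your branch-off nodes should be associated with the transition \emph{consuming} them (not \emph{producing} them), which is what forces $h(v_a)=h(v_b)$ via the unique input place of a BPP transition; the remainder of your argument already uses this reading. Your identification of the real obstacle --- that a naive repeated transition gives only a balanced cycle, so one must count $p$-leaf weight rather than path length --- matches exactly what the paper's join-transition mechanism is designed to overcome.
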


\begin{proof}
    We consider the unfolding $(O, h)$ of $N$, as defined in Subsection~\ref{Subsec:UpBPP}.
    Let $\sigma'$ be a firing sequence of~$O$ induced by $\sigma$, i.e. $h(\sigma') = \sigma$ (where $h$ is extended to sequences of transitions in the obvious way), and $\sigma'$ can be fired from the marking $M_0'$ of $O$ corresponding to $M_0$.
    
    Executing $\sigma'$ leads to a marking $M'$, such that the sum of tokens in places $p'$ of $O$ with $h(p') = p_N$ is equal to $M(p_N)$ for each place $p_N$ of $N$.
    In particular, this holds for the place $p$ on which we exceed the bound $c$:
    \[
    \sum_{\substack{p' \in P',\\  h(p') = p}} M' (p') = M(p) > c
    \ .
    \]
    Recall that $O$ is a forest, and each tree in it has a minimal place $r' \in \mathit{Min}(O)$ that corresponds to a token assigned to a place of $N$ by $M_0$, \ie $M_0' (r') = 1$.
    We fix the root node $r$ of the tree $\calT$ with a maximal number of leaves that correspond to place $p$ (called \emph{$p$-leaves} in the following), \ie the root node $r$ such that the number of places $p'$ with $h(p') = p$, $M' (p')= 1$ in the corresponding tree is maximal.
    Note that the number of $p$-leaves in this tree is at least
    \[
        c_1 = \frac{c}{\tokencount{M_0}} = (\ell \cdot m)^{(|T| + 1)}
        \ .
    \]
    Since there are only $\tokencount{M_0}$ many trees in the forest, if all trees have strictly less than $c_1$ many $p$-leaves, the whole forest cannot have $c$ many $p$-leaves.
    
    We now consider the subtree $\calT_p$ of $\calT$ that is defined by the $p$-leaves in $\calT$, \ie the tree one gets by taking the set of $p$-leaves $X$ in $\calT$ and all places and transitions $\predec{X}$ that are their predecessors.
    This tree has the following properties:
    \begin{enumerate}[(i)]
        \item
        Its leaves are exactly the $p$-leaves in $\calT$.
        \item
        Each place in it has out-degree $1$ if it is not a $p$-leaf.
        That the out-degree is at most $1$ is clear by how $O$ was defined: Since each place only carries at most one token, it can be consumed by at most one transition during the run, and we don't consider the transitions that are not fired in the run in $\calT_p$.
        That the out-degree of the places that are not $p$-leaves is exactly $1$ is because we only consider transitions leading to $p$-leaves in $\calT_p$.
        \item
        Each transition has out-degree at most $m \cdot \ell$.
        (In the worst case, each transition creates $m$ tokens in each of the $\ell$ places, which is modeled in $O$ by having one place for each token that is produced.)
    \end{enumerate}
    \vspace*{2pt}
    
    \noindent
    We will call a transition in $\calT_p$ of out-degree at least $2$ a \emph{join-transition}, since it joins (at least) two branches of the tree that lead to a $p$-leaf.
    Our goal is to show that there is a branch of $\calT_p$ in which at least $|T| + 1$ many join-transitions occur.
    
    \vspace*{4pt}
    
    \noindent\textbf{\sffamily Claim:} Let $\calT$ be a tree with $x$ leaves in which all nodes have out-degree at most $k$.
    Then $\calT$ has a branch with at least $\log_k x$ nodes of out-degree at least $2$.
    
    Towards a proof of the claim, assume that the maximal number of nodes of out-degree greater than $2$ in any branch of the tree is $h < \log_k x$.
    To maximize the number of leaves, we assume that the number of such nodes is exactly $h$ in every branch, and all nodes (but the leaves) have out-degree $k$.
    The number of leaves in this tree is $k^h$, but since $h < \log_k x$, this is less than $x$: $k^h < k^{\log_k x} = x$.
    
    Instantiating the claim for $x = c_1$ and $k = m \cdot |M_0|$ yields that $\calT_p$ has a branch with at least $c_2 = \log_{m \cdot \tokencount{M_0}} c_1$ many join-transitions, and by the definition of $c$, \mbox{$c_2 = |T| + 1$.}
    Since the original BPP net has only $|T|$ many different transitions, there have to be join-transitions $\tau$ and $\tau'$ in the same branch with $h(\tau) = h(\tau') = t$ for some transition $t$ of the original net.
    
    Since $\tau$ was a join-transition, it has at least two child branches $b_1$ and $b_2$ that lead to a $p$-leaf.
    We assume without loss of generality that $b_1$ is the branch on which $\tau'$ occurs, and $b_2$ is another branch.
    We consider the sequence of transitions $\sigma_p^{\mathit{pump}}$ that occur on $b_1$ when going from $\tau$ to $\tau'$ (including $\tau$, not including $\tau'$).
    We also consider the sequence of transitions $\sigma_p^{\mathit{gen}}$ that occur on $b_2$ when going from $\tau$ to a $p$-leave (not including $\tau$).
    Let $\sigma^{\mathit{pump}}$ and $\sigma^{\mathit{gen}}$ be the corresponding sequences in the original BPP net, \ie $\sigma^{\mathit{pump}} = h(\sigma_p^{\mathit{pump}}), \sigma^{gen} = h(\sigma_p^{gen})$.
    
    We now modify the run $\sigma$ in the original net to obtain the desired amount of $j$ tokens.
    We decompose $\sigma = \sigma_1 . t .\sigma_2$, where $t$ is the transition that corresponds to $\tau$ in $\calT_p$.
    We extend the run to
    \[
    \sigma_j
    =
    \sigma_1 .
    \underbrace{\sigma^{\mathit{pump}} \ldots \sigma^{\mathit{pump}}}_{j \text{ times}}
    . t. \sigma_2 .
    \underbrace{\sigma^{\mathit{gen}} \ldots \sigma^{\mathit{gen}}}_{j \text{ times}}
    \ .
    \]
    Obviously, $\sigma$ is a subsequence of $\sigma_j$, and therefore satisfies the required Property~(1).
    We have to argue why $\sigma_j$ is a valid firing sequence,
    why firing $\sigma_j$ leads to a marking greater than $M$ (Property~(2)) and why it generates at least $j$ tokens in place $p$ (Property~(3)).
    
    The latter is easy:
    $\sigma^{\mathit{gen}}$ corresponds to a branch of $\calT_p$ leading to a $p$-leaf, \ie firing it creates one additional token in $p$ that will not be consumed by another transition in $\sigma$.
    
    Note that up to $\sigma_1$, $\sigma$ and $\sigma_j$ coincide.
    Since $t$ could be fired after $\sigma_1$, $\sigma^{\mathit{pump}}$ can be fired after $\sigma_1$: $t$ corresponds to transition $\tau$ in $O$, and so does the first transition in $\sigma^{\mathit{pump}}$.
    Since $\sigma^{\mathit{pump}}$ was created from a branch in the tree $\calT_p$, each transition will consume the token produced by its immediate predecessor.
    The last transition creates a token in the place that feeds transition $\tau'$, but $h(\tau) = t = h(\tau')$, so after firing $\sigma^{\mathit{pump}}$, we can fire $t$ again. (Either as the first transition of the next $\sigma^{\mathit{pump}}$, or after all pumps have been fired.)
    Furthermore, $t$ corresponds to the join-transition $\tau$, i.e. it does create a token in the place that is the starting point of $\sigma^{\mathit{gen}}$.
    This token will not be consumed by any other transition in the sequence $t. \sigma_2$, so after firing $\sigma^{\mathit{pump}}$ $j$ times, we can indeed fire $\sigma^{\mathit{gen}}$ $j$ times.
    
    As argued above, firing $\sigma^{\mathit{gen}}$ and $\sigma^{\mathit{pump}}$ has a non-negative effect on the marking, so the marking one gets by firing $\sigma'$ is indeed greater than the marking $M$.
\end{proof}
It remains to use Lemma~\ref{bpppump} to prove Theorem~\ref{DCCompBPP}.

\begin{proof}[Proof of Theorem~\ref{DCCompBPP}]
    We will state the construction of the automaton, prove its soundness and that the size of the automaton is as required.
    The automaton for $\dc{\lang{N,M_0,M_f}}$ is the state space of $N$ with token values beyond $c$ set to $\omega$.
    For every transition, we also have an $\varepsilon$-variant to obtain the downward closure. 

    More formally, $A = (\Sigma, Q,\to_A,\qinit,F)$ is defined as follows:
    Its set of states is $Q = P \to (\zeroto{c} \cup \set{\omega})$, where
    \mbox{$c = \tokencount{M_0} \cdot (\card{P}\cdot m)^{(|T| + 1)}$} as in Lemma~\ref{bpppump}.
    This means each state is a marking that will assign to each place a number of tokens up to $c$ or $\omega$.
    For each transition $t$ of the BPP net $N$ and each state $q \in Q$ such that $q(p) \geq F(p,t)$ (where we define $\omega > k$ to be true for all $k \in \N$), $\to_A$ contains two transitions $(q,\lambda(t),q')$ and $(q,\varepsilon,q')$.
    Here, $q'$ is defined by
    \[
        q'(p) = (q(p) \ominus F(p,t)) \oplus F(t,p)
        \ ,
    \]
    for all $p \in P$, where $\oplus$ and $\ominus$ are variants of $+$ and $-$ that treat $\omega$ as infinity:
    $x \oplus y = x+y$ if $x+y<c$, $x\oplus y=\omega$ otherwise. Similarly, $x\ominus y = x-y$ if $x\neq \omega$.
    Note that if $t$ was already labeled by $\varepsilon$, the two transitions coincide.
    The initial state is defined by $\qinit(p) = M_0(p)$ for all $p \in P$.
    A state $q \in F$ is final if it covers the final marking $M_f$ of $N$, \ie $q(p) \geq M_f(p)$ for all places $p$.
    Again, we assume $\omega > k$ to hold for all $k \in \N$.
    
    We prove that indeed $\lang{A} = \dc{\lang{N,M_0,M_f}}$ holds.
    First assume $w \in \dc{\lang{N,M_0,M_f}}$.
    Then there is a computation $\pi = M_0 \move{\sigma} M$ of $N$ such that $M \geq M_f$ and $w \subword \lambda(\sigma)$.
    We can delete transitions in $\sigma$ to obtain a sequence of transitions $\tau$ with $\lambda(\tau) = w$.
    (One may not be able to fire $\tau$.)
    We construct a run $\rho$ of the automaton $A$ starting in $q_{0}$ by replacing transitions in $\sigma$ by a corresponding transition of the automaton.
    For the transitions $t$ present in $\tau$, we pick the variant of the transitions labeled by $\lambda(t)$, for the ones not present in $\tau$, we pick the $\varepsilon$-labeled variant.
    Note that $\rho$ is a valid run of $A$ because $\pi$ was a computation of $N$.
    The run $\rho$ ends in a state $q_\rho$ such that for each place $p$, either $q_\rho(p) = M(p)$ holds, or $q(p) = \omega$.
    Since $M \geq M_f$, this means that $q_\rho$ is final.
    We have constructed an accepting run of $A$ that produces the word $w$.
    
    Now assume that $w \in \lang{A}$ is a word accepted by the automaton.
    Let $\rho$ be an accepting run and let $q_0, q_1, \ldots, q_s$ be the states occurring during $\rho$.
    We prove that there is a computation $\pi = M_0 \move{\sigma} M$ of $N$ such that $M \geq M_f$ and $w \subword \lambda(\sigma)$.
    Assume the final state $q_s$ does not assign $\omega$ to any place, $q_s (p) \neq \omega$ for all $p \in P$.
    Note that in this case, we have $q_i (p) \neq \omega$ for all $i \in \zeroto{s}$ and all $p \in P$, since $q_i(p) = \omega$ implies $q_{j} (p) = \omega$ for all $j \in \fromto{i}{s}$.
    In this case, we can easily construct a sequence of transitions $\sigma$ corresponding to $\rho$:
    For each transition in $\rho$, we take the corresponding transition of $N$.
    Note that the transition in $\rho$ can be labeled by $\varepsilon$ while the transition of $N$ is not labeled by $\varepsilon$.
    Still, $w \subword \lambda(\sigma)$ will hold.
    Furthermore, $q_s$ is a marking for $N$ (since $\omega$ does not occur), and since $q_s$ was final, $q_s \geq M_f$ has to hold.
    
    Now assume that there is a unique place $p$ such that \mbox{$q_s (p) = \omega$}.
    Let $i \in \zeroto{s}$ be the first index such that $q_i(p) = \omega$.
    We decompose the run $\rho = \rho_1.\rho_2$, where $\rho_1$ is the prefix that takes the automaton from state $q_0$ to $q_{i}$.
    As in the previous case, we may obtain sequences of transitions $\sigma_1$ and $\sigma_2$ that correspond to $\rho_1$ and $\rho_2$.
    In particular, we have $w \subword \lambda(\sigma_1).\lambda(\sigma_2)$.
    The first sequence $\sigma_1$ is guaranteed to be executable, \ie $\pi_1 = M_0 \move{\sigma_1} M_1$ is a valid computation for some $M_1$.
    Since  $q_i (p) = \omega$, the transition relation of the automaton guarantees that $M_1 (p) > c$.
    
    It might not be possible to fire $\sigma_2$ from $M_1$, because $\sigma_2$ may consume more than $c$ tokens from place $p$.
    Let
    \[
        d = \sum_{j \in \oneto{\card{\sigma_2}}} F(p,t_j) + M_f (p)
        \ .
    \]
    where $\sigma_2=t_1. t_2 \ldots t_{|\sigma_2|}$.  The number $d$ is certainly an upper bound for the number of tokens needed in place $p$ to be able to fire $\sigma_2$ and end up in a marking $M_2$ such that $M_2(p) \geq M_f (p)$.
    We apply Lemma~\ref{bpppump} to obtain a supersequence $\sigma_1'$ of $\sigma_1$ with $M_0 \move{\sigma_1'} M_1'$ where $M_1' \geq M_1$ and $M_1 (p) \geq d$.
    Now consider the concatenation $\sigma = \sigma_1' . \sigma_2$.
    Since the marking $M_1'$ has enough tokens in place $p$, $\sigma$ is executable and $M_0 \move{\sigma} M$, where $M \geq M_f$.
    
    If the final state $q_p$ assigns $\omega$ to several places, the above argumentation has to be applied iteratively to all such places.
    
    It remains to argue argue that the size of the automaton is in $\bigO{2^{\poly (n)}}$.
    The size of the automaton is certainly polynomial in its number of states $\card{Q}$.
    We have
    \begin{align*}
    \card{Q}
    &= \card{ P \to (\zeroto{c} \cup \set{\omega}) }
    = \card{\zeroto{c} \cup \set{\omega}}^\ell
    = (c + 2)^\ell\\
    &= \left(\tokencount{M_0}(\card{P}\cdot m)^{(|T| + 1)} + 2 \right)^\ell
    \leq \left( (\ell \cdot 2^n) (\ell\cdot 2^n)^{(|T| + 1)} + 2 \right)^\ell\\
    &\leq \left( (2^{(n+1)})^{(n + 2)} + 2 \right)^n
    = \left( 2^{(n+1) \cdot (n+2)} + 2 \right)^n\\
    &\leq (2^{(n+1) \cdot (n+2)})^n \cdot 2^n 
    \leq 2^{(n+1) \cdot (n+2) \cdot n + 1}
    \in \bigO{2^{\poly(n)}}
    \ .
    \end{align*}
\end{proof}

\paragraph*{Lower Bound}

Consider the family of BPP nets from Lemma~\ref{BPPHard} with $\lang{N(n),M_0,M_f} = \{ a^{2^n} \}$ for all $n\in\N$. 
The minimal finite state automata recognizing the downward closure $\{ a^{i} \mid i \leq 2^n \}$ has at least $2^n$ states.


\section{SRE Inclusion in Downward Closure}
\label{Section:SRED}

The downward closure of a Petri net language is hard to compute.  
We therefore propose to under-approximate it by an SRE as follows. 
Assume we have a heuristic coming up with a candidate SRE that is supposed to be an under-approximation in the sense that its language is included in the downward closure of interest. 
The problem we study is the algorithmic task of checking whether the inclusion indeed holds.
If so, the SRE provides reliable (must) information about the system's behavior, behavior that is guaranteed to occur. 
This information is useful for finding bugs.

\begin{problem}
	\problemtitle{SRE Inclusion in Downward Closure (\sred)}
	\probleminput{A Petri net instance $(N,M_0,M_f)$, an SRE $sre$.}
	\problemquestion{$\lang{sre} \subseteq \dc{\lang{N,M_0,M_f}} $?}
	
\end{problem}

\subsection{Petri Nets}

\begin{theorem}
\label{Theorem:SREIPN}
    \sred\ is $\EXPSPACE$-complete for Petri nets.
\end{theorem}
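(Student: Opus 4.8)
The plan is to prove $\EXPSPACE$-completeness by establishing the upper bound and the lower bound separately, and for the upper bound I would reduce \sred{} to the simultaneous unboundedness problem (\suppn) for Petri nets, which is $\EXPSPACE$-complete by~\cite{Demri2013}. The starting observation is that the given SRE $sre$ is a finite sum of products $\product_1 + \cdots + \product_r$, and inclusion $\lang{sre} \subseteq \dc{\lang{N,M_0,M_f}}$ holds if and only if $\lang{\product_i} \subseteq \dc{\lang{N,M_0,M_f}}$ for every $i$. Since $\EXPSPACE$ is closed under the polynomially many conjunctions arising here, it suffices to decide inclusion of a single product. So the first step is to reduce to the case of one product $\product$.

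The heart of the upper bound is handling a single product $\product$, which is a concatenation of atoms of the form $a$, $(a+\varepsilon)$, and $\Gamma^*$. The key insight is that $\lang{\product} \subseteq \dc{\calL}$ does \emph{not} require checking all (infinitely many) words of $\lang{\product}$: because $\dc{\calL}$ is downward closed, it suffices to verify that arbitrarily long iterations of each $\Gamma^*$ block can be realized simultaneously inside a single covering computation of $N$ whose label has $\product$'s skeleton as a subword. Concretely, I would build a product Petri net $N'$ that runs $N$ while forcing its labeled transitions to be consumed in the order dictated by $\product$: the mandatory letters $a$ and optional letters $(a+\varepsilon)$ are matched in sequence, and each iteration block $\Gamma_j^*$ is monitored by a dedicated ``counter'' place $p_j$ that is incremented every time a letter of $\Gamma_j$ is read in the corresponding phase. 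Reaching the covering condition in $N$ while having pumped each $p_j$ unboundedly then corresponds exactly to $\lang{\product} \subseteq \dc{\lang{N,M_0,M_f}}$. This is precisely a simultaneous unboundedness question: asking whether all the counter places $p_1, \dots, p_s$ can be made simultaneously unbounded (each iteration phase can produce arbitrarily many letters) while the original net still covers $M_f$. Since $N'$ has size polynomial in $\card{(N,M_0,M_f)} + \card{sre}$, this yields an $\EXPSPACE$ procedure.

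For the lower bound I would reduce from the coverability (equivalently the emptiness of the covering language) problem for Petri nets, which is $\EXPSPACE$-hard by Lipton's construction~\cite{Lipton}. Given an instance of coverability, I want to arrange that a \emph{fixed} trivial SRE, say $\varepsilon$ or $\Sigma^*$ over a suitable alphabet, is included in the downward closure exactly when the original net can cover its target. The cleanest route is: $\varepsilon \in \dc{\lang{N,M_0,M_f}}$ if and only if $\lang{N,M_0,M_f} \neq \emptyset$ if and only if $M_f$ is coverable from $M_0$. Thus \sred{} is at least as hard as Petri net coverability, giving $\EXPSPACE$-hardness, and together with the upper bound we obtain completeness.

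The main obstacle I anticipate is the precise design of the product net $N'$ for a single product so that the simultaneous unboundedness of the counter places faithfully captures ``every word of $\lang{\product}$ lies in $\dc{\calL}$.'' The subtlety is the interplay between the phase structure of $\product$ (forcing the iteration blocks to be read in order and only within their own phase) and the fact that $N$ may produce letters of $\Gamma_j$ outside the intended phase; the synchronization gadget must allow $N$ to fire its $\varepsilon$-transitions and off-alphabet transitions freely while rigidly sequencing the letters that are supposed to match $\product$. Getting this correspondence exactly right---and arguing both directions of the equivalence between a valid $\suppn$ witness and the inclusion---is where the real work lies; the complexity bookkeeping and the hardness reduction are comparatively routine.
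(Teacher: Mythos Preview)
Your overall architecture matches the paper's: split the SRE into products, reduce each product to an instance of \suppn\ via a synchronized product net with counter places, and get hardness from coverability. However, the counter construction as you describe it has a genuine gap.

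You propose one counter $p_j$ per block $\Gamma_j^*$, incremented ``every time a letter of $\Gamma_j$ is read in the corresponding phase.'' Unboundedness of $p_j$ then only says that \emph{some} letters of $\Gamma_j$ can be read arbitrarily often in phase~$j$; it does not guarantee that \emph{every} letter of $\Gamma_j$ appears, let alone that arbitrary words of $\Gamma_j^*$ occur as subwords. Concretely, if $\Gamma_j=\{a,b\}$ and the net can produce $a^n$ in phase~$j$ for every $n$ but never a $b$ there, your $p_j$ is unbounded yet $ab\notin\dc{\lang{N,M_0,M_f}}$ restricted to that position, so the inclusion fails. The paper closes this gap by a \emph{linearization} step: it replaces $\Gamma_j^*$ by $(\Prj{w_\Sigma}{\Gamma_j})^*$, a fixed word enumerating $\Gamma_j$, and the counter $p_{\Gamma_j}$ is incremented only upon completing one full copy of $\Prj{w_\Sigma}{\Gamma_j}$. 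Unboundedness of $p_{\Gamma_j}$ then means $(\Prj{w_\Sigma}{\Gamma_j})^n$ is a subword of some phase-$j$ segment for every $n$, and since every $w\in\Gamma_j^*$ satisfies $w\subword(\Prj{w_\Sigma}{\Gamma_j})^{|w|}$, this correctly captures $\Gamma_j^*\subseteq\dc{\cdot}$. This linearization (Lemma~\ref{Linearization}) is the missing idea in your proposal; the synchronization issue you flag as the main obstacle is comparatively straightforward (the paper's right-synchronized product lets $N$ fire freely and only constrains the automaton side).

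A second, smaller point: \suppn\ has no built-in final marking, so ``reaching the covering condition in $N$ while having pumped each $p_j$ unboundedly'' is not directly a \suppn\ query. The paper handles this by adding a fresh place $p_f$ that can be pumped only after a transition consuming $M_f$ has fired, and then includes $p_f$ among the places required to be simultaneously unbounded. You should make this explicit.
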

Hardness is due to the hardness of coverability~\cite{Lipton}.

\begin{lemma}
\label{Lemma:SREIPNcomplete}
    \sred\ is $\EXPSPACE$-hard for Petri nets.
\end{lemma}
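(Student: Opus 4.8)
The plan is to reduce from the coverability problem for Petri nets, which is $\EXPSPACE$-complete by Lipton's theorem~\cite{Lipton}. Given a Petri net instance $(N,M_0,M_f)$, coverability asks whether $M_f$ is coverable from $M_0$, i.e. whether $\lang{N,M_0,M_f} \neq \emptyset$. I want to turn an instance of this problem into an instance of \sred\ in polynomial time, so that the answer to coverability is preserved.

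The key observation is that the emptiness of a covering language is exactly captured by a trivial SRE inclusion. First I would modify $N$ so that it emits a single distinguished output letter, say $b \notin \Sigma$, on some transition that can only fire once a covering marking has been reached. Concretely, I would add a fresh place $\pstop$ and a transition $\tstop$ labeled by $b$ whose preset consumes exactly the tokens of $M_f$ (and reproduces them, or moves to a sink), so that $\tstop$ is enabled precisely when $M_f$ is covered; $\tstop$ places a token on a new final place, and the new final marking requires one token there. Call the resulting instance $(N', M_0, M_f')$. Then $\lang{N', M_0, M_f'}$ is nonempty if and only if $M_f$ was coverable in $N$, and moreover every accepted word contains the letter $b$. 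Consequently $b \in \dc{\lang{N', M_0, M_f'}}$ holds if and only if $M_f$ is coverable. I would take the SRE $\sre = b$ (a single-letter product), whose language is $\set{b}$, and ask whether $\lang{\sre} \subseteq \dc{\lang{N',M_0,M_f'}}$. This inclusion holds exactly when $b$ is a subword of some accepted word, which is exactly coverability of $M_f$ in $N$.

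For correctness I would argue both directions. If $M_f$ is coverable, there is a computation reaching a marking $\geq M_f$; extending it by firing $\tstop$ yields an accepted word of $N'$ containing $b$, so $b$ lies in the downward closure and the inclusion holds. Conversely, if the inclusion holds then $b \in \dc{\lang{N',M_0,M_f'}}$, so some accepted word has $b$ as a subword, which requires $\tstop$ to have fired, which in turn requires $M_f$ to have been covered in $N$. The construction is clearly polynomial in $\card{(N,M_0,M_f)}$: I add a constant number of places and transitions, and the flow arcs duplicating $M_f$ are bounded by $\card{M_f}$, and $\card{\sre} = 1$.

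The main thing to be careful about is the encoding detail of the gadget that detects covering without destroying the rest of the reachability structure — one must ensure $\tstop$ genuinely fires only after $M_f$ is covered and that firing it does not spuriously enable or disable behavior affecting coverability. Using a transition whose preset equals $M_f$ and whose postset returns those tokens plus one token on the new final place handles this cleanly, since enabledness of such a transition is equivalent to the current marking covering $M_f$. I expect this bookkeeping to be the only delicate point; the reduction itself is straightforward and the hardness follows directly from the $\EXPSPACE$-hardness of coverability.
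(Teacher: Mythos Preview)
Your reduction from coverability is correct and follows the same high-level idea as the paper. The paper's execution is simpler, though: rather than adding a gadget transition and a fresh letter $b$, it simply relabels every transition of $N$ by $\varepsilon$. Then $\lang{N,M_0,M_f}$ is either $\{\varepsilon\}$ (if $M_f$ is coverable) or $\emptyset$ (otherwise), and one asks whether the SRE $\emptyset^*$, whose language is $\{\varepsilon\}$, is contained in $\dc{\lang{N,M_0,M_f}}$. This sidesteps entirely the bookkeeping about $\tstop$ not disturbing reachability that you flagged as the delicate point; your gadget handles that issue correctly, but the paper's version shows it is unnecessary.
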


\begin{proof}
    We reduce the $\EXPSPACE$-complete coverability problem for Petri nets.
    Given an Petri net instance $(N,M_0,M_f)$, where $N$ is an unlabeled net, we equip $N$ with the labeling $\lambda(t) = \varepsilon$ for all transitions $t$.
    We have that $M_f$ is coverable from $M_f$ if and only if $\lang{N,M_0,M_f} = \set{\varepsilon}$.
    If $M_f$ is not coverable, the language is empty)
    Thus, we have that $M_f$ is coverable iff $\set{ \varepsilon } \subseteq \dc{\lang{N,M_0,M_f}}$.
    To conclude the proof, note that $\set{ \varepsilon }$ is the language of the SRE $\emptyset^*$.
\end{proof}
For the upper bound, we take inspiration from a recent result of Zetzsche~\cite{Zetzsche15}. 
He has shown that, for a large class of models, computing the downward closure is equivalent to deciding an unboundedness problem. 
We use a variant of this problem that comes with a complexity result. 
The \emph{simultaneous unboundedness problem for Petri nets} (\suppn) is, given a Petri net $N$, an initial marking $M_0$, and a subset $X \subseteq P$ of places, decide whether for each $n \in \N$, there is a computation $\sigma_n$ such that $\TEnable{M_0}{\sigma_n}{M_{\sigma_n}}$ with $M_{\sigma_n}(p) \geq n$ for all places $p \in X$. In \cite{Demri2013}, Demri has shown that this problem is $\EXPSPACE$-complete.

\begin{theorem}[\cite{Demri2013}]
    \suppn\ is $\EXPSPACE$-complete.
\end{theorem}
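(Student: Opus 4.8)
The final statement is the theorem (cited from Demri) that \suppn\ is $\EXPSPACE$-complete. Although it is attributed to Demri, let me sketch how I would approach proving it, assuming only the machinery developed so far in the excerpt plus standard facts about Petri nets.

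\paragraph*{Overall approach.}
The plan is to prove membership and hardness separately. For \textbf{hardness}, I would reduce from the ordinary coverability problem, which is $\EXPSPACE$-hard by Lipton's construction (Lemma~\ref{lipton} relies on exactly this). Given a Petri net instance $(N,M_0,M_f)$, I would build a new net $N'$ together with a distinguished place set $X$ so that the places in $X$ can be made simultaneously unbounded if and only if $M_f$ is coverable in $N$. The idea is to let $N'$ first simulate $N$, and once a marking covering $M_f$ is reached, enable a ``pumping'' gadget: a fresh transition (guarded by a token that is produced only after $M_f$ is covered) that repeatedly adds one token to every place in $X$. If $M_f$ is coverable, this gadget can be fired arbitrarily often, driving all places of $X$ beyond any bound $n$; if $M_f$ is not coverable, the gadget is never enabled and the places in $X$ stay bounded. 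Since the reduction is polynomial and Lipton's lower bound already yields $\EXPSPACE$-hardness for coverability, this transfers $\EXPSPACE$-hardness to \suppn.

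\paragraph*{Membership in \EXPSPACE.}
For the \textbf{upper bound}, I would show simultaneous unboundedness can be decided in exponential space using a Rackoff-style bound like the one proved in Lemma~\ref{rackoffbound}. The key observation is that $X$ is simultaneously unbounded iff for \emph{one} sufficiently large threshold $n_0$ there is a computation reaching a marking with at least $n_0$ tokens on every place of $X$; by monotonicity (firing the witnessing sequence from a larger marking, or iterating a pumping subsequence as in the argument behind Lemma~\ref{bpppump}) the existence of arbitrarily large witnesses follows from a single large-enough one. Concretely, I would argue that if all places of $X$ can exceed the Rackoff threshold $2^{2^{cn\log n}}$ simultaneously, then a repeated marking on the relevant coordinates exposes a strictly positive pump on $X$, which can be iterated to surpass any $n$. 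This reduces \suppn\ to a reachability-of-large-marking question bounded by a doubly-exponential length, which a nondeterministic machine can check in exponential space (guessing the firing sequence while storing only the current marking, whose token counts are at most doubly exponential and hence need only exponentially many bits); Savitch's theorem then removes the nondeterminism.

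\paragraph*{Main obstacle.}
The hardest part will be the correctness of the threshold argument for membership: I must prove that simultaneous unboundedness is \emph{witnessed by a single bounded-length computation}, i.e.\ that crossing one fixed threshold on all places of $X$ already forces genuine unboundedness rather than a transient spike. This requires a pumping lemma on the covering coordinates, extending the single-place pump of Lemma~\ref{bpppump} (which was stated for BPP nets) to general Petri nets and to \emph{several} places at once. The careful point is that the pump extracted for one place of $X$ must have a nonnegative net effect on every other place of $X$ simultaneously; establishing the existence of such a uniformly-pumping subsequence, via a suitable Dickson/Rackoff argument on the vector of $X$-coordinates, is the technical crux. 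Once that is in place, the length and space bounds follow routinely from the estimates already obtained in Lemma~\ref{rackoffbound}.
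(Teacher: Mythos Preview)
The paper does not prove this theorem; it is cited from \cite{Demri2013} and used as a black box in the reduction that follows. There is nothing in the paper to compare your proposal against.

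On the substance of your sketch: the hardness reduction from coverability is standard and correct. The membership argument, however, rests on a claim that fails for general Petri nets. You propose that ``crossing one fixed [doubly-exponential] threshold on all places of $X$ already forces genuine unboundedness.'' But the Mayr--Meyer construction invoked in this very paper (Lemma~\ref{ackermann}) gives, for each $n$, a net of size polynomial in $n$ containing a place that is \emph{bounded} yet reaches $\Acker_n(x)$ tokens --- far beyond any doubly-exponential threshold. Hence reaching a large marking on $X$ does not certify unboundedness, and no Dickson/Rackoff argument applied only to the $X$-coordinates will manufacture the strictly-positive pump you need. You also appeal to Lemma~\ref{bpppump}, but that lemma is specific to BPP nets and its proof exploits the forest structure of BPP unfoldings; it does not transfer to arbitrary Petri nets.

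What Demri bounds is not the length of a computation reaching a high marking, but the length of the shortest \emph{pumping witness}: roughly, a self-covering computation $M_0\move{\sigma_1}M_1\move{\sigma_2}M_2$ with $M_1\leq M_2$ on all places and strict increase on every $p\in X$ (or a suitable iterated variant). Establishing that simultaneous unboundedness is equivalent to the existence of such a witness is where the real work lies; the Rackoff-style length bound is then applied to that witness, not to a threshold-crossing run.
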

We turn to the reduction of the inclusion problem \sred\ to the unboundedness problem \suppn.  
Since SREs are choices among products, an inclusion \mbox{$\lang{\sre}\subseteq \dc{\lang{N, M_0, M_f}}$} holds iff $\lang{\product}\subseteq \dc{\lang{N, M_0, M_f}}$ holds for all products $\product$ in $\sre$.  
Since $\dc{\lang{N, M_0, M_f}}$ is downward closed, we can further simplify the products by removing choices. 
Fix a total ordering on the alphabet $\Sigma$.
Such an ordering can be represented by a word $w_{\Sigma}$.
We define the \emph{linearization operation} that takes a product and returns a regular expression:
\begin{align*}
    \linof{a+\varepsilon} &=a
    &\linof{a}&=a\\
    \linof{\Gamma^*}&=(\Prj{w_{\Sigma}}{\Gamma})^*
    &\linof{\product_1.\product_2} &= \linof{\product_1}.\linof{\product_2}\ .
\end{align*}
For example, if $\Sigma=\set{a, b, c}$ and we take $w_{\Sigma}=abc$, then $\product=(a+c)^*(a+\varepsilon)(b+c)^*$ is turned into $\linof{\product}=(ac)^*a(bc)^*$. 
The discussion justifies the following lemma.
\begin{lemma}
\label{Linearization}
    $\lang{\sre}\subseteq \dc{\lang{N, M_0, M_f}}$
    if and only if for all products $\product$ in $\sre$ we have
    \mbox{$\lang{\linof{\product}}\subseteq \dc{\lang{N, M_0, M_f}}$}.
\end{lemma}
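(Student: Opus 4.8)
The plan is to split the biconditional into two reductions and handle each direction of the resulting per-product equivalence separately. First I would observe that an SRE is a finite choice $\sre = \product_1 + \cdots + \product_k$, so that $\lang{\sre} = \bigcup_{i} \lang{\product_i}$. Since a union of languages is contained in a target set exactly when each member is, the inclusion $\lang{\sre}\subseteq \dc{\lang{N, M_0, M_f}}$ is equivalent to the conjunction of the inclusions $\lang{\product_i}\subseteq \dc{\lang{N, M_0, M_f}}$ over all products. This reduces the statement to showing, for a single product $\product$, that $\lang{\product}\subseteq \dc{\lang{N, M_0, M_f}}$ holds if and only if $\lang{\linof{\product}}\subseteq \dc{\lang{N, M_0, M_f}}$.

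For the forward direction of this per-product equivalence, I would first prove the syntactic inclusion $\lang{\linof{\product}} \subseteq \lang{\product}$ by structural induction on $\product$. Linearization only ever commits to specific choices allowed by the product: it picks $a$ for $a+\varepsilon$, and the fixed enumeration $\Prj{w_\Sigma}{\Gamma}$ of the alphabet $\Gamma$ in place of $\Gamma^*$. Hence every linearized word is already produced by the original product, and the concatenation case is immediate. Consequently, $\lang{\product}\subseteq \dc{\lang{N, M_0, M_f}}$ yields $\lang{\linof{\product}} \subseteq \lang{\product}\subseteq \dc{\lang{N, M_0, M_f}}$ at once.

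The backward direction is where downward-closedness is essential. The key claim is that for every $w \in \lang{\product}$ there is a witness $w' \in \lang{\linof{\product}}$ with $w \subword w'$, again proved by structural induction. The cases $a$ and $a+\varepsilon$ are trivial. The crucial case is $\Gamma^*$: writing $g = \Prj{w_\Sigma}{\Gamma}$ for the fixed enumeration of $\Gamma$, any word $w = b_1 \cdots b_n \in \Gamma^*$ embeds into $g^n$ by matching the letter $b_i$ inside the $i$-th copy of $g$, so that $w \subword g^n \in \lang{\linof{\Gamma^*}}$. The concatenation case follows because the subword relation is compatible with concatenation: splitting $w = w_1 w_2$ and taking the inductively obtained witnesses $w_i'$ gives $w \subword w_1' w_2' \in \lang{\linof{\product_1.\product_2}}$. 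Granting the claim, I would finish by taking any $w \in \lang{\product}$, choosing its superword $w' \in \lang{\linof{\product}}\subseteq \dc{\lang{N, M_0, M_f}}$, and invoking that $\dc{\lang{N, M_0, M_f}}$ is downward closed, so $w \subword w'$ forces $w \in \dc{\lang{N, M_0, M_f}}$.

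I expect the main obstacle to be the $\Gamma^*$ case of the embedding claim: making precise that repeating the fixed ordering $n$ times suffices to embed an arbitrary word of length $n$ over $\Gamma$, and verifying that the concatenation step splits both $w$ and its witness consistently. Everything else — the reduction to single products and the easy syntactic inclusion — is routine once these two inductions are set up.
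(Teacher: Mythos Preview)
Your proposal is correct and follows essentially the same approach as the paper: both directions rest on the syntactic inclusion $\lang{\linof{\product}} \subseteq \lang{\product}$ and on the key embedding claim that every word of $\lang{\product}$ is a subword of some word in $\lang{\linof{\product}}$, with the $\Gamma^*$ case handled exactly as you describe via $v \subword (\Prj{w_{\Sigma}}{\Gamma})^{|v|}$. Your write-up is in fact more explicit than the paper's, which sketches the same two ingredients without spelling out the reduction to single products or the structural induction.
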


\begin{proof}
    $\lang{\linof{\product}} \subseteq \lang{\sre}$ holds, so one direction is clear.
    
    For the other direction, we show that every word in $\lang{\sre}$ is a subword of a word in $\lang{\linof{\product}}$.
    From this, if $\lang{\linof{\product}}$ is included in the downward closure, then all its subwords will be contained in the downward closure.
    In particular, all words in $\lang{\sre}$ will be contained in the downward closure.
    Towards proving that every word in $\lang{\sre}$ is a subword of a word in $\lang{\linof{\product}}$, note that for any word in $\lang{(a+\varepsilon)}$, the letter $a$ may or may not occur, while in $\linof{(a+\varepsilon)} = a$, it is forced to occur.
    Furthermore, given $v \in \Gamma^*$, we have that $v$ is a subword of $\Prj{w_{\Sigma}}{\Gamma}^{\card{v}}$ by dropping in each iteration all but one letter.
    Therefore, all words in $\Gamma^*$ are subwords of $\linof{\Gamma^*}$.
    If we combine those two insights, the desired statement follows.
\end{proof}
With Lemma~\ref{Linearization} at hand, it remains to check $\lang{\linof{\product}}\subseteq \dc{\lang{N, M_0, M_f}}$ for each product.
To this end, we reduce this check to \suppn.
We first understand $\linof{\product}$ as a Petri net $N_{\linof{\product}}$..
We modify this Petri net by adding one place $p_{\Gamma}$ for each block $(\Prj{w_{\Sigma}}{\Gamma})^*=a_i\ldots a_j$.
Each transition that repeats or leaves the block (the ones labeled by $a_j$) is modified to generate a token in $p_{\Gamma}$.
As a result, $p_{\Gamma}$ counts how often the word $\Prj{w_{\Sigma}}{\Gamma}$ has been executed.

The second step is to define an appropriate product of $N_{\linof{\product}}$ with the Petri net of interest. 
Intuitively, the product synchronizes with the downward closure  of $N$.

\begin{definition}
    Consider two Petri nets $N_i = (\Sigma, P_i, T_i, F_i, \lambda)$, $i=1, 2$, with \mbox{$P_1\cap P_2=\emptyset$} and $T_1\cap T_2=\emptyset$.
    Their \emph{right-synchronized product} $N_1 \rightproduct N_2$ is the labeled Petri net
    \[
        N_1 \rightproduct N_2 = (\Sigma, P_1 \dotcup P_2, T_1 \dotcup T, F, \lambda)
        \ ,
    \]
    where for the transitions $t_1\in T_1$, $\lambda$ and $F$ remain unchanged.
    The new transitions are
    \begin{align*}
        &T = \Set{ \mathit{merge}(t_1,t_2) }{t_1 \in T_1, t_2 \in T_2, \ \lambda_1(t_1) = \lambda_2(t_2) }
        \quad \text{ with}
        \\
        &\lambda(\mathit{merge}(t_1,t_2)) = \lambda_1(t_1)=\lambda_2(t_2)\ ,
        \\
        &F(p_i,\mathit{merge}(t_1,t_2)) = F_i(p_i,t_i),
        F(\mathit{merge}(t_1,t_2),p_i) = F_i(t_i,p_i)
       \text{ for } p_i \in P_i, i = 1,2.
   \end{align*}
\end{definition}
As indicated by the name \emph{right-synchronized}, the transitions of $N_1$ can be fired without synchronization, while the transitions of $N_2$ can only be fired if a transition of $N_1$ with the same label is fired simultaneously.

Consider a Petri net $N$ with initial marking $M_0$. 
We compute the right-synchronized product $N'=N\rightproduct N_{\linof{\product}}$, take the initial marking $M_0'$ that coincides with $M_0$ but puts a token on the initial place of $N_{\linof{\product}}$, and focus on the counting places
\mbox{$X = \Set{ p_{\Gamma} }{ (\Prj{w_{\Sigma}}{\Gamma})^*\text{ is a block in $\product$} }$.}
The following correspondence holds.

\begin{lemma}
\label{RightProduct}
    $\lang{\linof{\product}} \subseteq \dc{\lang{N, M_0, M_\emptyset}}$ if and only if the places in $X$ are simultaneously unbounded in $N'$ from $M_0'$. Here $M_\emptyset$ is the zero marking, \ie $M_\emptyset (p) = 0$ for all $p$.
\end{lemma}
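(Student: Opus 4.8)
The plan is to establish both directions of the equivalence by connecting the existence of words in $\lang{\linof{\product}}$ that are subwords of net runs with the ability to pump the counting places of $N'$ simultaneously. The central object is the correspondence between a word $w \in \lang{\linof{\product}}$ and the number of times each block $(\Prj{w_\Sigma}{\Gamma})^*$ has been traversed: by construction, the place $p_\Gamma$ in $N_{\linof{\product}}$ counts exactly the number of full iterations of the block, and a word generated by traversing each block $n$ times has every $p_\Gamma$ holding value $n$. The right-synchronized product is set up so that $N_{\linof{\product}}$ can only move when synchronized with a same-labeled transition of $N$; hence a joint run of $N'$ corresponds to a run of $N$ whose label sequence \emph{contains as a (synchronized) subword} a word of $\lang{\linof{\product}}$. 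Because we target $M_\emptyset$ (the zero marking), which is coverable by any reachable marking, reaching the final marking imposes no constraint, so membership in $\dc{\lang{N,M_0,M_\emptyset}}$ amounts precisely to being a subword of \emph{some} label sequence of a run of $N$.

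First I would prove the forward direction. Assume $\lang{\linof{\product}} \subseteq \dc{\lang{N,M_0,M_\emptyset}}$. Fix $n \in \N$ and consider the word $w_n \in \lang{\linof{\product}}$ obtained by iterating each block $(\Prj{w_\Sigma}{\Gamma})^*$ exactly $n$ times (and taking each mandatory letter once). By assumption $w_n$ is a subword of $\lambda(\sigma)$ for some run $M_0 \move{\sigma} M$ of $N$. I would then build a synchronized run of $N'$ that fires the $N_{\linof{\product}}$-component along $w_n$, matching each of its labeled transitions against the corresponding occurrence in $\sigma$ (skipping, via unsynchronized $N$-transitions, the letters of $\sigma$ not used by $w_n$). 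This synchronized run drives each counting place $p_\Gamma$ to value $n$, witnessing simultaneous unboundedness of $X$.

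For the converse, assume the places in $X$ are simultaneously unbounded from $M_0'$. Fix $n$; there is a run of $N'$ reaching a marking with $M_{\sigma_n}(p_\Gamma) \geq n$ for every block. Projecting this run onto the $N$-component yields a valid run of $N$ (the $N_1 = N$ transitions and the merged transitions all carry a legitimate $N$-step), and projecting onto the $N_{\linof{\product}}$-component yields a run of $N_{\linof{\product}}$ whose label sequence is a word $u \in \lang{\linof{\product}}$ in which each block is iterated at least $n$ times. By the synchronization discipline, $u$ is a subword of the $N$-run's label sequence, so $u \in \dc{\lang{N,M_0,M_\emptyset}}$. Since this holds for every $n$ and since, by Higman's lemma, the downward-closed set $\dc{\lang{N,M_0,M_\emptyset}}$ is determined by finitely many products, I would argue that capturing arbitrarily high iteration counts on every block forces the \emph{whole} block $\Gamma^*$ (not just a bounded power) to be absorbed; hence every word of $\lang{\linof{\product}}$ lies in the downward closure.

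The main obstacle I expect is the converse direction: turning ``each block can be iterated $n$ times simultaneously, for every $n$'' into ``$\lang{\linof{\product}}$ is entirely contained in the downward closure.'' Simultaneous unboundedness only gives, for each fixed $n$, a \emph{single} run realizing at least $n$ iterations of all blocks at once, whereas an arbitrary word of $\lang{\linof{\product}}$ may demand wildly different iteration counts in different blocks. I would resolve this by exploiting that $\dc{\lang{N,M_0,M_\emptyset}}$ is downward closed: if every block can simultaneously reach iteration count $n$, then in particular every word whose maximal block-iteration is at most $n$ is subsumed, and letting $n \to \infty$ together with downward closure lets me cover each finite word of $\lang{\linof{\product}}$ — any such word only needs finitely many iterations of finitely many blocks, which is dominated by the ``all blocks $n$ times'' witness for $n$ large enough, after which deleting surplus letters recovers the target word. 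Care is needed to confirm that the synchronized run's label sequence genuinely contains the desired word as a subword rather than merely agreeing on iteration counts, which is exactly what the right-synchronization semantics guarantees.
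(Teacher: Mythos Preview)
Your proposal is correct and follows essentially the same route as the paper. Both directions match: for inclusion $\Rightarrow$ unboundedness you pick the word $w_n$ iterating every block $n$ times and lift its subword embedding into $\lambda(\sigma)$ to a synchronized run of $N'$; for unboundedness $\Rightarrow$ inclusion you start from a run of $N'$ putting $\geq n$ tokens on every $p_\Gamma$, project to an $N$-run whose label contains the ``all blocks $n$ times'' word as a subword, and then observe that any target word in $\lang{\linof{\product}}$ is dominated by this witness for $n$ large enough. One remark: the detour through Higman's lemma and the ``finitely many products'' description of the downward closure is unnecessary --- the direct argument you give in your last paragraph (choose $n \geq \max_i n_i$ for the given word and delete surplus letters) is exactly what the paper does and is all that is needed.
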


\begin{proof}
    Let 
    \(
        \linof{\product} = a_1 (\Prj{w_{\Sigma}}{\Sigma_1})^* a_2 \ldots a_k (\Prj{w_{\Sigma}}{\Sigma_k})^* a_{k+1}
    \)
    and let $N'=N\rightproduct N_{\linof{\product}}$ be the right-synchronized product as above.
    
    We first assume that the places in $X$ are simultaneously unbounded in $N'$.
    Given a word $w \in \lang{\linof{\product}}$, we need to find $v$ such that $w \subword v \in \lang{N, M_0, M_\emptyset}$.
    The word is of the shape
    \[
        w = a_1 (\Prj{w_{\Sigma}}{\Sigma_1})^{n_1} a_2 \ldots a_k (\Prj{w_{\Sigma}}{\Sigma_k})^{n_k} a_{k+1}
        \ .
    \]
    Define $n = \max n_i$ and let $\sigma$ be the run that creates at least $n$ tokens in each place of $p_{\Sigma_i} \in X$.
    Since the run creates at least $n$ tokens in $p_i$, it has to fire the transition leaving the block $\Prj{w_{\Sigma}}{\Sigma_i})$ $n$ times.
    This transition is a synchronized transition, \ie of type $\mathit{merge}(t,t')$.
    The fact that it could be fired means that before it, we have actually seen the synchronized transitions corresponding to the rest of the block, and before the block the synchronization transition corresponding to $a_i$.
    Altogether, we obtain that
    \[
    w' = a_1 (\Prj{w_{\Sigma}}{\Sigma_1})^{n} a_2 \ldots a_k (\Prj{w_{\Sigma}}{\Sigma_k})^{n} a_{k+1}
    \ .
    \]
    is a subword of $\lambda(\sigma)$, and by the choice of $n$, $w$ is a subword of $w'$.
    
    Towards a proof for the other direction, assume that $\lang{\linof{p}} \subseteq \dc{\lang{N', M_0, 0}}$ holds.
    Given any $n$, consider the word
    \[
        w = a_1 (\Prj{w_{\Sigma}}{\Sigma_1})^n \ldots a_k (\Prj{w_{\Sigma}}{\Sigma_k})^n \in \lang{\linof{p}}
        \ .
    \]
    Since $w \in \dc{\lang{N', M_0, 0}}$, there is a valid firing sequence $\sigma$ with $w \subword \lambda(\sigma)$.
    We consider the run $\sigma'$ of $N'$ that we construct as follows:
    For each $t$ in $\sigma$, whenever $\lambda(t)$ is present in $w$, we fire the synchronized transition $merge(t,t')$ (with suitable $t'$), whenever it is not present, we fire the non-synchronized transition $t$.
    If this run is a valid firing sequence, it is immediate that it generates $n$ tokens in each place $p_{\Sigma_i}$:
    Each block $ \Prj{w_{\Sigma}}{\Sigma_i}) $ is left $n$ times in $w$, so we trigger the synchronized transition that generates a token in $p_{\Sigma_i}$ $n$ times.
    
    We have to argue why $\sigma'$ is a valid run.
    First note that on the places of $N$, firing the non-synchronized version $t$ or firing a synchronized version $merge(t,t')$ (for arbitrary suitable $t'$) has the same effect.
    This shows that the non-synchronized transitions occurring in $\sigma'$ can be fired, and the synchronized transitions satisfy the enabledness-condition on the places of $N$, since $\sigma$ was a valid firing sequence of $N$.
    
    We still have to argue why the enabledness-condition on the places of $N_{\linof{p}}$ for the synchronized transitions is also satisfied.
    This is since $N_{\linof{p}}$ was constructed as net with language $\lang{\linof{p}}$, and we only use the synchronized transitions for the subword $w \in \lang{\linof{p}}$.
\end{proof}
The lemma does not yet involve the final marking $M_f$.
We modify $N'$ and $X$ such that simultaneous unboundedness implies $\lang{\linof{\product}} \subseteq \dc{\lang{N, M_0, M_f}}$.
The idea is to introduce a new place $p_f$ that can become unbounded only after $M_f$ has been covered.
To this end, we also add a transition $t_{f}$ that consumes $M_f(p)$ tokens from each place $p$ of $N$ and produces one token in $p_{f}$. 
We add another transition $t_{\mathit{pump}}$ that consumes one token in $p_{f}$ and creates two tokens in $p_{f}$.
Call the resulting net $N''$. 
The new initial marking $M_0''$ coincides with $M_0'$ and assigns no token to $p_{f}$.

Note that we do not enforce that $t_f$ is only fired after all the rest of the computation has taken place.
We can rearrange the transitions in any valid firing sequence of $N''$ to obtain a sequence of the shape $\sigma.{t_f}^k.{t_{\mathit{pump}}}^{k'}$, where $\sigma$ contains neither $t_f$ nor $t_{\mathit{pump}}$.

\begin{lemma}
    $\lang{\linof{\product}} \subseteq \dc{\lang{N, M_0, M_f}}$ iff the places in $X\cup\set{p_{f}}$ are simultaneously unbounded in $N''$ from $M_0''$.
\end{lemma}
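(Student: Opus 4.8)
The plan is to reuse the correspondence established in Lemma~\ref{RightProduct} and to treat the gadget consisting of $p_{f}$, $t_{f}$, and $t_{\mathit{pump}}$ as an on/off switch for coverability of $M_f$. The guiding observation is that $p_{f}$ can only ever receive a token through $t_{f}$, which is firable exactly when the current marking of the $N$-places covers $M_f$; once $p_{f}$ holds a single token, $t_{\mathit{pump}}$ can be iterated to push the count on $p_{f}$ arbitrarily high. Hence $p_{f}$ is unbounded along a computation if and only if that computation reaches a marking covering $M_f$ on the places of $N$. Combined with Lemma~\ref{RightProduct}, which links unboundedness of $X$ to the generation of the iterated blocks of $\linof{\product}$, simultaneous unboundedness of $X\cup\set{p_{f}}$ should say precisely that the words of $\linof{\product}$ can be generated by computations that also cover $M_f$, which is the inclusion we want.

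For the direction from simultaneous unboundedness to inclusion, I would start from a word $w = a_1 (\Prj{w_{\Sigma}}{\Sigma_1})^{n_1}\ldots a_k (\Prj{w_{\Sigma}}{\Sigma_k})^{n_k} a_{k+1} \in \lang{\linof{\product}}$, set $n = \max_i n_i$, and pick a computation $\sigma$ of $N''$ that puts at least $n$ tokens on every place of $X\cup\set{p_{f}}$. Using the rearrangement noted before the lemma, I would bring $\sigma$ into the shape $\sigma_0.t_{f}^{\,\ell}.t_{\mathit{pump}}^{\,\ell'}$ with $\sigma_0$ free of $t_{f}$ and $t_{\mathit{pump}}$; since $p_{f}$ carries at least one token we have $\ell\geq 1$, so the marking reached by $\sigma_0$ covers $M_f$ on the $N$-places, while $\sigma_0$ still deposits at least $n$ tokens on each place of $X$ (neither $t_{f}$ nor $t_{\mathit{pump}}$ touches those places). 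Applying the analysis of Lemma~\ref{RightProduct} to $\sigma_0$ then yields $w\subword w'\subword\lambda(\sigma_0)$ for the fully-iterated word $w'$, and replacing each synchronized transition $\mathit{merge}(t,t')$ by $t$ gives a computation of $N$ with the same label and the same effect on the $N$-places; this computation covers $M_f$, so $\lambda(\sigma_0)\in\lang{N,M_0,M_f}$ and hence $w\in\dc{\lang{N,M_0,M_f}}$.

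For the converse, given $n$ I would take $w = a_1 (\Prj{w_{\Sigma}}{\Sigma_1})^{n}\ldots a_k (\Prj{w_{\Sigma}}{\Sigma_k})^{n} a_{k+1} \in \lang{\linof{\product}}$ and use the assumed inclusion to obtain $\sigma$ with $w\subword\lambda(\sigma)$ and $M_0\move{\sigma}M\geq M_f$ in $N$. As in Lemma~\ref{RightProduct}, I would lift $\sigma$ to a run $\sigma'$ of $N'$ by firing the synchronized transition whenever the produced letter is present in $w$; this run deposits at least $n$ tokens on each place of $X$ and, because synchronized and non-synchronized transitions agree on the $N$-places, ends in a marking covering $M_f$. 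I would then append $t_{f}$ once and $t_{\mathit{pump}}$ $n-1$ times to raise $p_{f}$ to $n$, obtaining a computation of $N''$ that simultaneously puts at least $n$ tokens on every place of $X\cup\set{p_{f}}$. The main obstacle I anticipate is the first direction: I must be careful that the rearrangement into $\sigma_0.t_{f}^{\,\ell}.t_{\mathit{pump}}^{\,\ell'}$ is legitimate and simultaneously preserves both the coverage of $M_f$ (read off at the point just before the first $t_{f}$, where the $N$-places are about to lose $M_f$ tokens) and the token counts on $X$, so that a single computation $\sigma_0$ witnesses both properties at once; monotonicity works in our favour here because $t_{f}$ only consumes from the $N$-places and $t_{\mathit{pump}}$ only acts on $p_{f}$.
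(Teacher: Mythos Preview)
Your proposal is correct and follows precisely the approach the paper sets up: the paper states this lemma without proof but provides the rearrangement remark just before it and the detailed proof of Lemma~\ref{RightProduct} as a template, and your argument instantiates exactly that intended strategy. One tiny point: in the first direction you should take $n=\max(1,\max_i n_i)$ rather than $n=\max_i n_i$, so that even when all $n_i=0$ you are guaranteed $\ell\geq 1$; otherwise the argument is complete.
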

To conclude the proof of Theorem~\ref{Theorem:SREIPN}, it remains to argue that the generated instance for $\mathsf{SUPPN}$ is polynomial in the input, \ie in $(N,M_0,M_f)$ and $\product$.
The expression $\linof{\product}$ is certainly linear in $\product$, and the net $N_{\linof{\product}}$ is polynomial in $\linof{\product}$. 
The blow-up caused by the right-synchronized product is at most quadratic, and adding the transitions and the places to deal with $M_f$ is polynomial.
The size of $M_0''$ is polynomial in the size of $M_0$ and $\product$.
Altogether, the size of $N''$, $X \cup \set{p_f}$, and $M_0''$ (which together form the generated instance for $\mathsf{SUPPN}$) is polynomial in the size of the original input.


\subsection{BPP Nets}
\label{Subsection:SREDCBPP}

We show that the problem  of deciding whether the language of an SRE is included in the downward closure of a BPP net language is $\NPTIME$-complete.

\begin{theorem}
\label{Theorem:SREDBPP}
    \sred\ for BPP nets is $\NPTIME$-complete.
\end{theorem}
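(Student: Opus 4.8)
The plan is to prove both the upper bound ($\NPTIME$ membership) and the hardness ($\NPTIME$-hardness) separately, mirroring the strategy used for the Petri net case but exploiting the special structure of BPP nets.

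For the hardness direction, I would reduce from $\mathsf{SAT}$, following the outline already announced in the introduction. The key observation is that checking $\set{\varepsilon} \subseteq \dc{\lang{N,M_0,M_f}}$ is equivalent to non-emptiness (coverability) of the BPP net language, exactly as in Lemma~\ref{Lemma:SREIPNcomplete}, since the SRE $\emptyset^*$ has language $\set{\varepsilon}$. I would therefore reduce $\mathsf{SAT}$ to coverability of a BPP net: given a propositional formula, build a communication-free net that non-deterministically guesses a truth assignment (one token per variable selecting true/false) and whose final marking is coverable if and only if the guessed assignment satisfies every clause. Because every transition consumes at most one token, this encoding must be done carefully—clause satisfaction cannot be checked by synchronizing several literal-tokens at once. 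The standard trick is to thread a single token through a gadget that verifies clauses sequentially, or to let places accumulate witnesses and use the coverability target to enforce that each clause received at least one satisfying literal. The resulting instance then reduces back to \sred\ via the $\emptyset^*$ expression, giving $\NPTIME$-hardness.

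For the upper bound, I would reduce \sred\ to the satisfiability of existential Presburger arithmetic, which is $\NPTIME$-complete by~\cite{Scarpellini1983}. Invoking Lemma~\ref{Linearization}, the inclusion $\lang{\sre}\subseteq\dc{\lang{N,M_0,M_f}}$ reduces to checking $\lang{\linof{\product}}\subseteq\dc{\lang{N,M_0,M_f}}$ for every product $\product$ in $\sre$, and since there are only linearly many products, it suffices to handle one product in $\NPTIME$ and guess which product (if any) fails. The heart of the argument is to express the condition ``for every $n$, the word $a_1(\Prj{w_\Sigma}{\Sigma_1})^n a_2\ldots(\Prj{w_\Sigma}{\Sigma_k})^n a_{k+1}$ lies in $\dc{\lang{N,M_0,M_f}}$'' as a Presburger formula over the Parikh image of BPP computations. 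Here I would exploit the well-known fact that the Parikh image of a BPP net's reachability/coverability relation is effectively semilinear and expressible by an existential Presburger formula of polynomial size; combined with the pumping structure from Lemma~\ref{bpppump}, unboundedness of the relevant counting places becomes a Presburger-definable property.

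The main obstacle I anticipate is on the upper-bound side: translating the \emph{simultaneous unboundedness} requirement (that the counting places for all blocks grow without bound together, as in Lemma~\ref{RightProduct}) into an existential Presburger sentence of polynomial size. Unlike the general Petri net case, where we appealed to Demri's $\EXPSPACE$ result for \suppn, here we need the tighter $\NPTIME$ bound, so I cannot simply invoke a black-box decision procedure. The plan is to leverage that for BPP nets the Parikh image of firable transition sequences is captured by a polynomial-size existential Presburger formula (via the semilinearity of context-free/communication-free languages), and then express ``for each block the counting place can be pumped arbitrarily high'' by asserting the existence of a \emph{cyclic} pump, i.e. a nonempty transition-multiset whose net effect is nonnegative on all places and strictly positive on the relevant counting place. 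Encoding the synchronization with $\linof{\product}$ and the covering of $M_f$ (via the $p_f$ gadget analogous to $N''$) inside one existential Presburger formula, while keeping it polynomial and faithful, is the delicate step.
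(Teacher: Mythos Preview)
Your hardness argument is correct and matches the paper's.

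For the upper bound, there is a genuine gap. The right-synchronized product $N'' = N \rightproduct N_{\linof{\product}}$ (plus the $p_f$ gadget) that you want to reuse from the Petri-net case is \emph{not} a BPP net: every merged transition $\mathit{merge}(t_1,t_2)$ consumes one token from the BPP side and one token from the automaton side, so $\sum_p F(p,t) \geq 2$. Hence the polynomial-size Presburger characterization of BPP reachability (Theorem~\ref{Theorem:Verma}) does not apply to $N''$, and you have no polynomial-size formula for its reachable markings, let alone for simultaneous unboundedness of the counting places. Your ``cyclic pump'' idea has a second difficulty: the pump for block $\Sigma_i$ must be firable \emph{after} $a_1,\ldots,a_i$ have been produced and \emph{before} $a_{i+1}$, and Parikh-level reasoning on transition multisets does not by itself enforce this left-to-right order.

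The paper takes a different route that never leaves the BPP world. Using the constant $c$ of Lemma~\ref{bpppump}, it introduces the relation $M \deq{c} M'$ (for every place $p$, $M'(p)<c$ implies $M(p)\leq M'(p)$) and proves (Proposition~\ref{markingrelation}) that $\lang{\product}\subseteq\dc{\lang{N,M_0,M_f}}$ holds iff $N$ admits one single computation
$M_0=M_1\fire{\sigma_1}M_1'\fire{\rho_1}M_2\fire{\sigma_2}\cdots\fire{\sigma_n}M_n'$
with $a_i\subword\lambda(\sigma_i)$, $\Prj{w_\Sigma}{\Sigma_i}\subword\lambda(\rho_i)$, $M_i'\deq{c}M_{i+1}$, and $M_f\deq{c}M_n'$. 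The key finitization is that ``can be pumped unboundedly'' is replaced by ``already exceeds $c$'', which Lemma~\ref{bpppump} makes sound. To detect such a witness in $\NPTIME$, the paper builds a net $N'$ consisting of $2n-1$ \emph{disjoint, unsynchronized} copies of $N$ (hence $N'$ is still a BPP net), applies Theorem~\ref{Theorem:Verma} to $N'$, and conjoins a polynomial-size quantifier-free formula that equates the final marking of copy $i$ with the guessed initial marking of copy $i{+}1$ and imposes the $\deq{c}$ constraints (Proposition~\ref{CorrectnessPresburger}). This is the idea your plan is missing.
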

Hardness can be shown using a reduction from BPP coverability, which is $\NPTIME$-complete, similar to Lemma~\ref{Lemma:SREIPNcomplete}.
The hardness of BPP coverability itself can be easily shown by a reduction from $\mathsf{SAT}$, similar to the proof of the $\NPTIME$-hardness of reachability in BPP nets~\cite{Esparza19972}.

For membership in $\NPTIME$, we give a reduction to satisfiability of an existential formula in \emph{Presburger arithmetic}, the first-order theory of the natural numbers with addition, subtraction, and order. 

\begin{definition}
    Let $\mathcal{V}$ be a set of variables with elements $x,y$.
    The set of terms $t$ in Presburger arithmetic and the set of formulas $\varphi$ are defined as follows:
    \begin{align*}
    t &::= 0\bnf 1\bnf x\bnf t-t\bnf t+t
    &
    \varphi &::= t\leq t\bnf \neg \varphi\bnf \varphi\lor \varphi \bnf \exists x \colon \varphi
    \ .
    \end{align*}
\end{definition}
An \emph{existential} Presburger formula takes the form $\exists x_1 \ldots \exists x_n \colon \varphi$ where $\varphi$ is a quantifier-free formula.
We shall also write positive Boolean combinations of existential formulas.
By an appropriate renaming of the quantified variables, any such formula can be converted into an equivalent existential Presburger formula.
We write $\varphi(\vec x)$ to indicate that (at most) the variables $\vec x =x_1,\ldots,x_k$ occur free in $\varphi$.
Given a function $M$ from $\vec x$ to $\mathbb{N}$, the meaning of \emph{$M$ satisfies $\varphi$} is as usual and we write $M \models \varphi$ to denote this.   
We rely on the following complexity result:

\begin{theorem}[\cite{Scarpellini1983}]
\label{Theorem:SatisfiabilityEPA}
    Satisfiability in existential Presburger arithmetic is $\NPTIME$-complete.
\end{theorem}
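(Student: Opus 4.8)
The plan is to establish the two directions of $\NPTIME$-completeness separately: hardness by a direct reduction from $\mathsf{SAT}$, and membership by a guess-and-check procedure whose correctness rests on a small-solution bound for systems of linear Diophantine inequalities.

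For hardness I would reduce from propositional satisfiability. Given a Boolean formula $\phi$ over variables $p_1,\ldots,p_n$, introduce one Presburger variable $x_i$ per Boolean variable, constrain each by $0\leq x_i$ and $x_i\leq 1$, and replace each literal $p_i$ by $1\leq x_i$ and each $\neg p_i$ by $x_i\leq 0$; the Boolean connectives of $\phi$ carry over verbatim, since the logic already provides $\neg$ and $\lor$. The existential closure of the resulting quantifier-free matrix is satisfiable over $\N$ precisely when $\phi$ is, and the translation is polynomial, so satisfiability in existential Presburger arithmetic is $\NPTIME$-hard.

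For membership I would first normalise the matrix. Pushing negations inward via the equivalence $\neg(t\leq t')\equiv t'+1\leq t$ (valid since term values are integers) turns the matrix into a negation-free Boolean combination of atoms of the form $t\leq t'$; rewriting each atom as $\sum_i c_i x_i\leq d$ leaves all coefficients and the constant of bit-length polynomial in the input. The nondeterministic algorithm then guesses a truth value for every atom so that the monotone, negation-free matrix evaluates to true---a polynomial-time check---and collects the corresponding linear (in)equalities, reading the guessed falsity of an atom $t\leq t'$ as the strict inequality $t'<t$. What remains is to decide feasibility over $\N$ of the resulting conjunction, i.e.\ of a system $A\vec x\leq\vec b$, $\vec x\geq 0$ of linear Diophantine inequalities.

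The crux is that such a system, if feasible, has a solution of polynomially many bits, so the algorithm can guess $\vec x$ and verify $A\vec x\leq\vec b$ in polynomial time. To obtain the bound I would follow Papadimitriou's argument: add slack variables to replace inequalities by equalities $A'\vec y=\vec b$, $\vec y\geq 0$; take a support-minimal nonnegative solution, whose nonzero coordinates index linearly independent columns and hence number at most the number of equations $m$; and solve the resulting square subsystem by Cramer's rule, bounding every determinant through Hadamard's inequality by roughly $(m\cdot a)^{m}$, where $a$ bounds the coefficient magnitudes. This yields entries that are singly-exponential in $m$ but of bit-length $O(m\log(ma))$, polynomial in the input. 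I expect this small-solution bound to be the main obstacle, as it is the only nonelementary ingredient; the normalisation and guessing steps are routine. Soundness and completeness of the procedure follow at once: an accepting run exhibits a genuine witnessing assignment, and conversely any satisfying assignment induces a consistent truth assignment to the atoms together with a small feasible solution that the algorithm can guess.
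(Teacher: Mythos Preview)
Your argument is correct and is essentially the standard route to this result (SAT reduction for hardness; atom-guessing plus a Borosh--Treybig/Papadimitriou small-solution bound for membership). One minor redundancy: once you have pushed negations inward, the matrix is monotone in the atoms, so after guessing a satisfying truth assignment you only need to enforce the atoms guessed \emph{true}; adding the strict inequalities for the atoms guessed false is harmless but unnecessary.

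As for comparison with the paper: there is nothing to compare. The paper does not prove this theorem at all; it merely states it and attributes it to the cited reference, using it as a black box for the $\NPTIME$ upper bounds on \sred\ and \sreu\ for BPP nets. Your write-up therefore supplies strictly more than the paper does on this point.
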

Note that $\lang{\sre} \subseteq \dc{\lang{N, M_0, M_f}}$ iff the inclusion holds for every product $\product$ in $\sre$.
Given such a product, we construct a new BPP net $N'$ and an existential Presburger formula $\psi(P')$ such that $\lang{\product} \subseteq \dc{\lang{N, M_0, M_f}}$ iff  
there is a marking $M'$ reachable in $N'$ from a modified initial marking $M_0'$ with $M'\models \psi$. 
This concludes the proof with the help of the following characterization of reachability in BPP nets in terms of existential Presburger arithmetic.

\begin{theorem}[\cite{Verma2005,Esparza19972}]
\label{Theorem:Verma}
    Given a BPP net $N= (\Sigma,P,T,F,\lambda)$ and an initial marking $M_0$, one can compute in polynomial time an existential Presburger formula $\Psi(P)$ so that for all markings $M$: $M \models \Psi(P)$ if and only if $\TEnable{M_0}{\sigma}{M}$ for some $\sigma \in T^*$.
\end{theorem}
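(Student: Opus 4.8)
The plan is to reduce reachability in a BPP net to computing the Parikh image of a context-free grammar, and then to invoke the fact---due to Verma, Seidl, and Schwentick---that the Parikh image of a context-free grammar is definable by a polynomial-size existential Presburger formula computable in polynomial time. The two citations in the statement correspond exactly to these ingredients: Esparza's observation that BPP nets are essentially commutative context-free grammars, and the Presburger characterization of Parikh images. Concretely, I would build from $N$ and $M_0$ a context-free grammar $G$ whose nonterminals are the places $P$ (together with a start symbol $S$ and an always-available symbol $Z$) and whose terminals form the alphabet $T \cup \Set{\bar p}{p \in P}$. The key use of the BPP restriction is that every transition consumes at most one token: a transition $t$ with single input place $p$ becomes a production $p \to t\, q_1 \cdots q_m$, where $q_1 \cdots q_m$ lists the output places of $t$ with multiplicity; a transition consuming no token becomes a production of $Z$ that keeps a copy of $Z$. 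A token may also remain untouched, modeled by the productions $p \to \bar p$, so that the count of the terminal $\bar p$ in a derivation records exactly the number of tokens left in $p$. The start production $S \to Z\, \prod_{p \in P} p^{M_0(p)}$ installs the initial marking, and $Z \to Z Z \bnf \varepsilon$ lets arbitrarily many zero-consumption transitions fire.

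The central claim to establish is that a marking $M$ is reachable, $\TEnable{M_0}{\sigma}{M}$ for some $\sigma$, if and only if $G$ has a derivation in which the count of each $\bar p$ equals $M(p)$. For the forward direction I would use the forest structure of BPP computations from the unfolding of Subsection~\ref{Subsec:UpBPP}: a firing sequence unfolds into a forest that is literally a derivation of $G$, the internal transitions recording the fired transitions and the surviving tokens recording the leftover symbols. For the backward direction, any derivation tree of $G$ can be linearized into a valid firing sequence by ordering transitions so that each parent fires before its children; this is legal precisely because each transition needs a single token, supplied by its parent in the tree (or by $M_0$ at the roots). Hence the transition counts range exactly over the realizable Parikh vectors, and the leftover symbols $\bar p$ pin down the reached marking, giving exact reachability rather than mere coverability.

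Applying the Verma--Seidl--Schwentick result to $G$ then yields, in polynomial time, an existential Presburger formula $\Phi$ over the variables $(x_t)_{t \in T}$ and $(M(p))_{p \in P}$ defining the Parikh image of $\lang{G}$. Existentially quantifying the transition counts, $\Psi(P) \equiv \exists (x_t)_{t \in T} \colon \Phi$, produces a formula with exactly the free variables $M(p)$ such that $M \models \Psi(P)$ iff $M$ is reachable from $M_0$. Polynomiality is inherited from the linear size of $G$ in $\card{N} + \card{M_0}$ together with the polynomial bound of Verma--Seidl--Schwentick on the formula describing a Parikh image.

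The main obstacle is the correctness claim of the second step, and within it the treatment of transitions that consume no token: these have no input nonterminal to drive a rewriting step, so they must be grafted onto the auxiliary symbol $Z$ without either enabling firings the net forbids or blocking firings it allows. The second delicate point is the leftover bookkeeping, namely ensuring that the $\bar p$-productions count the final marking exactly, neither dropping tokens that should survive nor retaining tokens that were in fact consumed. Getting a \emph{polynomial-size} formula rather than merely a semilinear description is precisely what forces the route through Parikh images and the Verma--Seidl--Schwentick encoding; a direct ``guess a firing order'' formulation of realizability would incur an exponential blow-up and must be avoided.
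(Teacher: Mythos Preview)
The paper does not prove this theorem at all: it is stated as a cited result from \cite{Verma2005,Esparza19972} and used as a black box. Your proposal is therefore not to be compared against a proof in the paper, but it does reconstruct essentially the standard argument behind the citation---Esparza's identification of BPP nets with commutative context-free processes, combined with the Verma--Seidl--Schwentick polynomial encoding of Parikh images.

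There is, however, a genuine gap in your polynomiality claim. You assert that the grammar $G$ has size linear in $\card{N} + \card{M_0}$, but under the paper's conventions (Section~\ref{Section:Preliminaries}) both the flow function $F$ and the initial marking $M_0$ are encoded in \emph{binary}. Your start production $S \to Z\, \prod_{p \in P} p^{M_0(p)}$ has length $\tokencount{M_0}$, which can be exponential in $\card{M_0}$; likewise the right-hand side $t\, q_1 \cdots q_m$ of a transition production lists output places with multiplicity $F(t,q)$ and can be exponential in $\card{N}$. As written, your grammar is polynomial only in the \emph{unary} size of the instance, so the appeal to Verma--Seidl--Schwentick does not yield a polynomial-time construction in the sense the paper requires.

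The fix is routine but must be stated: introduce $\bigO{\log M_0(p)}$ auxiliary nonterminals per place with doubling productions $A_{p,i+1} \to A_{p,i} A_{p,i}$ to generate $M_0(p)$ copies of $p$ from a constant-size start rule, and similarly for the multiplicities $F(t,q)$ on the right-hand sides. With this repair the grammar is genuinely polynomial in the binary input size and the rest of your argument goes through. Your discussion of the auxiliary symbol $Z$ and of the leftover terminals $\bar p$ is otherwise correct, and the two directions of the correctness claim are argued soundly.
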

After constructing $N'$ and $\psi(P')$, we may use Theorem~\ref{Theorem:Verma} to construct formula $\Psi(P')$ that characterizes reachability in $N'$.
We have that $\lang{\product} \subseteq \dc{\lang{N, M_0, M_f}}$ if and only if $\psi(P') \wedge \Psi(P')$ is satisfiable, which can be checked in $\NPTIME$ using Theorem~\ref{Theorem:SatisfiabilityEPA}.

Key to the construction of $N'$ is a characterization of the computations that need to be present in the BPP net for the inclusion $\lang{\product} \subseteq \dc{\lang{N, M_0, M_f}}$ to hold.
Wlog., in the following we will assume that the product takes the shape 
\begin{align*}
(a_1+\varepsilon)\Sigma_1^*(a_2+\varepsilon) \ldots \Sigma_{n-1}^*(a_n+\varepsilon),
\end{align*} 
where $\Sigma_1,\ldots,\Sigma_{n-1} \subseteq \Sigma$ and $a_1,\ldots,a_n \in \Sigma$.
For this language to be included in $\dc{\lang{N, M_0, M_f}}$, the BPP should have a computation with parts $\sigma_i$ containing $a_i$ and parts $\rho_i$ between the $\sigma_i$ that contain all letters in $\Sigma_i$ and that can be repeated. 
To formalize the requirement, recall that we use $w_{\Sigma}$ for a total order on the alphabet and $\Prj{w_{\Sigma}}{\Sigma_i}$ for the projection to $\Sigma_i\subseteq \Sigma$. 

Moreover, we define $M \deq{c} M'$, with $c$ the constant defined in Lemma~\ref{bpppump},  if for all places $p\in P$ we have $M'(p) < c$ implies $M(p) \leq M'(p)$. 

\begin{definition}
    Let $p$ be a product. 
    The BPP net $N$ together with the markings $M_0, M_f$ admits a \emph{$p$-witness} if there is a computation
    \[
        M_0 = M_1 \fire{\sigma_0} M_1' \fire{\rho_1} M_2 \fire{\sigma_1} M_2' \fire{\rho_2} \ldots M_{n-1}' \fire{\rho_{n-1}} M_n \fire{\sigma_n} M_n' \deq{c} M_{n}'
        \ ,
    \]
    \ie there are markings markings $M_1,M'_1,\ldots, M_{n}, M_{n}'$ and firing sequences $\sigma_i$, $\rho_i$ that satisfy $\TEnable{M_{i}}{\sigma_i}{M'_{i}}$ for all $i \in \oneto{n}$, $\TEnable{M'_{i}}{\rho_i}{M_{i+1}}$ for all $i \in  \oneto{n-1}$, and moreover:\\
    (1) $a_{i} \subword \lambda(\sigma_i)$, for all $i \in  \oneto{n}$,\\
    (2) $\Prj{w_\Sigma}{\Sigma_i} \subword \lambda(\rho_i)$ for all $i \in \oneto{n-1}$,\\
    (3) $M_i'\deq{c}M_{i+1}$ for all $i \in \oneto{n-1}$, and\\
    (4) $M_1 = M_0$ and $M_{f} \deq{c} M_{n}'$.
\end{definition}
In a $p$-witness, (1) enforces that the $a_i$ occur in the desired order, and the first part of (2) requires that $\Prj{w_\Sigma}{\Sigma_i}$ occurs in between.
Property~(3) means that each $\rho_i$ (and thus $\Prj{w_\Sigma}{\Sigma_i}$) can be repeated.
Property~(4) enforces that the computation still starts in the initial marking and can be extended to cover the final marking.

The following proposition reduces the problem \sred\ for BPP nets to checking whether the BPP admits a $p$-witness.
\begin{proposition}
\label{markingrelation}
    $\lang{\product} \subseteq \dc{\lang{N, M_0,M_f}}$ holds iff  $(N, M_0,M_f)$ admits a $p$-witness.
\end{proposition}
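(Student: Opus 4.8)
The plan is to prove both implications, using the threshold $c$ from Lemma~\ref{bpppump} to separate \emph{bounded} behaviour (places holding fewer than $c$ tokens, which a witness must match exactly) from \emph{pumpable} behaviour (places holding at least $c$ tokens, which Lemma~\ref{bpppump} lets us inflate arbitrarily). The observation I would record first is that $\deq{c}$ factors through a finite poset: setting $\alpha(M)(p)=\min(M(p),c)$, one checks directly from the definition that $M\deq{c}M'$ holds iff $\alpha(M)\leq\alpha(M')$ pointwise. Since $\alpha$ takes only $(c+1)^{\card{P}}$ distinct values, $\deq{c}$ is a well-quasi-order; in particular any list of more than $(c+1)^{\card{P}}$ markings contains two with equal $\alpha$-image, hence two that are mutually $\deq{c}$-related. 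This pigeonhole fact drives the extraction of a witness. Note also that $M\leq M'$ trivially implies $M\deq{c}M'$, which I use at the endpoints.

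For the direction from a witness to the inclusion, fix $w=b_1u_1b_2\ldots u_{n-1}b_n\in\lang{\product}$ with $b_i\in\set{a_i,\varepsilon}$ and $u_i\in\Sigma_i^*$, and build a firing sequence of the shape $\sigma_1\rho_1^{m_1}\sigma_2\ldots\rho_{n-1}^{m_{n-1}}\sigma_n$ from the witness with $m_i=\card{u_i}$. Condition~(2) gives $\Prj{w_{\Sigma}}{\Sigma_i}\subword\lambda(\rho_i)$, and since $\Prj{w_{\Sigma}}{\Sigma_i}$ lists every letter of $\Sigma_i$ we get $u_i\subword(\Prj{w_{\Sigma}}{\Sigma_i})^{m_i}\subword\lambda(\rho_i)^{m_i}$; with $b_i\subword\lambda(\sigma_i)$ from~(1) this yields $w\subword\lambda(\sigma_1)\lambda(\rho_1)^{m_1}\cdots\lambda(\sigma_n)$. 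For firability and covering, condition~(3) ensures that the net effect of $\rho_i$ is non-negative on every place below $c$, so repetition never depletes those places; on the places that reach $c$ tokens I first apply Lemma~\ref{bpppump} (iterated over all such places, exactly as in the proof of Theorem~\ref{DCCompBPP}) to raise them high enough to survive all the insertions. The same pumping, together with~(4), makes the final marking cover $M_f$: on sub-$c$ places $M_n'$ already dominates $M_f$ by $M_f\deq{c}M_n'$, while the pumped places exceed $M_f$. Hence $w\in\dc{\lang{N,M_0,M_f}}$.

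\textbf{The main obstacle} is the direction from the inclusion to the existence of a witness. Here I would test the inclusion on the single word
\begin{align*}
w_k = a_1(\Prj{w_{\Sigma}}{\Sigma_1})^k a_2\ldots a_{n-1}(\Prj{w_{\Sigma}}{\Sigma_{n-1}})^k a_n\in\lang{\product}
\end{align*}
for one fixed $k>(c+1)^{\card{P}}$. By assumption there is a computation $M_0\move{\gamma}M$ with $M\geq M_f$ and $w_k\subword\lambda(\gamma)$. Fix one subword embedding of $w_k$ into $\lambda(\gamma)$; it cuts $\gamma$ into the transitions realizing the letters $a_i$ and, for each $i$, a block carrying the $k$ copies of $\Prj{w_{\Sigma}}{\Sigma_i}$, with $k+1$ intermediate markings $N_0^{(i)},\ldots,N_k^{(i)}$. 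Applying the pigeonhole fact inside block $i$ gives indices $j_a<j_b$ with $N_{j_a}^{(i)}\deq{c}N_{j_b}^{(i)}$; I take $\rho_i$ to be the sub-run from $N_{j_a}^{(i)}$ to $N_{j_b}^{(i)}$, which spans at least one full copy of $\Prj{w_{\Sigma}}{\Sigma_i}$ and so yields~(2) and~(3) with $M_i'=N_{j_a}^{(i)}$, $M_{i+1}=N_{j_b}^{(i)}$, and let $\sigma_i$ absorb the remaining transitions between consecutive chosen sub-runs, including the one realizing $a_i$, giving~(1). Since every segment is a genuine factor of the valid computation $\gamma$, all firing relations hold; $M_1=M_0$ and the terminal marking $M_n'=M\geq M_f$ give~(4), as $M_f\leq M_n'$ implies $M_f\deq{c}M_n'$. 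The delicate point, and the reason the threshold $c$ rather than plain domination is used throughout, is that block markings may grow without bound across the family $w_k$, so ordinary Dickson-style recurrence of a dominating marking need not hold; passing to the finite abstraction $\alpha$ is precisely what forces a repeatable segment to appear within a single large $k$, and simultaneously supplies the $\deq{c}$ (rather than $\leq$) relations that the witness and Lemma~\ref{bpppump} are designed to exploit.
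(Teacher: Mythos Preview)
Your proposal is correct and follows the same two-direction strategy as the paper: extract a witness from a single long word $w_k$ via a pigeonhole argument on the intermediate markings of each block, and conversely replay the witness with each $\rho_i$ repeated, invoking Lemma~\ref{bpppump} on places that reach the threshold~$c$. Your explicit abstraction $\alpha(M)(p)=\min(M(p),c)$ and the bound $k>(c+1)^{\card P}$ make the pigeonhole step cleaner than the paper's exponent $\ell\cdot c+1$ (which is asserted rather than justified), while the paper organises the backward direction as an induction over prefixes of~$w$, making the interleaving of pumping and firing more explicit than your one-shot phrasing; the content is the same.
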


\begin{proof}
    Recall that we consider a product of the form
    \[
        \product = (a_1+\varepsilon) \Sigma_1^* (a_2+\varepsilon) \Sigma_2^* \cdots \Sigma_{n-1}^* (a_n+\varepsilon) \ .
    \]
    Assume that $(N,M_0,M_f)$ admits a $p$-witness $(M_1,M'_1, \cdots,M_n,M'_n)$ such that
    \[
       M_0 = M_1 \fire{\sigma_0} M_1' \fire{\rho_1} M_2 \fire{\sigma_1} M_2' \fire{\rho_2} \ldots M_{n-1}' \fire{\rho_{n-1}} M_n \fire{\sigma_n} M_n' \deq{c} M_{n}'
       \ ,
    \]
    satisfying the required properties.
    We will show $\lang{p} \subseteq \dc{\lang{N,M_0,M_f}}$ by proving that for any word $w \in \lang{p}$, there is a run $\TEnable{M_0}{\sigma}{M''}$ such that $w \subword \lambda(\sigma)$ and $M'_n \deq{c} M''$ (and hence also $M_f \leq M''$).
    Let $w = x_1v_1\cdots v_{n-1}x_n$, where $x_i \in \{a_i , \varepsilon \}$ and $ v_i \in \Sigma_i^*$. 
    
    In sequel, we will prove that for every prefix $w' = x_1v_1\cdots x_i$, we have a run of the form $\TEnable{M_0}{\gamma}{M_i'''}$ such that $w' \subword \lambda(\gamma)$ and $M_i' \deq{c} M_i'''$.
    Similarly, we show that for any prefix of the form $w' = x_1v_1\cdots x_iv'_i$, where $v'_i$ is a prefix of $v_i$, we have a run of the form $\TEnable{M_0}{\gamma}{M_i''}$ such that $w' \subword \lambda(\gamma)$ and $M_i \deq{c} M_i''$. 
    
    We prove both statements simultaneously by induction.
    Consider the first statement for $i = 1$:
    We have $w' = x_1$, and $\TEnable{M_1}{\sigma_1}{M'_1} $ is the required run by Property~(1).

    Consider the second statement for some $i$ for which we assume that the first statement holds.
    Let
    \[
        w'' =  x_1v_1 \ldots x_{i}v'_{i}
        \ ,
    \]
    where $v'_{i}.a$ is a prefix of $v_{i}$.
    To show the required statement, we consider an inner induction on the length of $v'_{i}$.
    In the base case, $v'_i = \varepsilon$, and the statement is immediate by the hypothesis of the outer induction.
    Assume that $v'_{i} = v''_{i}.b$, then by the inner induction, we have a run $\TEnable{M_1}{\gamma}{M_{i}''}$ such that $x_1v_1 \cdots x_{i}v''_{i} \subword \lambda(\gamma)$ and $M_{i} \deq{c}M''_{i}$.
    If for all places $p \in P$, $M''_{i}(p) < c$ holds, then we have $M_{i} \leq M''_{i}$.
    We prolong $\gamma$ by $\rho_{i}$, which is possible by Property~(3), and get the required run by Property~(2).
    Suppose the set of places $X$ to which more than $c$ tokens are assigned is non-empty.
    Using Lemma~\ref{bpppump} repeatedly for each place in $X$, we pump the run to create enough tokens to be able to execute the rest of the run.
    We obtain a run $\TEnable{M_1}{\gamma'}{M_{i}^{*}}$ such that $\lambda(\gamma) \subword \lambda(\gamma')$, for all $x \in X$ $M^{*}_{i}(x) - M''_{i}(x) \geq \card{\rho_i}$ and $M''_{i} \deq{c} M^{*}_{i}$.
    This is can be prolonged by $\rho_{i}$ to obtain the desired computations.
    This concludes the inner induction and the proof of the second statement for $i$.
    
    It remains to prove the first statement for $i+1$, assuming that the second statement holds for $i$.
    Consider 
    \[
        w' = x_1 v_1 x_2\ldots x_i v_i x_{i+1}
        \ .
    \]
    By induction, we get a run $\TEnable{M_1}{\gamma}{M_{i}''}$ such that $x_1v_1x_2 \cdots x_i v_{i} \subword \lambda(\gamma)$ and $M_{i} \deq{c}M''_{i}$. 
    Suppose for all places $p \in P$, $M''_{i} < c$, then we have $M_{i} \leq M''_{i}$.
    Hence we can easily extend the computation by $\TEnable{M''_{i} }{\sigma_i}{M'''_{i}}$, which gives us the required run.
    Otherwise, we proceed as for the second statement and pump up the values in these places to be greater than the size of $\sigma_{i}$.
    Afterwards, we can extend the computation by $\sigma_{i}$, obtaining the desired run.
    
    For the other direction, consider the word
    \[
        w = a_1.(\Prj{w_\Sigma}{\Sigma_1})^{\ell \cdot c +1}.a_2.(\Prj{w_\Sigma}{\Sigma_2})^{\ell \cdot c +1}.a_3 \cdots a_n \in L(p)
        \ .
    \]
    Since we have $\lang{p} \in \dc{\lang{N,M_0,M_f}}\downarrow$, we have a run of the form
    \[
    M_1\move{ \alpha_1 } J'_1 \move{ \beta_1 } J_1 \move{ \alpha_2 } J'_2  \move{ \beta_2 } J_2 \cdots J'_n
    \ ,
    \]
    such that $a_i \subword \lambda(\alpha_i)$ and $ \Prj{w_\Sigma}{\Sigma_i} \subword \lambda(\beta_i)$.
    Since the length of $\beta_i$ is $\ell \cdot c + 1$, there have to be markings between $J'_{i}$ and $J_i$ such that
    \[
    J'_{i}\move{ \beta^1_i } J^1_i \move{ \beta^2_i } J^2_i \move{ \beta^3_i } J_i
    \ ,
    \]
    where $J^1_i \deq{c} J^2_i$.
    Now, we let $M'_i = J^1_{i}$, $M_i = J^2_{i}$, $\sigma_1 = \alpha_1.\beta_1^1$, $\sigma_i = \beta_{i-1}^3.\alpha_i.\beta_i^1$ and $\rho_i = \beta_i^2$.
    This gives us the required $p$-witness.
\end{proof}
We now reduce the problem of finding a $p$-witness to finding in a modified unlabeled BPP net $N' = (\emptyset, P',T',F',\lambda')$ a reachable marking that satisfies a Presburger formula $\Psi^{M_0}_{M_f} (P')$.  
The task is to identify $2n$ markings that are related by $2n-1$ computations as required by a $p$-witness.  
The idea is to create $2n-1$ replicas of the BPP net and run them independently to guess the corresponding computations $\sigma_i$ \resp $\rho_i$. 
The Presburger formula $\Psi^{M_0}_{M_f}$ will check that the target marking reached with $\sigma_i$ coincides with the initial marking for $\rho_{i}$, and the target marking reached with $\rho_i$ is the initial marking of $\sigma_{i+1}$. 
To this end, the net $N'$ remembers the initial marking that each replica started from in a full copy (per replica) of the set of places of the BPP net.
Furthermore $\Psi^{M_0}_{M_f}$ checks that each $\rho^i$ can be repeated by ensuring that the final marking in the corresponding replica is larger than the initial marking.
As initial marking for $N'$, we consider the marking $M_\emptyset$ with $M_\emptyset (p) = 0$ for all $p$.

Formally, the places of $N'$ are
\[
    P'  = \bigcup _{i \in \oneto{2n-1}} B_i \cup E_i \cup L_i
    \ .
\]
Here, $E_i = \Set{ e_i^p}{p \in P}$ and $B_i = \Set{ b_i^p}{p \in P }$ are $2n-1$ copies of the places of the given BPP net.  
The computation $\sigma_i$ or $\rho_i$ is executed on the places $E_i$, which will hold the target marking $M_i'$ or $M_{i+1}$ reached after the execution. 
The places $B_i$ remember the initial marking of the replica and there are no transitions that take tokens from them.
The places $L_i$ record occurrences of $a_i$ and of symbols from $\Sigma_i$, depending on whether $i$ is odd or even. 
For all $i\in \oneto{n}$, we have $L_{2i-1} = \{l_{2i-1}\}$.
For all $i\in \oneto{n-1}$, we set $L_{2i} = \Set{l_{2i}^a}{a \in \Sigma_{i}}$ otherwise. 
The transitions are
\[
    T' =  \bigcup _{i \in [1..2n-1]} \TC_i \cup \TE_i
    \ .
\]
The $\TC_i = \Set{ \text{tc}_i^p}{ p \in P }$ populate $E_i$ and $B_i$.  
The transitions in $\TE_i =  \Set{ \text{te}_i^t}{t \in T }$ together with $E_i$ form a replica of the BPP net. 
The flow relation $F'$ is defined as follows, where numbers omitted are zero:
\begin{enumerate}[(1)]
    \item For all $i \in \oneto{2n-1}$, $p \in P$, $F'(\text{tc}_i^p, e_i^p) = F'(\text{tc}_i^p,b_i^p) = 1$.
    \item For all $i \in \oneto{2n-1}$, $p \in P$, $t \in T$, $F'( \text{te}_i^t, e_i^p) = F(t,p)$ and $F'( e_i^p,\text{te}_i^t)=F(p,t)$.
    \item For all $i \in \oneto{n}$, $t \in T$ with $\lambda(t) = a_{i}$, $F'(\text{te}_{2i-1}^t, l_{2i-1})=1$.
    \item For all $i \in \oneto{n}$, $t \in T$ with $\lambda(t) =a\in \Sigma_{i}$, $F'(\text{te}_{2i}^t, l^{a}_{2i}) = 1$
\end{enumerate}

The Presburger formula \wrt the initial and final markings $M_0$ and $M_f$ of $N$ has the places in $P'$ as variables.
It takes the shape
\begin{align*}
    \Psi^{M_0}_{M_f}(P') = &\Psi_1(P')\wedge \Psi_2(P')\wedge \Psi_3(P') \wedge \Psi_4(P')\wedge \Psi_5^{M_0}(P')\wedge \Psi_6^{M_f}(P')
    \ ,
\end{align*}
where
\begin{align*}
    \Psi_1(P')
    &=
    \bigwedge_{i \in \oneto{2n-2}} \bigwedge_{p \in P}
    e_i^p = b_{i+1}^p
    \\
    \Psi_2(P')
    &=
    \bigwedge_{i \in \oneto{n-1}} \bigwedge_{p \in P} 
    (e_{2i}^p < c \ \rightarrow \ b_{2i}^p \leq e_{2i}^p)
    \\
    \Psi_3(P')
    &=
    \bigwedge_{i\in \oneto{n}}
    l_{2i-1}>0
    \\
    \Psi_4(P')
    &=
    \bigwedge_{i\in \oneto{n-1}} \bigwedge_{a \in \Sigma_{i}}
    l_{2i}^a>0
    \\
    \Psi_5^{M_0}(P')
    &=
    \bigwedge_{p\in P}
    b_1^p = M_0(p)
    \\
    \Psi_6^{M_f}(P')
    &=
    \bigwedge_{p\in P}
    (e_{2n-1}^p < c \ \rightarrow \ e_{2n-1}^p \geq M_f(p))
    \ .
\end{align*}
Formula $\Psi_1$ states that $\sigma_i$ ends in the marking $M_{i}'$ that $\rho_i$ started from, and similarly $\rho_i$ ends in $M_{i+1}$ that $\sigma_{i+1}$ started from.
Formula $\Psi_2$ states the required $\deq{c}$ relation.
To make sure we found letter $a_i$, we use $\Psi_3$.
With $\Psi_4$, we express that all letters from $\Sigma_i$ have been seen.
Conjunct $\Psi_5^I$ says that the places $b_1^p$ have been initialized to the value given by the initial marking $I$.
Formula $\Psi_6^F$ states the condition on covering the final marking. 
The correctness of the construction is the next lemma. 
Note that the transitions $\TC_i$ are always enabled. 
Therefore, we can start in $N'$ from the initial marking $M_{\emptyset}$ that assigns zero to every place. 

\begin{proposition}
    \label{CorrectnessPresburger}
    There are $\sigma'$ and $M'$ so that $\TEnable{M_{\emptyset}}{\sigma'}{M'}$ in $N'$ and $M' \models \Psi^{M_0}_{M_f}$
    if and only if $(N,M_0,M_f)$ admits a $p$-witness.
\end{proposition}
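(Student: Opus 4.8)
The approach is to show that the bookkeeping places and transitions of $N'$, together with the six conjuncts of $\Psi^{M_0}_{M_f}$, encode exactly the four defining conditions of a $p$-witness, so that the proof reduces to matching them up in both directions. The structural fact I would establish first is that the $2n-1$ replicas are mutually \emph{independent}: for distinct indices the place sets $B_i\cup E_i\cup L_i$ and the transition sets $\TC_i\cup\TE_i$ are pairwise disjoint, and no transition of one replica touches a place of another. Hence any firing sequence of $N'$ projects onto the transitions of replica $i$ to give a valid firing sequence of that replica in isolation, and conversely independent per-replica sequences interleave into a single run of $N'$ without changing the reached marking. The second fact concerns $\TC_i$: each $tc_i^p$ only produces one token on $e_i^p$ and one on $b_i^p$ and consumes nothing, so it is always enabled, and firing it $k$ times puts $k$ tokens on both places. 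Since no transition consumes from $B_i$, the value on $b_i^p$ in the reached marking equals the number of times $tc_i^p$ fired, which is precisely the number of tokens with which $e_i^p$ is seeded before the $\TE_i$-transitions start acting. Thus $B_i$ records the \emph{initial} marking from which replica $i$ is run on $E_i$, while $E_i$ holds its final marking.

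For the forward direction I would start from a $p$-witness with markings $M_1,M_1',\ldots,M_n,M_n'$ and sequences $\sigma_i,\rho_i$. In the odd replica $2i-1$ I fire each $tc_{2i-1}^p$ exactly $M_i(p)$ times to seed $E_{2i-1}$ and $B_{2i-1}$ with $M_i$, then replay $\sigma_i$ via the copies $\TE_{2i-1}$, ending with $M_i'$ on $E_{2i-1}$; symmetrically, in the even replica $2i$ I seed $E_{2i},B_{2i}$ with $M_i'$ and replay $\rho_i$, ending with $M_{i+1}$ on $E_{2i}$. Interleaving these independent runs yields $\TEnable{M_{\emptyset}}{\sigma'}{M'}$, and it remains to check $M'\models\Psi^{M_0}_{M_f}$ conjunct by conjunct. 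Here $\Psi_1$ follows from the chaining $e_{2i-1}=M_i'=b_{2i}$ and $e_{2i}=M_{i+1}=b_{2i+1}$ of the witness; $\Psi_2$ is Property~(3) written out via the definition of $\deq{c}$; $\Psi_3$ and $\Psi_4$ hold because Properties~(1) and~(2) force $\sigma_i$ to contain an $a_i$-labeled transition and $\rho_i$ to contain, for each $a\in\Sigma_i$, an $a$-labeled transition, whose copies deposit tokens on $l_{2i-1}$ \resp $l_{2i}^a$ by the definition of $F'$; and $\Psi_5^{M_0}$, $\Psi_6^{M_f}$ are Property~(4).

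For the backward direction, given $\TEnable{M_{\emptyset}}{\sigma'}{M'}$ with $M'\models\Psi^{M_0}_{M_f}$, I would project $\sigma'$ onto each replica and, since $\TC_i$-transitions are monotone (produce only), reorder each projection so that all $\TC_i$-firings precede the $\TE_i$-firings; this changes neither executability nor the reached marking. By the observation above, the value $v_i$ on $B_i$ in $M'$ is exactly the marking from which the $\TE_i$-part runs on $E_i$, and reading the $\TE_i$-transitions back in $N$ gives a valid BPP computation from $v_i$. I then set $M_i=v_{2i-1}$, let $\sigma_i$ be the translated $\TE_{2i-1}$-sequence and $\rho_i$ the translated $\TE_{2i}$-sequence, and read $M_i'=e_{2i-1}$, $M_{i+1}=e_{2i}$ off $M'$. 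The conjuncts now translate directly back: $\Psi_1$ ensures the computations are chained correctly ($e_{2i-1}=b_{2i}=v_{2i}$, so $\rho_i$ indeed starts in $M_i'$, and $e_{2i}=b_{2i+1}=v_{2i+1}$, so $\sigma_{i+1}$ starts in $M_{i+1}$); $\Psi_3,\Psi_4$ yield Properties~(1) and~(2) via the $L$-places; $\Psi_2$ yields Property~(3); and $\Psi_5^{M_0},\Psi_6^{M_f}$ yield Property~(4), giving a $p$-witness.

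I expect the only genuinely delicate step to be this projection-and-reordering argument in the backward direction: one must argue carefully that the global run of $N'$ decomposes into honestly independent per-replica runs, and that pushing the monotone $\TC_i$-transitions to the front preserves both enabledness and the final marking, so that $B_i$ may legitimately be interpreted as the initial marking of the $i$-th replica computation. The remaining work is the mechanical but lengthy matching of the six conjuncts of $\Psi^{M_0}_{M_f}$ against the four properties of a $p$-witness, which I have outlined above.
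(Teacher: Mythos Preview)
Your proposal is correct and follows essentially the same approach as the paper: both directions exploit the mutual independence of the $2n-1$ replicas, seed each replica via the always-enabled $\TC_i$-transitions, and then mechanically match the six conjuncts of $\Psi^{M_0}_{M_f}$ against the four properties of a $p$-witness. Your justification for the reordering step---that the $\TC_i$-transitions are monotone (produce only) and hence may be pushed to the front without affecting enabledness or the final marking---is in fact slightly more precise than the paper's phrasing, which asserts that $\TC_i$ and $\TE_i$ are ``not dependent'' even though $\TC_i$ feeds the places that $\TE_i$ consumes from.
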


\begin{proof}
    Assume a $p$-witness $M_1, \ldots, M_{2n}$.
    We construct a run $M_\emptyset \move{\sigma'} M'$ of $N'$ with \mbox{$M' \models \Psi^{M_0}_{M_f}$} as follows:
    \[
    \sigma' = \gamma_1 \alpha_1 \gamma_1' \beta_1 \gamma_2 \alpha_2 \gamma_2' \beta_2 \ldots \gamma_n \alpha_n \gamma_n' \beta_n
    \ ,
    \]
    where
    the $\alpha_i$ corresponds to $\sigma_i$ executed on the places $E_{2i-1}$ by using the transitions in $\text{TE}_{2i-1}$ (\ie if a transition $t \in T$ is used in $\sigma_i$, then the transition $\text{te}_{2i-1}^t \in \text{TE}_{2i-1}$ is used in $\alpha_i$).
    Similarly, the $\beta_i$ correspond to the $\rho_i$ executed on $E_{2n}$ using transitions in $\text{TE}_{2n}$.
    The $\gamma_i$ and $\gamma_i'$ populate the set of places $E_i$ accordingly:
    $\gamma_1$ produces $M_1(p)$ many tokens on each place $e_1^p$ of $E_1$ using the transition in $\text{TC}_1$.
    For $i > 1$, $\gamma_i$ produces $M_{2i-1}(p)$ many tokens on each place $e_{2i-1}^p$ of $E_{2i-1}$, and $\gamma_i'$ produces $M_{2i}(p)$ many tokens on each place $e_{2i}^p$ of $E_{2i}$.
    As a by-product, the places in each $B_i$ are also populated.
    It is easy to check that the marking $M'$ with $M_\emptyset \move{\sigma'} M'$ indeed satisfies $\Psi^{M_0}_{M_f}$.
    
    For the other direction, assume that a computation $M_\emptyset \move{\sigma'} M'$ with $M' \models \Psi^{M_0}_{M_f}$ is given.
    First, observe that the transitions in $\TC_i$ and $\TE_i$ are not dependent on each other and hence can be independently fired.
    Furthermore, for $i \neq j$, the transitions in $\TE_i$ and $\TE_j$ are independent of each other.
    Therefore, we may assume that in $\sigma'$, the copy transitions in $\TC = \bigcup_{i \in \oneto{2n-1}}\TC_i$ are fired first, then the transition in $\TE_1$ followed by transition in $\TE_2$ and so on.
    All together, we may assume that the computation is of the form
    \[
    \TEnable{M_\emptyset}{\sigma''.\sigma'_1.\cdots.\sigma'_{2n-1}}{M'}
    \ ,
    \]
    where $\sigma'' = \Prj{\sigma'}{\TC}$ and $\sigma'_i = \Prj{\sigma'}{\TE_i}$ for all $i \in \oneto{2n-1}$
    Note that for each $i \in \oneto{2n-1}$, the transition sequence $\sigma_i'$ in $N'$ induces a transition sequence $\alpha_i$ in the original net $N$ by using transition $t \in T$ instead of $\text{te}_i^t \in \text{TE}_i$.
    
    The initial phase $\sigma''$ populates each $E_i$ with some initial marking.
    This initial marking is also copied to the places $B_i$, and these places are not touched during the rest of the computation.
    We may obtain a marking $J_i$ of $N$ for each $i \in \oneto{2n-1}$ by $J_i (p) = M'(b_i^p)$.
    For each $i \in \oneto{2n-1}$, we obtain a marking $K_i$ of $N$ by considering the assignment of tokens to the places of $E_i$ by $M'$, \ie $K_i (p) = M' (e_i^p)$.    
    
    We claim that $M_0, K_1, K_2, \ldots K_{2n-1}$ is the required $p$-witness.
    To argue that they indeed satisfy the Properties (1) to (4), we use the fact that $J_i \move{\sigma_i'} K_i$ as well as $M' \models \Psi^{M_0}_{M_f}$.
    \begin{enumerate}[(a)]
        \item
        Since $M' \models \Psi_5^{M_0}$, we have $J_1 = M_0$.
        \item
        Since $M' \models \Psi_6^{M_f}$, we have $K_{2n-1} \deq{c} M_f$.
        \item
        Since $M' \models \Psi_1$, we have $J_{i+1} = K_i$.
        \item
        Since $M' \models \Psi_2$, we have $J_{2i} \deq{c} K_{2i}$ for all $i \in [1..n[$.
        \item
        Since $M' \models \Psi_3$, we have for all $i \in [1..n]$, $a_i \subword \lambda( \sigma'_{2i-1} )$.
        \item
        Since $M' \models \Psi_4$, we have for all $i \in [1..n[$, for all $a \in \Sigma_i$, $a \subword \lambda (\sigma_{2i} )$.     
    \end{enumerate}
    
    \noindent
    We conclude that Property (4) holds using (a) and (b).
    We conclude Property~(2), $a_i \subword \lambda( \sigma_i )$, using (e) and Property~(3), $\Prj{w_\Sigma}{\Sigma_i} \subword \lambda (\rho_i)$, using (f).
    Finally, (b) and (d) yields the required Property~(3).
\end{proof}

\section{SRE Inclusion in Upward Closure}
\label{Section:SREU}

Rather than computing the upward closure of a Petri net language we now check whether a given SRE under-approximates it. 
Formally, the problem is defined as follows.

\begin{problem}
	\problemtitle{SRE Inclusion in Upward Closure (\sreu)}
	\probleminput{A Petri net instance $(N,M_0,M_f)$, an SRE $sre$.}
	\problemquestion{$\lang{sre} \subseteq \uc{\lang{N,M_0,M_f}} $?}	
\end{problem}

\subsection{Petri Nets}

\begin{theorem}
\sreu\ is $\EXPSPACE$-complete for Petri nets. 
\end{theorem}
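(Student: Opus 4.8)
The plan is to prove membership in $\EXPSPACE$ and $\EXPSPACE$-hardness separately. Hardness is immediate by the same reduction used in Lemma~\ref{Lemma:SREIPNcomplete}: given an unlabeled net instance $(N,M_0,M_f)$, equip $N$ with $\lambda(t)=\varepsilon$ for every transition, so that $\lang{N,M_0,M_f}$ equals $\set{\varepsilon}$ if $M_f$ is coverable from $M_0$ and $\emptyset$ otherwise. Then $\set{\varepsilon}=\lang{\emptyset^*}\subseteq \uc{\lang{N,M_0,M_f}}$ iff $\varepsilon\in\uc{\lang{N,M_0,M_f}}$ iff $\varepsilon\in\lang{N,M_0,M_f}$ iff $M_f$ is coverable. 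Since coverability is $\EXPSPACE$-hard~\cite{Lipton}, so is \sreu.

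For membership, the first observation is that, exactly as in the downward-closure case, $\lang{\sre}\subseteq\uc{\lang{N,M_0,M_f}}$ holds iff $\lang{\product}\subseteq\uc{\lang{N,M_0,M_f}}$ for every product $\product$ occurring in $\sre$. The key simplification, specific to upward closures, is that $\uc{\lang{N,M_0,M_f}}$ is upward closed, so a product's language is contained in it iff the $\subword$-minimal words of $\lang{\product}$ are. A product $\product=x_1\cdots x_m$ with each $x_i\in\set{a,(a+\varepsilon),\Gamma^*}$ has a \emph{unique} minimal word $w_{\product}$, obtained by replacing every $(a+\varepsilon)$ and every $\Gamma^*$ by $\varepsilon$ and keeping the mandatory single letters: every word of $\lang{\product}$ is a superword of $w_{\product}$, and $w_{\product}\in\lang{\product}$, so $w_{\product}$ is in fact the minimum. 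Moreover $\card{w_{\product}}\leq\card{\product}$. Hence it remains to decide, for the polynomially many words $w_{\product}$, whether $w_{\product}\in\uc{\lang{N,M_0,M_f}}$.

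By definition $w_{\product}\in\uc{\lang{N,M_0,M_f}}$ iff there is a covering computation $M_0\move{\sigma}M\geq M_f$ with $\lambda(\sigma)\subword w_{\product}$. I would decide this subword-constrained coverability by taking a synchronized product of $N$ with a small deterministic automaton $A_{w_{\product}}$ recognizing the subwords of $w_{\product}$. Writing $w_{\product}=b_1\cdots b_\ell$, the automaton has states $\zeroto{\ell}$ encoding the position matched so far, and on input $a$ from position $j$ it advances to the least $j'>j$ with $b_{j'}=a$; all states are accepting. Encoding the current position by a single token on one of $\ell+1$ fresh places, I would build a product Petri net $\hat N$ in which an $\varepsilon$-labeled transition of $N$ fires without touching the position places, while an $a$-labeled transition must simultaneously move the position token from $j$ to such a $j'$. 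Then $\hat N$ has a computation covering $M_f$ on the places of $N$ (with no constraint on the position places) iff $N$ has a covering computation whose label is a subword of $w_{\product}$. The net $\hat N$ has size polynomial in $\card{(N,M_0,M_f)}+\card{w_{\product}}$, so coverability in $\hat N$ is decidable in $\EXPSPACE$ by Rackoff's bound~\cite{Rackoff78}. Running this test for each product yields the $\EXPSPACE$ procedure.

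The main obstacle is the product construction: one has to ensure that $A_{w_{\product}}$ faithfully enforces the subword relation, in particular that \emph{skipping} positions of $w_{\product}$ (which is what makes $\lambda(\sigma)$ a subword rather than a factor) is correctly modeled by the ``jump to the next matching position'' transitions, and that the $\varepsilon$-transitions of $N$ are completely decoupled from the position token. Everything else---closure under choice among products, uniqueness of the minimal word, and the polynomial size bound---is routine, and the final $\EXPSPACE$ upper bound then follows directly from the $\EXPSPACE$-completeness of coverability.
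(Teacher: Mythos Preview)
Your proposal is correct and follows essentially the same approach as the paper: reduce hardness to coverability via the all-$\varepsilon$ labeling, and for membership observe that each product has a unique $\subword$-minimal word whose membership in $\uc{\lang{N,M_0,M_f}}$ reduces in polynomial time to a coverability query. The paper states the reduction to coverability without details, whereas you spell out the product with the subword automaton $A_{w_{\product}}$; this is exactly the standard way to implement that reduction and introduces no new difficulties.
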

The $\EXPSPACE$\ lower bound is immediate by hardness of coverability for Petri nets and can be proven similar to Lemma~\ref{Lemma:SREIPNcomplete}.
The upper bound is due to the following fact: We only need to check whether the set of minimal words in the language of the given SRE is included in the upward closure of the Petri net language.
Note that the minimal word of a product can be computed as follows:
\begin{align*}
    \text{min} (a) &= a
    &
    \text{min} (p.p') &= \text{min} (p) . \text{min} (p')
    \\
    \text{min} (a + \varepsilon) &= \varepsilon
    &
    \text{min} (\Gamma^*) &= \varepsilon\ .
\end{align*}
For an SRE $sre = p_1 + \ldots + p_n$, we have that the set of minimal words is a subset of $\set{ \text{min } p_1, \ldots, \text{min } p_n }$.
We have that
\[
    \lang{sre} \subseteq \uc{\lang{N,M_0,M_f}}
    \quad
    \text{ iff }
    \quad 
    \text{min } p_i \in  \uc{\lang{N,M_0,M_f}} \text{ for all } i \in \oneto{n}
    \ .
\]
For each product, the membership check $\text{min } p_i \in  \uc{\lang{N,M_0,M_f}}$ can be reduced in polynomial time to coverability in Petri nets.
Since the number of minimal words in the SRE language is less than the size of the SRE, and coverability is well-known to be in $\EXPSPACE$\ \cite{Rackoff78}, we obtain our $\EXPSPACE$\ upper bound.

\subsection{BPP Nets}
\begin{theorem}
\label{sreuNPcomplete}
    \sreu\ is $\NPTIME$-complete for BPP nets.
\end{theorem}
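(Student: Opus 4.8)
The plan is to establish both directions separately, mirroring the development of \sred\ for BPP nets in Section~\ref{Subsection:SREDCBPP}. For hardness I would reuse the template of Lemma~\ref{Lemma:SREIPNcomplete}. Coverability for BPP nets is $\NPTIME$-hard (by the standard reduction from $\mathsf{SAT}$, analogous to the $\NPTIME$-hardness of BPP reachability~\cite{Esparza19972}). Given an unlabeled BPP instance $(N,M_0,M_f)$, I would label every transition with $\varepsilon$; then $\lang{N,M_0,M_f} = \set{\varepsilon}$ if $M_f$ is coverable and $\lang{N,M_0,M_f} = \emptyset$ otherwise. Since $\varepsilon \in \uc{\calL}$ holds iff $\varepsilon \in \calL$, and $\set{\varepsilon} = \lang{\emptyset^*}$, we obtain that $M_f$ is coverable iff $\lang{\emptyset^*} \subseteq \uc{\lang{N,M_0,M_f}}$, establishing $\NPTIME$-hardness of \sreu.

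For membership, I would first reduce to single products using the minimal-word characterization from Section~\ref{Section:SREU}: for $\sre = \product_1 + \cdots + \product_k$ we have $\lang{\sre}\subseteq\uc{\lang{N,M_0,M_f}}$ iff $\text{min}(\product_i) \in \uc{\lang{N,M_0,M_f}}$ for every $i$. Fix one product and write $w = \text{min}(\product) = w_1 \ldots w_m$ with $m \leq \card{\sre}$. By definition, $w \in \uc{\lang{N,M_0,M_f}}$ iff there is a covering computation $\TEnable{M_0}{\sigma}{M} \geq M_f$ whose label $\lambda(\sigma)$ is a subword of $w$. The naive idea of intersecting $N$ with the (polynomially-sized) automaton recognizing the subwords of $w$ fails, since synchronizing a visible transition with an automaton step would consume a second token and destroy the BPP property; handling this is the main obstacle.

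To circumvent it, I would decompose $\sigma$ into $m+1$ blocks of $\varepsilon$-transitions interleaved with $m$ optional visible steps, where the $j$-th visible step either fires a transition labeled $w_j$ or is skipped, so that the sequence of fired labels is exactly the chosen subword of $w$. Each block of $\varepsilon$-transitions is a reachability query in the sub-BPP-net consisting only of the $\varepsilon$-labeled transitions of $N$, which is Presburger-definable by Theorem~\ref{Theorem:Verma}. As in the construction behind Proposition~\ref{CorrectnessPresburger}, I would build a net $N'$ containing $m+1$ replicas of this $\varepsilon$-sub-net, each started from the empty marking and loaded with a guessed initial marking via copy transitions (recorded in dedicated places so that the threading can refer to them). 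I would then assemble a single existential Presburger formula $\psi$ that threads the markings: the marking reached at the end of block $j$ is linked to the marking starting block $j+1$ by a disjunction encoding ``fire some $t$ with $\lambda(t)=w_j$, updating tokens according to $F$ under the enabledness precondition'' or ``skip''; the first block starts in $M_0$; and the last marking covers $M_f$. Conjoining $\psi$ with the reachability formula $\Psi$ for $N'$ obtained from Theorem~\ref{Theorem:Verma} yields an existential Presburger formula that is satisfiable iff $w \in \uc{\lang{N,M_0,M_f}}$.

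Finally, I would take the conjunction of these polynomially many formulas over all products $\product_1,\ldots,\product_k$ (renaming their variables apart), obtaining one existential Presburger formula that, since the subformulas use disjoint variables, is satisfiable iff every $\text{min}(\product_i)$ lies in the upward closure, hence iff $\lang{\sre}\subseteq\uc{\lang{N,M_0,M_f}}$. By Theorem~\ref{Theorem:SatisfiabilityEPA}, satisfiability of this polynomially-sized formula is decidable in $\NPTIME$, which gives the upper bound and, together with the hardness argument, completes the proof.
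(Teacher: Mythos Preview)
Your proposal is correct and follows essentially the same approach as the paper: hardness via BPP coverability, and membership by encoding, for each product's minimal word $a_1\ldots a_n$, the existence of a covering computation whose label is a subword of $a_1\ldots a_n$ as an existential Presburger sentence over a replicated BPP net via Theorem~\ref{Theorem:Verma}. The only cosmetic difference is that the paper keeps the optional $a_i$-step inside the $i$-th replica (allowing $a_i$-labeled transitions there and bounding their count by a place $l_i\leq 1$ in the formula), whereas you use $m{+}1$ pure $\varepsilon$-replicas and push the optional visible step into the threading formula as a disjunction; both variants are equivalent and yield polynomial-size formulas.
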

As before, the hardness is by a reduction of the coverability problem for BPP nets. 
For the upper bound, the algorithm is similar to the one for checking the inclusion of an SRE in the downward closure of a BPP language.  
\begin{proof}
    To check $\lang{sre} \subseteq \uc{\lang{N,M_0,M_f}}$, it is sufficient to check $\lang{\product} \subseteq \uc{\lang{N,M_0,M_f}}$ for each product in $sre$.
    Consider one such product $\product$.
    The inclusion $\lang{\product} \subseteq \uc{\lang{N,M_0,M_f}}$ holds iff the minimal word of $\lang{\product}$, say $\min p = a_1 \ldots a_n$, belongs to $\uc{\lang{N,M_0,M_f}}$. 
    This in turn holds iff one of its subwords is in $\lang{N,M_0,M_f}$.
    We check this by deciding whether a reachable marking $M$ in a different net $N'$ satisfies a Presburger formula $\Psi$. 
    
    We describe the BPP net $N'$ and the Presburger formula $\Psi$ that together characterize the subwords of $\min(\product)$ included in the language of the BPP net. 
    Net $N'$ is constructed similar to the net $N'$ from Section~\ref{Subsection:SREDCBPP}.
    We have two copies of the places for each $i \in \oneto{n}$, the places $B_i$ hold a copy of the guessed marking and $E_i$ provides a copy $e_i^p$ of the BPP net places $p$. 
    Additionally for each $i$ we have a place $L_i = \{l_i\}$.
    The transitions $\TC_i$ populate the copy $E_i$ of the BPP net and store the same marking in $B_i$.
    The transitions $\TE_i$ contain a copy $\text{te}_i^t$ of each BPP net transition $t$. 
    To check for a subword of $a_1\ldots a_n$, in each stage $i$ we only enable transitions $t$ that are either labeled by $\varepsilon$ or $a_i$, \ie for all $p \in P$, $t \in T$, if $\lambda(t) = \varepsilon$ or $\lambda(t) = a_i$ we have $F'( \text{te}_i^t, e_i^p)=F(t,p)$ and $F'( e_i^p,\text{te}_i^t)=F(p,t)$. 
    We also count the number of times a transition labeled $a_i$ is executed using place $l_i$, \ie for all $t \in T$ such that $\lambda(t) = a_i$, we let $F'(\text{te}_i^t,l_i) = 1$. 
    
    Now the required Presburger formula $\Psi$ --- apart from checking that
    (1)~the net starts with the initial marking,
    (2)~covers the final marking,
    (3)~the guessed marking in each stage is the same as the marking reached in the previous stage --- 
    also checks whether in each stage at most one non-epsilon transition is used, $\bigwedge_{i \in 1..n} l_i \leq 1$. 
    This guarantees we have seen a subword of $a_1\ldots a_n$. 
    The initial marking $M_{\emptyset}$ is one that assigns zero to all places.
    It is easy to see that $\lang{p} \subseteq \uc{\lang{N,M_0,M_f}}$ iff there is a computation $M_{\emptyset} \move{\sigma} M$ in $N'$ such that $M \models \Psi$.
\end{proof}


\section{Being Upward/Downward Closed}
\label{Section:BeingUCDC}

We now study the problem of deciding whether a Petri net language actually is upward or downward closed, \ie whether the closure that we can compute is actually a precise representation of the system's behavior. 
Formally, the problems \buc\ and \bdc\ are defined as follows.n 
\begin{problem}
    \problemtitle{Being upward closed (\buc)}
    \probleminput{A Petri net instance $(N,M_0,M_f)$.}
    \problemquestion{$\lang{N,M_0,M_f} = \uc{\lang{N,M_0,M_f}}$? }
\end{problem}
\begin{problem}
    \problemtitle{Being downward closed (\bdc)}
    \probleminput{A Petri net instance $(N,M_0,M_f)$.}
    \problemquestion{$\lang{N,M_0,M_f} = \dc{\lang{N,M_0,M_f}}$? }
\end{problem}

\begin{theorem}
\label{Theorem:BUCBDCDecidable}
    \buc\ and \bdc\ are decidable for Petri nets.
\end{theorem}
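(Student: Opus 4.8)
The plan is to reduce both \buc\ and \bdc\ to a single one-sided inclusion test. Since $\lang{N,M_0,M_f} \subseteq \uc{\lang{N,M_0,M_f}}$ and $\lang{N,M_0,M_f} \subseteq \dc{\lang{N,M_0,M_f}}$ hold trivially, a Petri net language is upward (resp.\ downward) closed if and only if the \emph{reverse} inclusion $\uc{\lang{N,M_0,M_f}} \subseteq \lang{N,M_0,M_f}$ (resp.\ $\dc{\lang{N,M_0,M_f}} \subseteq \lang{N,M_0,M_f}$) holds. The left-hand sides are effectively regular: for the upward closure I can compute an FSA by Theorem~\ref{Theorem:UCCompPN}, and for the downward closure effective computability is known~\cite{HMW10} (the non-primitive recursive size is irrelevant for decidability). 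Hence it suffices to establish the general statement that \emph{inclusion of a regular language $R$ in a Petri net coverability language is decidable}, and then to instantiate it with $R = \uc{\lang{N,M_0,M_f}}$ resp.\ $R = \dc{\lang{N,M_0,M_f}}$.

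For the general statement, fix an FSA $A$ with $\lang{A} = R$. The difficulty is that coverability is an \emph{acceptance} condition, whereas the trace-inclusion procedure of Esparza et al.~\cite{Esparza99} compares the acceptance-free, prefix-closed trace sets of a finite automaton and of a labelled Petri net. I would bridge this gap with an endmarker gadget. Choose a fresh symbol $\# \notin \Sigma$ and build $N'$ from $N$ by adding one transition $t_{\mathit{acc}}$ with $\lambda(t_{\mathit{acc}}) = \#$ that consumes $M_f(p)$ tokens from every place $p \in P$ and produces none. As $t_{\mathit{acc}}$ is enabled exactly at markings covering $M_f$, and $\#$ can be produced only by $t_{\mathit{acc}}$, one gets for every $w \in \Sigma^*$ the equivalence $w \in \lang{N,M_0,M_f}$ iff $w\# \in \Traces{N'}$. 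Correspondingly I turn $A$ into $A'$ that emits a single $\#$ upon entering a former accepting state, so that $\lang{A'} = \Set{w\#}{w \in R}$.

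With these constructions, $R \subseteq \lang{N,M_0,M_f}$ holds if and only if the trace inclusion $\Traces{A'} \subseteq \Traces{N'}$ holds, which is exactly the problem shown decidable in~\cite{Esparza99}. The one subtlety to verify is that passing from languages to prefix-closed trace sets is harmless: $\Traces{A'}$ is $\mathrm{Pref}(R) \cup \lang{A'}$, so the inclusion additionally demands $\mathrm{Pref}(R) \subseteq \Traces{N'}$; but this comes for free, since whenever $R \subseteq \lang{N,M_0,M_f} \subseteq \Traces{N} \subseteq \Traces{N'}$ and trace sets are prefix-closed, every prefix of a word of $R$ is already a trace of $N'$. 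Conversely $\Traces{A'} \subseteq \Traces{N'}$ forces $w\# \in \Traces{N'}$, hence $w \in \lang{N,M_0,M_f}$, for each $w \in R$. Thus the equivalence is exact.

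I expect the crux of the argument to lie entirely in this reduction rather than in any computation. One must ensure the endmarker faithfully encodes coverability (no spurious $\#$-firing, and $t_{\mathit{acc}}$ enabled precisely on covering markings), and one must match the orientation of the inclusion — the finite automaton on the \emph{smaller} side, $\Traces{A'} \subseteq \Traces{N'}$ — to the direction for which~\cite{Esparza99} guarantees decidability; it is this nontrivial decidability of trace inclusion that is imported as an external result and carries the whole proof. The remaining ingredients (the trivial inclusions $L \subseteq \uc L, \dc L$ and the effective regularity of the closures) are routine.
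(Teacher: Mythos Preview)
Your proposal is correct and follows essentially the same route as the paper: reduce \buc/\bdc\ to the reverse inclusion, compute an FSA for the closure (Theorem~\ref{Theorem:UCCompPN} resp.\ \cite{HMW10}), and decide regular-in-coverability inclusion by an endmarker reduction to the trace-inclusion result of~\cite{Esparza99}. The paper's $N.a$, $t_{\mathit{final}}$, and fresh letter $a$ are exactly your $N'$, $t_{\mathit{acc}}$, and $\#$. One small point: your claim that $\Traces{A'} = \mathrm{Pref}(R) \cup \lang{A'}$ tacitly assumes $A'$ has no useless states; the paper makes this explicit by requiring the automaton $A.a$ to be reduced (every state reaches the final state), which you should state as well.
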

Note that \mbox{$\lang{N,M_0,M_f} \subseteq \uc{\lang{N,M_0,M_f}}$} and \mbox{$\lang{N,M_0,M_f} \subseteq \dc{\lang{N,M_0,M_f}}$} trivially hold.
In both cases, it remains to decide the converse inclusion.
Now note that $\uc{\lang{N,M_0,M_f}}$ (\resp $\dc{\lang{N,M_0,M_f}}$) is a regular language for which we can construct a generating FSA by Theorem~\ref{Theorem:UCCompPN} (\resp using~\cite{HMW10}).

To prove Theorem~\ref{Theorem:BUCBDCDecidable} it is thus sufficient to show how to decide $\lang{A} \subseteq \lang{N,M_0,M_f}$ for any given FSA $A$.
This regular inclusion should be a problem of independent interest.

\begin{problem}
    \problemtitle{Containing a regular language}
    \probleminput{A Petri net instance $(N,M_0,M_f)$, FSA $A$.}
    \problemquestion{$\lang{A} \subseteq \lang{N,M_0,M_f}$? }
\end{problem}

\begin{theorem}
\label{Theorem:RegInclDecidable}
    $\lang{A} \subseteq \lang{N,M_0,M_f}$ is decidable.
\end{theorem}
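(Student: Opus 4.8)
The plan is to reduce the regular inclusion $\lang{A} \subseteq \lang{N,M_0,M_f}$ to a \emph{trace inclusion} problem between a finite automaton and a Petri net, and then invoke the decision procedure of Esparza et al.~\cite{Esparza99}. The point that needs care is that both sides carry an acceptance condition — regular acceptance for $A$, coverability of $M_f$ for $N$ — whereas the trace set $\Traces{N'}$ of a net is the prefix-closed set of label sequences of arbitrary runs from the initial marking. I would first encode coverability by an end-marker gadget. Fix a fresh symbol $\# \notin \Sigma$ and build $N'$ from $N$ by adding a single transition $t_\#$ with $\lambda(t_\#) = \#$, $F(p,t_\#) = M_f(p)$ for every place $p$, and $F(t_\#,p') = 0$ for all $p'$, so that $t_\#$ merely consumes $M_f$. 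Since $t_\#$ is enabled precisely when the current marking covers $M_f$, and $\#$ is produced by no other transition, a label sequence $w\#$ lies in $\Traces{N'}$ (the traces from $M_0$) if and only if $N$ admits a run from $M_0$ that emits $w$ and reaches a marking $\geq M_f$, i.e.\ iff $w \in \lang{N,M_0,M_f}$.

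On the specification side I would turn $A$ into an automaton $A'$ recognizing $\lang{A}\cdot\set{\#}$: introduce a fresh state $q_\#$ as the unique final state of $A'$ and an edge $q \tow{\#} q_\#$ from every former final state $q$ of $A$. Then $\lang{A'} = \Set{w\#}{w \in \lang{A}}$. Combining the two gadgets yields the equivalence
\[
    \lang{A} \subseteq \lang{N,M_0,M_f}
    \quad\Longleftrightarrow\quad
    \lang{A'} \subseteq \Traces{N'}
    \ .
\]
The forward direction is immediate, since $w \in \lang{N,M_0,M_f}$ gives $w\# \in \Traces{N'}$; the backward direction holds because the only way to emit $\#$ in $N'$ is via $t_\#$, whose firability certifies that $M_f$ was covered after the prefix $w$ was produced by original transitions. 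The right-hand side is exactly an instance of trace inclusion of a finite automaton in the trace language of a Petri net, which Esparza et al.~\cite{Esparza99} proved decidable. This establishes Theorem~\ref{Theorem:RegInclDecidable}, and, together with Theorem~\ref{Theorem:UCCompPN} and~\cite{HMW10}, yields the decidability of \buc\ and \bdc\ in Theorem~\ref{Theorem:BUCBDCDecidable}.

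The main obstacle is not the reduction, which is routine, but ensuring that its two ends are compatible with the trace-inclusion framework, and that we genuinely stay inside the decidable fragment. Two points deserve attention. First, $N$ (hence $N'$) may contain $\varepsilon$-labeled transitions, so the relevant traces are \emph{weak} traces in which silent moves are erased; the reduction is correct only if the cited procedure handles nets with silent transitions, and one cannot in general eliminate such transitions without altering the language. Second, one must observe that the decidable direction is precisely $\lang{A'} \subseteq \Traces{N'}$, with the \emph{finite} object on the left and the prefix-closed, monotone trace set of the net on the right: inclusion of a regular language into a Petri net \emph{reachability} language is undecidable, so it is essential that coverability — encoded here by the upward-closed enabling condition of $t_\#$ — keeps the problem within the fragment treated by~\cite{Esparza99}.
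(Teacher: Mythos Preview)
Your reduction is essentially the paper's own: a fresh end-marker, a new transition consuming $M_f$ labeled by that marker, the automaton's language concatenated with the marker, and an appeal to Esparza et al.\ for trace inclusion. The one point where the paper is more careful is the final step. Theorem~\ref{Theorem:EsparzaTraces} as stated decides $\Traces{A'} \subseteq \Traces{N',M_0}$, not $\lang{A'} \subseteq \Traces{N'}$, and these are not the same unless $A'$ is \emph{reduced} (every state can reach the final state): otherwise $\Traces{A'}$ may contain words that are not prefixes of anything in $\lang{A'}$, and the trace inclusion can fail even when your language inclusion holds. The paper therefore explicitly trims $A.a$ before stating Lemma~\ref{ReduceToPrefixInclusion} as a \emph{trace} inclusion and uses reducedness in the forward direction of its proof. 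Your construction of $A'$ inherits any dead states of $A$, so you should add the trimming step (or the one-line observation that, after trimming, $\lang{A'} \subseteq \Traces{N'}$ and $\Traces{A'} \subseteq \Traces{N'}$ coincide because the target is prefix-closed) before invoking~\cite{Esparza99}.
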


To prove this theorem, we rely on a result of Esparza et.\ al~\cite{Esparza99} that involves the \emph{traces} of an FSA (\resp Petri net), labelings of computations that start from the initial state (\resp initial marking), regardless of whether they end in a final state (\resp covering marking).
For a finite automaton $A$, we define 
\[
    \Traces{A} = \big\{ w \in \Sigma^* \mid q_{\mathit{init}} \tow{w} q \text{ for some } q \in Q \big \}.
\]
Similarly, for a Petri net, we define
\[
    \Traces{N,M_0} = \Set{w \in \Sigma^*}{
                \exists \, \sigma \in T^* \colon \lambda(\sigma) = w, M_0 \move{\sigma} M\ 
                \text{ for some marking } M
        }
    \ .
\]
Note that both languages are necessarily \emph{prefix closed}, \eg if $w \in \lang{A}$ for some FSA $A$, then for any prefix $v$ of $v$, we have $v \in \lang{A}$.

\begin{theorem}[\cite{Esparza99}]
\label{Theorem:EsparzaTraces}
    The inclusion $\Traces{A} \subseteq \Traces{N,M_0}$ is decidable.
\end{theorem}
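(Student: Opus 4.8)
The plan is to decide the complement: I look for a witness word $w\in\Traces{A}$ with $w\notin\Traces{N,M_0}$. Since both trace sets are prefix closed, a shortest such $w$ factors as $w=va$ with $v\in\Traces{N,M_0}$, $va\in\Traces{A}$, and $va\notin\Traces{N,M_0}$. Thus the inclusion fails iff there is a word $v$ read by $A$ along which $N$ can still fire but after which $A$ offers a letter $a$ that $N$ cannot produce, not even after firing $\varepsilon$-labelled transitions. The whole difficulty is that ``$N$ cannot produce $a$ after $v$'' quantifies universally over \emph{all} firing sequences labelled $v$, so I cannot simply take the synchronous product of $A$ and $N$ and ask an ordinary reachability question: the product lumps together, by control state, markings coming from different words and so loses the correlation ``same label $v$''.

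To cope with the universal quantifier I would determinise the net at the level of reachable markings. For a word $v$ let $R(v)=\Set{M}{\TEnable{M_0}{\sigma}{M},\ \lambda(\sigma)=v}$ be the set of markings reachable along $v$, and let $U_a$ be the upward-closed set of markings from which some $\varepsilon$-labelled sequence followed by an $a$-transition is firable. Then $va\in\Traces{N,M_0}$ iff $R(v)\cap U_a\neq\emptyset$. Because firing is monotone, $U_a$ is upward closed with a computable finite basis (by backward coverability), and, crucially, $R(v)\cap U_a=\emptyset$ iff $(\mathord{\downarrow}R(v))\cap U_a=\emptyset$. Hence it suffices to track the downward closure $D_v=\mathord{\downarrow}R(v)$ rather than $R(v)$ itself. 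Monotonicity also makes this abstraction \emph{exact} under a one-letter step, since $\mathord{\downarrow}\,\mathrm{post}_a(\mathord{\downarrow}R(v))=\mathord{\downarrow}\,\mathrm{post}_a(R(v))=D_{va}$; so the $D_v$ are the states of a deterministic system reading letters of $\Sigma$, each $D_v$ being finitely represented by a finite set of $\omega$-markings (its ideal decomposition) on which emptiness, the one-letter post, and the test $D_v\cap U_a=\emptyset$ are all computable.

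I would then search the product of the finite automaton $A$ with this determinised net for a reachable configuration $(q,D_v)$ such that $D_v\neq\emptyset$, $q$ has an outgoing $a$-edge, and $D_v\cap U_a=\emptyset$; existence of such a configuration is exactly a violation of the inclusion, and its absence certifies $\Traces{A}\subseteq\Traces{N,M_0}$. The abstract state space $Q\times\{\text{downward-closed subsets of }\N^{P}\}$ is infinite, but the downward-closed subsets of $\N^{P}$ ordered by inclusion form a well-quasi-order (indeed a better-quasi-order, since $\N^{P}$ is one), and the determinised net is monotone for this order, so the whole structure is a well-structured transition system.

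The main obstacle is termination. There are infinitely many distinct $D_v$ (e.g.\ a net producing $n$ tokens after reading $a^{n}$ yields infinitely many downward closures), and, unlike ordinary coverability, the set of violating configurations is \emph{downward} closed in the inclusion order, so a larger reachable $D_v$ does not subsume a smaller one and the standard upward-closed backward algorithm does not directly apply. I expect this to be the technical heart of the proof, to be resolved either by an exact ideal-based forward exploration in the framework of ideally effective WSTS of Finkel and Goubault-Larrecq (accelerating along increasing chains while keeping the intersection test $D_v\cap U_a=\emptyset$ exact), or by reducing the search for a minimal bad prefix to a finite family of Petri net reachability queries and invoking decidability of reachability. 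Everything else, namely the reformulation via bad prefixes, the exactness of the downward-closure abstraction, and the computability of the one-letter post and of $U_a$, is routine once this termination argument is in place.
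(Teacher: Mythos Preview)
Your setup matches the paper's sketch of the cited Esparza et al.\ algorithm: determinise the net by tracking, for each word $v$ read so far, the set $R(v)$ of markings reachable along a $v$-labelled firing sequence, and explore the product with $A$ looking for a node at which $A$ offers a letter the net cannot match. Your passage to the downward closure $D_v=\mathord{\downarrow}R(v)$ is sound and is morally what the sketch means by tracking ``sets of incomparable markings reachable with the same trace''.

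The gap is precisely the termination argument you leave open, and its resolution is much simpler than the ideal accelerations or reachability reductions you propose; it is a shortest-counterexample argument plus a well-quasi-ordering. Suppose along a branch of the search tree we meet $(q,D')$ and later $(q,D)$ with the same automaton state and $D'\subseteq D$. If from $(q,D)$ some suffix $u$ witnesses a failure, i.e.\ $q$ reads $u$ in $A$ but no marking in $D$ fires a $u$-labelled sequence, then by monotonicity no marking in $D'\subseteq D$ does either, so $(q,D')$ already fails via the same $u$ with a strictly shorter overall witness. Hence a \emph{shortest} counterexample, if one exists, lies on a branch with no such domination; since downward-closed subsets of $\N^P$ under inclusion form a well-quasi-order (their ideal decompositions are finite antichains over $(\N\cup\{\omega\})^P$), every such branch is finite. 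With finite branching the tree is finite, and the algorithm reports failure iff some leaf has $D=\emptyset$ while $A$ can still move. This is exactly what the paper means by ``we find a set of markings that covers a predecessor and the inclusion holds'' and ``guaranteed to happen due to the well-quasi ordering of sets of markings''.

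Your remark that ``a larger reachable $D_v$ does not subsume a smaller one'' misidentifies what is being subsumed: the question is not whether $(q,D)$ is itself a violating configuration but whether it can \emph{lead to} one, and for that property a larger $D$ is redundant once a smaller $D'$ with the same automaton state has already been seen on the branch. No ideal acceleration and no reachability oracle are needed.
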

The algorithm constructs a computation tree of $A$ and $N$. 
This tree determinizes $N$ in that it tracks sets of incomparable markings reachable with the same trace. 
The construction terminates if either the set of markings becomes empty and the inclusion fails or (the automaton deadlocks or) we find a set of markings that covers a predecessor and the inclusion holds. 
The latter is guaranteed to happen due to the well-quasi ordering (wqo) of sets of markings. 
This dependence on wqos does not allow us to derive a complexity result. 

We now show how to reduce checking the inclusion $\lang{A} \subseteq \lang{N,M_0,M_f}$ to deciding an inclusion among trace languages.
Theorem~\ref{Theorem:EsparzaTraces} can be used to decide this inclusion. 
Let $(N,M_0,M_f)$ be the Petri net instance of interest, and let $A$ be the given FSA.
As language $\lang{N,M_0,M_f}$ is not prefix-closed in general, we consider the zero marking $M_\emptyset$ as the new final marking.
This yields a prefix-closed language with $\Traces{N,M_0} = L(N,M_0,M_\emptyset)$, since now all valid firing sequences give a word in the language, and prefixes of valid firing sequences are again valid firing sequences.
We still need to take the original final marking $M_f$ into account.
To do so, we modify the net by adding a new transition that can only be fired after $M_f$ has been covered.
Let $a \not \in \Sigma$ be a fresh letter.
Let $N.a$ be the Petri net that is obtained from $N$ and the given final marking $M_f$ by adding a new transition $t_{\mathit{final}}$ that consumes $M_f(p)$ many tokens from every place $p$ of $N$ and that is labeled by $a$. 
For the automaton, we use a similar trick.
Let $A.a$ be an automaton for $\lang{A}.a$ that is reduced in the sense that the unique final state is reachable from every state.

\begin{lemma}
\label{ReduceToPrefixInclusion}
    $
        \lang{A} \ \subseteq \lang{N,M_0,M_f} 
    $ 
    holds iff
    $
        \Traces{A.a} \subseteq \Traces{N.a, M_0}
    $
    holds.
\end{lemma}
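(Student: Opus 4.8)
The plan is to establish two structural characterizations---one for each trace language---and then prove the biconditional by a case analysis on whether the fresh letter $a$ occurs.

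First I would pin down the trace language of $N.a$. Since $a \notin \Sigma$ and $t_{\mathit{final}}$ is the only transition labeled $a$, a trace containing $a$ can only arise by firing $t_{\mathit{final}}$; and because $t_{\mathit{final}}$ consumes $M_f(p)$ tokens from every place $p$, it is enabled exactly at markings $M \geq M_f$. Hence $a$ occurs at most once in any trace, always as the last letter, and
\[
    w.a \in \Traces{N.a,M_0}
    \quad\text{iff}\quad
    M_0 \move{\sigma} M \geq M_f \text{ with } \lambda(\sigma) = w \text{ for some } \sigma,
\]
that is, iff $w \in \lang{N,M_0,M_f}$. For traces not containing $a$, deleting $t_{\mathit{final}}$ shows they coincide with $\Traces{N,M_0}$, which is prefix-closed and contained in $\Traces{N.a,M_0}$.

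Second I would describe $\Traces{A.a}$. Because $A.a$ is reduced, so that its unique final state is reachable from every state, every path from the initial state extends to an accepting one; hence $\Traces{A.a}$ equals the set of prefixes of $\lang{A.a} = \lang{A}.a$. As $a$ is fresh, these prefixes split into the prefixes of words of $\lang{A}$ (those without $a$) together with the words $w.a$ for $w \in \lang{A}$ (those containing $a$). This co-accessibility is precisely why $A.a$ is reduced: otherwise $\Traces{A.a}$ could contain \emph{dead} path labels that are prefixes of no accepted word, which would break the forward inclusion.

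With these two facts, the biconditional follows. For the forward direction, assume $\lang{A} \subseteq \lang{N,M_0,M_f}$ and take any $u \in \Traces{A.a}$. If $u$ contains no $a$, then $u$ is a prefix of some $w \in \lang{A} \subseteq \lang{N,M_0,M_f}$; splitting a covering firing sequence for $w$ at the point where $u$ has been emitted yields $u \in \Traces{N,M_0} \subseteq \Traces{N.a,M_0}$. If $u = w.a$ with $w \in \lang{A}$, then $w \in \lang{N,M_0,M_f}$ gives a sequence reaching some $M \geq M_f$, after which $t_{\mathit{final}}$ is enabled, so $u \in \Traces{N.a,M_0}$. For the converse, assume the trace inclusion and take $w \in \lang{A}$; then $w.a \in \lang{A.a} \subseteq \Traces{A.a} \subseteq \Traces{N.a,M_0}$, and the first characterization forces $w \in \lang{N,M_0,M_f}$. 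The only delicate bookkeeping is the prefix-to-firing-sequence splitting in the forward direction, which must skip $\varepsilon$-labeled transitions so that the chosen prefix of $\sigma$ emits exactly $u$; everything else is a direct consequence of the two characterizations.
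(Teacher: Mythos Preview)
Your argument is correct, with one inessential slip: the claim that ``$a$ occurs at most once in any trace [of $N.a$], always as the last letter'' is false in general. Nothing prevents further transitions---including $t_{\mathit{final}}$ itself---from firing after $t_{\mathit{final}}$ (take $M_f = M_\emptyset$, or any initial marking with $M_0 \geq 2M_f$). Fortunately you never actually use this claim: the equivalence $w.a \in \Traces{N.a,M_0} \Leftrightarrow w \in \lang{N,M_0,M_f}$ for $w \in \Sigma^*$ holds directly, since a firing sequence labeled $w.a$ with $w$ free of $a$ must end in $t_{\mathit{final}}$ and use only transitions of $N$ before that. That equivalence is all your proof invokes.

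The paper's proof differs only in the forward direction. Instead of your case split on whether $u$ contains $a$, it uniformly extends any $u \in \Traces{A.a}$ to a full word $w.a \in \lang{A.a}$ (using that $A.a$ is reduced), shows $w.a \in \Traces{N.a,M_0}$ via a covering computation for $w$ followed by $t_{\mathit{final}}$, and then invokes prefix-closure of trace languages to get $u \in \Traces{N.a,M_0}$. This sidesteps the firing-sequence splitting you flag as ``delicate bookkeeping'' with $\varepsilon$-labeled transitions. Your route gives a more explicit witness in each case; the paper's route is more uniform and needs no manipulation of $\sigma$ beyond appending $t_{\mathit{final}}$.
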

\begin{proof}
    Assume the first inclusion holds and consider a word $v$ from $\Traces{A.a}$.
    We have to show membership of $v$ in $\Traces{N.a, M_0}$.
    As the unique final state of $A.a$ is reachable from every state, $v$ is a prefix of some word in the language $\lang{A}.a$, say $w.a$, where $w$ stems from $\lang{A}$. 
    The assumed first inclusion now yields $w\in \lang{N,M_0,M_f}$.
    Thus, there is a $w$-labeled computation $M_0 \fire{\sigma} M$ of $N$ with $M \geq M_f$.
    We obtain that $M_0 \fire{\sigma.t_{\mathit{final}}} M'$ is a valid computation of $N.a$, thus, $w.a = \lambda(w.t_{\mathit{final}}) \in \Traces{N.a, M_0}$.
    Since trace languages are prefix closed and $v$ is a prefix of $w.a$, we obtain $v \in \Traces{N.a, M_0}$ as desired.
    
    Assume the second inclusion holds and consider a word $w$ from $\lang{A}$. 
    The task is to prove membership of $w$ in $\lang{N,M_0,M_f}$.
    To do so, note that $w.a \in \lang{A}.a \subseteq \Traces{A.a}$.
    By the assumption, we have $w.a \in \Traces{N.a, M_0}$.
    Thus, there is a valid computation $M_0 \fire{\sigma.t_{\mathit{final}}} M'$ of $N.a$ with $\lambda(\sigma.t_{\mathit{final}}) = w.a$.
    (Here, we have used that $t_{\mathit{final}}$ is the only $a$-labeled transition).
    Since $w \in \lang{A} \subseteq \Sigma^*$, and $a \not\in \Sigma$, we have that $\sigma$ does not contain an occurrence of $t_{\mathit{final}}$, so $M_0 \fire{\sigma} M$ is a valid computation of $N$.
    As $t_{\mathit{final}}$ could be fired in $M$, we have $M \geq M_f$ and $\sigma$ is indeed a covering computation in $N$.
    We conclude $\lambda(\sigma) = w \in \lang{N,M_0,M_f}$ as desired.
\end{proof}
Combining Lemma~\ref{ReduceToPrefixInclusion} and Theorem~\ref{Theorem:EsparzaTraces} yields the proof of Theorem~\ref{Theorem:RegInclDecidable}, which in turn proves Theorem~\ref{Theorem:BUCBDCDecidable}:
Given an FSA $A$ an a Petri net instance $(N,M_0,M_f)$ for which we should decide $\lang{A} \ \subseteq \lang{N,M_0,M_f}$, we construct $N.a$ and $A.a$, and apply Theorem~\ref{Theorem:EsparzaTraces} to decide $\Traces{A.a} \subseteq \Traces{N.a, M_0}$, which is equivalent to deciding $\lang{A} \ \subseteq \lang{N,M_0,M_f}$ by Lemma~\ref{ReduceToPrefixInclusion}.


\section{Conclusion}
We considered the class of Petri net languages with coverability as the acceptance condition and studied the problem of computing representations for the upward and downward closure. 
For the upward closure of a Petri net language, we showed how to effectively obtain an optimal finite state representation of size at-most doubly exponential in the size of the input. 
In the case of downward closures, we showed an instance for which the minimum size of any finite state representation is at-least non-primitive recursive. 

To tame the complexity, we considered two variants of the closure computation problem.
The first restricts the input to BPP nets, which can be understood as compositions of unboundedly many finite automata.
For BPPs, we showed how to effectively obtain an optimal finite state representation of size at-most exponential in the size of input, for both the upward and the downward closure of the language. 

The second variant takes as input a simple regular expression~(SRE), which is meant to under-approximate the upward or downward closure of a given language. 
For Petri net languages, we found an optimal algorithm that uses at-most exponential space to check whether a given SRE is included in the upward/downward closure.  
In the case of BPP nets, we showed that this problem is $\NPTIME$-complete. 

Finally, we showed that, given a Petri net, deciding whether its language actually is upward or downward closed is decidable. 
If the check is successful, the finite state descriptions we compute are precise representations of the system behavior.

An interesting problem for future work is the complexity of checking separability by piecewise-testable languages (PTL) and the size of separators.
A PTL is a Boolean combination of upward closures of single words. 
PTL-separability takes as input two languages $\calL_1$ and $\calL_2$ and asks whether there is a PTL $\calS$, called the separator, that includes $\calL_1$ and has an empty intersection with $\calL_1$. 
Taking a verification perspective, the separator is an over-approximation of the system behavior $\calL_1$ that is safe \wrt the bad behaviors in $\calL_1$. 
For deterministic finite state automata, PTL-separability was shown to be decidable in polynomial time by Almeida and Zeitoun~\cite{AlmeidaZ97}, a result that was generalized to non-deterministic automata in \cite{CzerwinskiMM13}.
Recently~\cite{CzerwinskiMartensRooijenZeitounZetzsche2015a}, Czerwi\'nski, Martens, van Rooijen, Zeitoun, and Zetzsche have show that, for full trios, computing downward closures and deciding PTL-separability are recursively equivalent. 
A full trio is a class of languages that is closed under homomorphisms, inverse homomorphisms, and regular intersection. 
Petri net languages with coverability or reachability as the acceptance condition satisfy these requirements. 
Hence, we know that PTL-separability is decidable for them~\cite{HMW10}.
The aforementioned problems, however, remain open.

\newpage
\bibliographystyle{plainurl}
\bibliography{literatur}
 
\end{document}